%% file: SpeedThroughWordRepresentations.tex
\documentclass[runningheads,dvipsnames,anonymous]{llncs}

\newcommand{\longversion}[1]{#1}
\newcommand{\shortversion}[1]{}
\usepackage{pdflscape}

\usepackage{lineno}
\nolinenumbers

\input{shortcuts}

\newcommand{\myref}[1]{\autoref{#1}}
\newif\ifinappendix

\newcommand{\applabel}[1]{
\shortversion{\ifinappendix\hypertarget{app:#1}{}\else\hyperlink{app:#1}{$(*)$}\fi}
\label{#1}}

\newcommand{\lborder}{\textsf{L}}
\newcommand{\rborder}{\textsf{R}}
\newcommand{\intervals}{\mathcal{I}}
\newcommand{\position}{\text{\textsf{ind}}}

\usepackage{comment}
\usepackage[T1]{fontenc}
\usepackage{graphicx}
\usepackage{xcolor}
\usepackage{afterpage}
\definecolor{LB}{RGB}{188,218,234}
\definecolor{DB}{RGB}{33,121,180}
\definecolor{LR}{RGB}{252,179,179}
\definecolor{DR}{RGB}{233,76,78}
\definecolor{DG}{RGB}{51,159,43}
\definecolor{DO}{RGB}{255,126,0}
\usepackage{enumitem}
\usetikzlibrary{arrows.meta,patterns,ipe,backgrounds}
\newtheoremrep{thm}[theorem]{Theorem}
\newtheoremrep{lem}[lemma]{Lemma}
\newtheoremrep{obs}[observation]{Observation}
\newtheoremrep{cor}[corollary]{Corollary}

\begin{document}
\title{Determining Factorial Speed Fast}
%
%\titlerunning{Abbreviated paper title}
% If the paper title is too long for the running head, you can set
% an abbreviated paper title here
%
\shortversion{\author{A. Anonymous}}
\longversion{\author{Zhidan Feng\inst{1}\orcidID{0000-0002-3364-5396} \and
Henning Fernau\inst{2}\orcidID{0000-0002-4444-3220} \and
Pamela Fleischmann\inst{3}\orcidID{0000-0002-1531-7970} \and 
Philipp Kindermann\inst{2}\orcidID{0000-0001-5764-7719} \and
Silas Cato Sacher\inst{2}\orcidID{0009-0004-6850-1298}
}}
\longversion{\authorrunning{Z. Feng, H. Fernau, P. Fleischmann, P. Kindermann, %K. Mann, 
S.~C. Sacher}}
% First names are abbreviated in the running head.
% If there are more than two authors, 'et al.' is used.
%
\longversion{\institute{BTU, China \email{zhidanfeng@bjut.edu.cn} \and Trier University, Germany \email{\{fernau,kindermann,sacher\}@uni-trier.de} \and Kiel University, Germany \email{fpa@informatik.uni-kiel.de}}}
%

\begin{comment}
\author{Zhidan Feng}
{Universit\"at Trier, Fachbereich IV, Informatikwissenschaften, Germany \and Shandong University, School of Mathematics and Statistics, Weihai, China}
{zhidanfeng@mail.sdu.edu.cn}
{THTps://orcid.org/0000-0002-3364-5396}
{}

\author{Henning Fernau}
{Universit\"at Trier, Fachbereich IV, Informatikwissenschaften, Germany \and \url{THTps://www.uni-trier.de/index.php?id=49861}}
{fernau@uni-trier.de}
{THTps://orcid.org/0000-0002-4444-3220}
{}

\author{Pamela Fleischmann}
{Kiel University, Germany}
{fpa@informatik.uni-kiel.de}
{THTps://orcid.org/0000-0002-1531-7970} 
{}

\author{Kevin Mann}
{Universit\"at Trier, Fachbereich IV, Informatikwissenschaften, Germany}
{mann@uni-trier.de}
{THTps://orcid.org/0000-0002-0880-2513} 
{}

\author{Silas Cato Sacher}
{Universit\"at Trier, Fachbereich IV, Informatikwissenschaften, Germany}
{sacher@informatik.uni-trier.de}
{THTps://orcid.org/0009-0004-6850-1298} 
{}
\end{comment}

\maketitle              % typeset the header of the contribution

\begin{abstract}
The speed of a graph class~$\cG$ measures how many labeled graphs on~$n$ vertices one can find in~$\cG$. This graph class complexity function is explicitly provided on graphclasses.org. However, for many graph classes, their speed status is classified as \emph{unknown}. 
%In this paper, we classify quite a number of hitherto unknown graph classes as being factorial, meaning that its speed function behaves like $2^{\Theta(n\log n)}$. 
\longversion{In this paper, w}\shortversion{W}e show that any graph class representable by a finite binary language has at most factorial speed, meaning that its speed function behaves like $2^{\Theta(n\log n)}$, and we use this criterion to classify many graph classes whose speed was previously unknown as factorial. 
%To prove these results, we first show that each of these graph classes admits a representation by a finite language in the sense of generalized word representability. 
As a consequence, inclusions between several graph classes can now be seen to be proper. We also prove that $k$-letter graphs have exponential speed, i.e., the speed function lies in~$2^{\Theta(n)}$. 
\keywords{Graph classes  \and Speed \and Generalized word representability}
\end{abstract}

\setcounter{footnote}{0}

\section{Introduction}

Many natural graph classes sit precisely at the boundary between factorial and superfactorial speed. While inclusion-based arguments often show at least factorial speed, proving an upper bound has remained elusive.
We study graph classes that can be described by finite languages and explain how a relatively simple reasoning about those graph classes suffices to settle the question if such a graph class has factorial speed.
This reasoning method is then applied to quite a number of classes whose speed is described as \emph{unknown} in \href{http://graphclasses.org/classes/speed.html}{graphclasses.org/classes/speed.html}~\cite{isgci} on December 1st, 2025, proving factorial speed for all of them. A survey on this situation %from the point of view of this paper 
is given in \myref{fig:Survey-graph-classes}. Whenever we refer to an \emph{unknown speed} in this paper, we mean exactly this situation.

\afterpage{
\begin{landscape}
\begin{figure}[p] 
% \centering
%\includegraphics[angle=90,width=\paperheight,height=\paperwidth,keepaspectratio,page=3]{IWOCA (factorial speed)/img/inclusion_diagram.pdf}
\vspace{-1cm}\hspace{-5cm}\resizebox{2\textwidth}{!}{\input{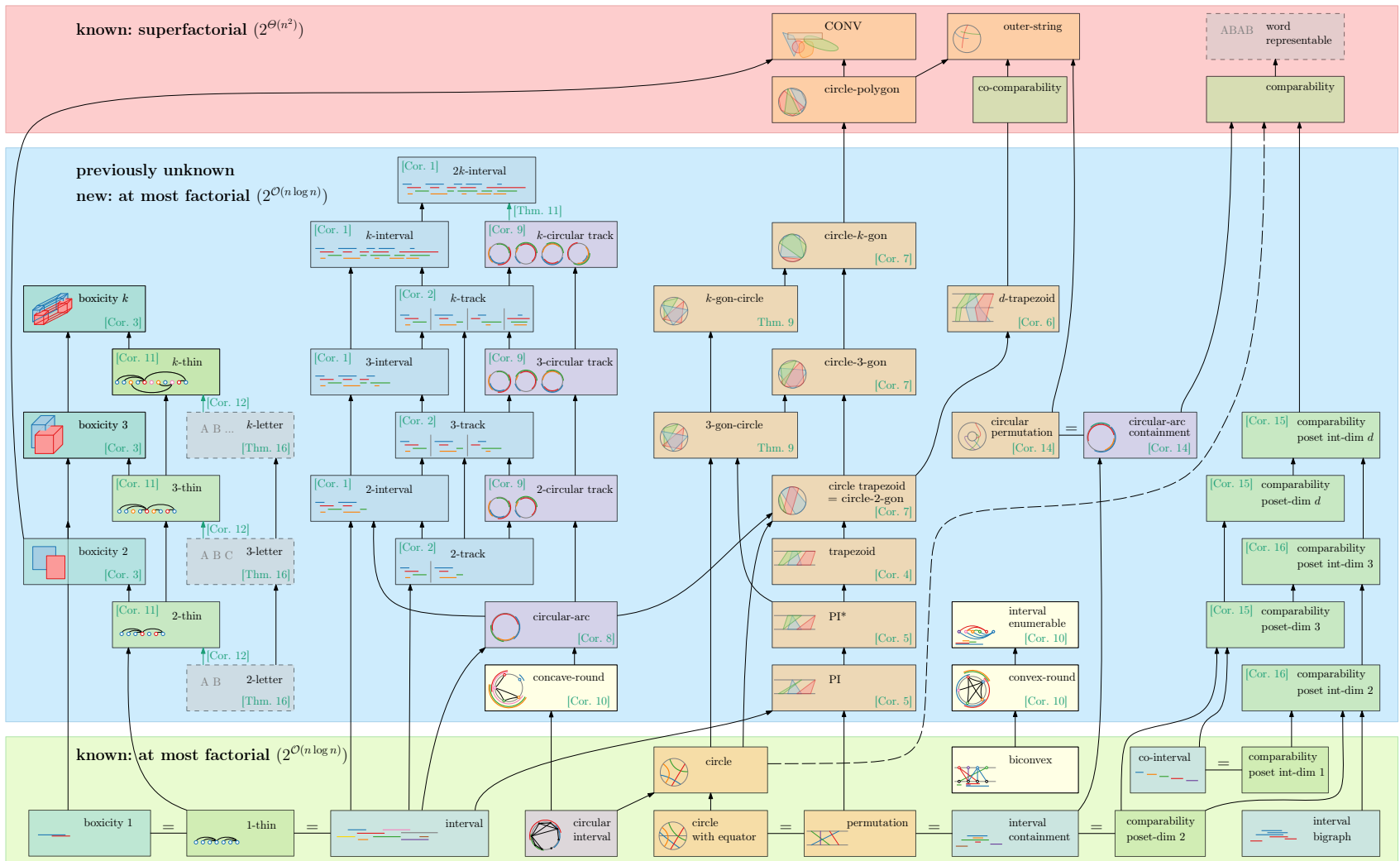}}

\caption{Summary of speed and inclusion results.}
\label{fig:Survey-graph-classes}
\end{figure}
\end{landscape}
}

The key observation underlying all our results is that representability by a finite binary language already implies a strong upper bound on the speed of a graph class. More precisely, if a graph class can be described via generalized word representability using a finite language, then its speed is at most factorial. This simple counting argument allows us to resolve many previously open cases in a uniform way.
Trivially, if $\cG\subseteq\cG'$ and \longversion{if we know that} $\cG'$ has at most factorial speed, then so has $\cG$, and if $\cG$ has at least factorial speed, then so has~$\cG'$. These observations, together with the known factorial speed of interval and permutation graphs, then implies that we can clarify the speed question for any graph class~$\cC$ marked as unknown 
 in \myref{fig:Survey-graph-classes} once we found a finite  language representation for~$\cC$ or a superclass thereof. We can see, as a second line of results, that we can use finite languages for many graph classes in the sense of generalized word representability as introduced in~\cite{FenFFMS2024}; this shows the versatility of this approach and could inspire further studies. Finally, by proving factorial speed for certain graph classes we can deduce so far open separation results from  superclasses with known superfactorial speed. Additionally, we prove that $k$-letter graphs have exponential speed and that they form a strict subclass of the $k$-thin graphs.

\section{Preliminaries}

We \longversion{have to }fix some notation\longversion{ before being able to formulate and prove our results, in particular,} as we are connecting different areas with our formalism.

\paragraph{General Notions.}
$\mathbb{N}$ denotes the set of non-negative integers. For $k\in \mathbb{N}$, $\mathbb{N}_{\geq k}=\{\ell\in\mathbb{N}\mid \ell\geq k\}$ and 
$[k] = \{1,2, \dots, k \}=\mathbb{N}_{\geq 1}\setminus\mathbb{N}_{\geq k+1}$. If $A$ is some set, then $\binom{A}{k}$ denotes the set of all $k$-element subsets of~$A$. We write $A\uplus B$ for disjoint union. 
Let $\intervals$ denote all closed intervals with real numbers as endpoints. If $I\in \intervals$, then $\lborder(I)$ and $\rborder(I)$ denotes the left and right endpoint of~$I$, respectively.

\paragraph{Formal Language Notions.}
An \emph{alphabet} $\Sigma$ is a non-empty, finite set with \emph{letters} as elements. A \emph{word} over $\Sigma$ is a finite \emph{concatenation} of letters from~$\Sigma$; $\Sigma^{\ast}$ denotes the set of all words over~$\Sigma$, including the \emph{empty word}~$\emptyword$. Any $L\subseteq\Sigma^*$ is called a \emph{language} over $\Sigma$. $\Sigma^{\ast}$ together with the concatenation operation~$\cdot$ forms a monoid, the so-called \emph{free monoid} of $\Sigma$, written $(\Sigma^*,\cdot)$. 
A mapping $f$ from $(\Sigma^{\ast},\cdot)$ into another monoid $(M,\circ)$ is called a \emph{morphism} if
$f(x\cdot y)=f(x)\circ f(y)$ holds for all $x,y\in\Sigma^{\ast}$. Note that a morphism is uniquely defined by giving the images of all letters. Mostly, the operator~$\cdot$ is suppressed.
%For $A \subseteq \Sigma$,  the \emph{projective morphism} $h_A: \Sigma^* \rightarrow A^*$ is defined by  $h_A(\ta) = \ta$ for $\ta \in A$ and $h_A(\ta) = \emptyword$, otherwise. Thus, we have $h_{\{\ta\}}(\mathtt{banana})=\ta\ta\ta$. 
We describe some morphisms important for this paper now.
For $\{\ta,\tb\}\in\binom{\Sigma}{2}$, let $h_{\ta,\tb}:\Sigma^*\to\{0,1\}^*$ be defined by $\ta\mapsto 0$, $\tb\mapsto 1$, $x\mapsto\emptyword$ for $x\in\Sigma\setminus\{\ta,\tb\}$. Let $\widetilde{\cdot}:\{0,1\}^*\rightarrow\{0,1\}^*$ be the \emph{complement morphism} mapping $0$ to $1$ and $1$ to $0$ (e.g., $\widetilde{010}=101$). %Note that the complement morphism is an involution. 
For $L\subseteq\{0,1\}^*$, let $\langle L\rangle=\{u,\widetilde{u}\mid u\in L\}$; $L$ is \emph{$0$-$1$-symmetric} if $L=\langle L\rangle$.
\longversion{ 

In addition to the concatenation of words, there exists the \emph{shuffle product}
to combine two words in a different way: 
for $u,v\in\Sigma^*$, define the \emph{shuffle} $u\shuffle v$ by 
\[
\{x_1y_1x_2y_2\cdots x_ny_n\mid \exists x_1,\dots,x_n, y_1,\dots,y_n\in\Sigma^*: u=x_1\cdots x_n\land v=y_1\cdots y_n\}.
\]
For example, $\td \ta \tb \te \tc = (\emptyword) (\td) (\ta \tb) (\te) (\tc) (\emptyword) \in \ta \tb \tc \shuffle \td \te$.}
The \emph{length of a word $w$} is denoted by $|w|$.
 For $w\in\Sigma^{\ast}$ and for all $i\in[|w|]$, $w[i]$ denotes the $\nth{i}$ letter of~$w$, i.e., $w = w[1] \cdots w[|w|]$. For $1\leq i\leq j\leq |w|$, $w[i, j]=w[i]\cdots w[j]$ is a \emph{factor} of~$w$. $\Sigma^n\coloneqq\{w\in\Sigma^*\mid |w|=n\}$\longversion{ collects all words of length~$n$}. 
The \emph{number of occurrences} of $\ta\in\Sigma$ in $w\in\Sigma^*$ is defined as $|w|_{\ta}\coloneqq|\{i\in[|w|]\mid w[i]=\ta\}|$ and $w$'s alphabet is given by $\alphabet(w)\coloneqq\{\ta\in\Sigma\mid\exists i\in[|w|]:\,w[i]=\ta\}$ as the symbols that occur in~$w$.
A word $w\in\Sigma^{\ast}$ is called \emph{$k$-uniform} for some $k\in\N$ if $|w|_{\ta}=k$ for all $\ta\in\Sigma$. 
%A word $w\in\Sigma^*$, with $\Sigma=\alphabet(w)$, is called \emph{$k$-uniform} if each letter occurs exactly $k$ times. 
Let $\Sigma^{k\text{-uni}}$ be the language containing all $k$-uniform words over~$\Sigma$. \shortversion{For $u,v\in\Sigma^*$, 
\[
\{x_1y_1x_2y_2\cdots x_ny_n\mid \exists x_1,\dots,x_n, y_1,\dots,y_n\in\Sigma^*: u=x_1\cdots x_n\land v=y_1\cdots y_n\}
\]}
defines the \emph{shuffle} $u\shuffle v$. 
If $\Sigma=\{\ta_1,\dots,\ta_n\}$, then $\Sigma^{k\text{-uni}}=\ta_1^k\shuffle\ta_2^k\shuffle\cdots\shuffle\ta_n^k\,.$ We also speak of the \emph{index~$\position_\ta(\ell,w)=j$ of the $\nth{\ell}$ occurrence} of a letter $\ta\in\Sigma$ in~$w$, referring to $j\in [k|\Sigma|]$ such that $w=u\ta v$,  $|u|=j-1$ and $|u|_{\ta}=\ell-1$.
%For $i\in [d]$, let $\text{pos}_\ta(i,w)$ select the position of the $i$-th occurrence of letter~$\ta$ in the $d$-uniform word~$w$.
%where $u\in (\Sigma\setminus\{\ta\})^*\shuffle\ta^{\ell-1}$ and $v\in (\Sigma\setminus\{\ta\})^*\shuffle\ta^{k-\ell}$ and $j=|u\ta|$.
%\todofpa{the definition of the $\nth{\ell}$ occurrence seems too complicated to me: $w=uav$, $w[j]=a$, $|u|_{a}=\ell-1$; moreover by your definition $w$ has to be $k$-uniform}
For $w\in \Sigma^*$, let $\Sigma_k(w)=\{\ta\in\Sigma\mid |w|_\ta=k\}$.

\newcommand{\deletion}[3]{\delta_{#1,#2}^{#3}}
\begin{definition}
For all $\ell,d \geq 1$ and 
$k,m \in [\ell]$, let 
%$\deletion{k}{m}{2\ell}:\{0,1\}^{2\ell\text{-uni}}\to \{0,1\}^{2\text{-uni}}$ be the function that deletes any but the $\nth{(2k-1)}$ and the $\nth{(2k)}$ occurrence of~$0$ and any but the $\nth{(2m-1)}$ and the $\nth{(2m)}$ occurrence of $1$ from~$w$.
$\deletion{k}{m}{2\ell\to 2d}:\{0,1\}^{2\ell\text{-uni}}\to \{0,1\}^{2d\text{-uni}}$ be the function deleting any but the $\nst{(2k-1)}$ up to the $\nth{(2(k-1+d))}$ occurrence of~$0$ and any but the $\nth{(2m-1)}$ up to the $\nth{(2(m-1+d))}$ occurrence of $1$ from~$w$. If $d=1$, we omit writing $\to 2$ in the exponent.
\end{definition}

Intuitively, \longversion{in our applications, }the deletion operator extracts the relative order of a fixed pair of intervals (or dimensions)\shortversion{,}\longversion{ while} discarding all others.
\longversion{For example, for}\shortversion{Consider, e.g.,} $w = 001001110101$, \shortversion{then }\longversion{we find:\begin{itemize}\item}$\deletion{2}{1}{6}(w) = \deletion{2}{1}{6}({\color{gray}00}1001{\color{gray}110101}) = 1001$,\longversion{\item} $\deletion{3}{2}{6}(w) = \deletion{3}{2}{6}({\color{gray}001001}110{\color{gray}1}0{\color{gray}1}) = 1100 $ and\longversion{\item} $\deletion{2}{1}{6\to 4}(w) = \deletion{2}{1}{6\to 4}({\color{gray}00}1001110{\color{gray}1}0{\color{gray}1}) =10011100$.\longversion{\end{itemize}

Another type of deletion operator that we sometimes use is $d_\ta:\Sigma^*\to\Sigma^*$, for $\ta\in\Sigma$, that maps $w\in\Sigma^*$ either to $w$ (if $|w|_\ta=0$) or to the word~$u$ obtained from $w$ by deleting the first occurrence of~$\ta$ from~$w$, i.e., $|u|=|w|-1$.}

\paragraph{Graphs.} We only consider finite, undirected, simple graphs. These can be specified as $G=(V,E)$, where $V(G) = V$ is a finite set of \emph{vertices} and $E(G) = E \subseteq \binom{V}{2}$ is a set of \emph{edges}. If $A\subseteq V$, then $G[A]=(A,\{e\in E\mid e\subseteq A\})$ is the graph \emph{induced} by $A$. The \emph{complement} of~$G$ is the graph $\overline{G}=(V,\binom{V}{2}\setminus E)$.
If $G_1=(V_1,E_1)$ and $G_2=(V_2,E_2)$ are graphs, then a bijection $\varphi:V_1\to V_2$ 
is a \emph{graph isomorphism} \iffl $\{u,v\}\in E_1\iff \{\varphi(u),\varphi(v)\}\in E_2$.
Graphs are mostly identified when they are isomorphic; isomorphism is made explicit by writing~$\cong$. Abstract graphs in this sense are collected in graph classes~$\cC$, but these may contain concrete graphs, so that we can write $G\in\cC$. The class $\text{co-}\cC$ collects all complements of graphs from~$\cC$. 
A graph class is \emph{hereditary} if it is closed under taking induced subgraphs. 
Let $\cC$ be a graph class and $\cC^n = \{ G \in \mathcal{C} \mid V(G) = [n]\}$ for each $n \in \mathbb{N}\setminus\{0\}$. The function $n \mapsto |\cC^n|$ is the \emph{speed} of $\cC$. A graph class $\cC$ is called \emph{factorial} if $n^{c_1 n} \leq |\cC^n| \leq n^{c_2 n}$ for some constants $c_1,c_2 > 0$. Accordingly, $\cC$ is called \emph{at most factorial} if a constant $c > 0$ exists such that $|\mathcal{C}^n| \leq n^{c n}$, and \emph{superfactorial} if it is not at most factorial. Clearly, $\cC$ is factorial \iffl  $\text{co-}\cC$ is factorial. The speed of~$\cC$ is \emph{exponential} if there are $c_1,c_2 > 0$ such that  $c_1^n \leq |\cC^n| \leq c_2^n$.

\paragraph{Known Graph Classes and Their Representations.} (See \myref{fig:Survey-graph-classes}.)
We mostly study the following graph classes that can be described as intersection models:\shortversion{\\}
\longversion{\begin{itemize}
    \item} \underline{$\ell$-interval graphs:} each vertex is represented by a collection of $\ell$ closed intervals of the real line. They were originally introduced in~\cite{TroHar79}.  1-interval graphs are interval graphs, \longversion{every outerplanar graph is a 2-interval graph~\cite{KosWes99}, }and every planar graph is a 3-interval graph~\cite{SchWes83}.
    \longversion{\item}\shortversion{\\} \underline{$\ell$-track graphs}, introduced as multidimensional interval graphs in \cite{KumDeo94}: the intersection model is based on tracks, i.e., parallel lines, such that each of the $\ell$ intervals $I_i(x)$ that can be associated to a vertex~$x$ lies on a specific track. \longversion{We can think of an edge between $u$ and $v$ existing \iffl  the first intervals of $u$ and $v$ intersect, or the second ones intersect, or \dots, or the $\nth{\ell}$ intervals of $u$ and $v$ intersect. }\shortversion{There is an edge between $u$ and $v$ iff $I_i(u)\cap I_i(v)\neq\emptyset$.} 1-track graphs are the interval graphs. $\ell$-track graphs are $\ell$-interval graphs\longversion{, but not vice versa if $\ell>1$}.\longversion{\footnote{It should be noted that the \emph{track layouts} introduced in \cite{DujPorWoo2004}, as well as related graph parameters, are something completely different.}}
     \longversion{\item}\shortversion{\\} \underline{boxicity $b$} graphs: their intersection model is based on $b$-dimensional axis-parallel boxes. This concept has been introduced by Roberts~\cite{Rob69}. As each such box is given by $b$ intervals $I_i$, we can think of an edge between $u$ and~$v$ existing \iffl \longversion{the first intervals of $u$ and~$v$ intersect, and the second ones intersect, and \dots, and the $\nth{b}$ intervals of $u$ and $v$ intersect}\shortversion{$I_i(u)\cap I_i(v)\neq\emptyset$ for all $1\leq i\leq b$}. For $b=1$, we face again the interval graphs.\longversion{ The special class of graphs of boxicity at most~$2$ is also known as  rectangle intersection graphs, see~\cite{QueWeg90}.}
 \longversion{\item}\shortversion{\\} \underline{$d$-trapezoid graphs:} their intersection model consists of $d$-trapezoids between $d$ parallel lines. More precisely, we associate an interval $[\ell_i,r_i]$ to line~$i$. We obtain the trapezoid by connecting $\ell_i$ with $\ell_{i+1}$ for $i\in [d-1]$, $\ell_d$ with $r_d$, $r_{i+1}$ with $r_i$  for $i\in [d-1]$ and finally $r_1$ with $\ell_1$. This model has been introduced in~\cite{Flo95} (with a dimension count off by one), also confer~\cite{FelMulWer97}. $2$-trapezoid graphs with all involved intervals of length~1 characterize permutation graphs~\cite{BodKKM98}.
%Let L1 ...Ld be d parallel lines in the plane. A d-trapezoid is the polygon obtained by choosing an interval Ii on every line Li and connecting the left, resp. right endpoint of Ii with the left, resp. right endpoint of Ii+1. A graph is a d-trapezoid graph if it has an intersection model consisting of d-trapezoids between d parallel lines.
\longversion{\item \underline{$k$-gon-circle graphs:} their intersection model consists of circle polygons, i.e., convex polygons inscribed in a circle, with at most $k$ corners lying on the circle line, see \cite{JanKra92,KosKra97}. The recognition problem even for 3-gon-circle graphs (and also for polygon-circle graphs, not bounding the number of corners) is \NP-complete, see \cite{KraPer2003,Per2007,Per2008}.
\item
}\shortversion{\\}
\underline{circle-$k$-gon graphs:} introduced  in \cite{Gav2011} to generalize $k$-gon-circle graphs, see \cite{JanKra92,KosKra97}. \longversion{More precisely, a}\shortversion{A} circle-$k$-gon is the region between $k$ or fewer non-crossing chords of a circle, no chord
connecting the arcs between two other chords; the sides of a circle-$k$-gon are alternatingly
chords or arcs of the circle. A \emph{circle-$k$-gon graph} is the intersection graph of a family of
circle-$k$-gons in a circle.\longversion{ The family of circle trapezoid graphs is exactly the family of
circle 2-gon graphs and the family of circle graphs is exactly the family of circle 1-gon
graphs.} 
\longversion{\end{itemize}}

More details on intersection models generalizing interval graphs can be found in~\cite{Sch84}. For all these graph classes, it does not really matter whether we consider closed intervals or open intervals, or intervals with integer boundaries.
%We will see quite a number of results concerning trapezoid and circular-arc graphs and variations thereof.\todo{reformulate; I think we have to be very sketchy for the circular-arc variations}

Another way of looking at and generalizing interval graphs was suggested in~\cite{ManORC2007}. A graph $G=(V,E)$ is \emph{$k$-thin} if its vertices admit a linear ordering $v_1<v_2<\cdots<v_n$ and a partition $V=V^1\uplus V^2\uplus\cdots\uplus V^k$ such that for every $a<b<c$ with $v_a,v_b\in V^i$ and $\{v_a,v_c\}\in E$, we also have $\{v_b,v_c\}\in E$.

%\todohf{%Possibly, \myref{rem:interrelations} is helpful for the picture. 
%Also, \myref{rem:boxicity-versus-dimension} discusses inclusions between boxicity and dimension hierarchies.}

Yet another way to define graph classes is via (co-)comparability graphs of certain partial orders. 
The \emph{dimension} of a partially ordered set~$V$ is the minimum number of linear orders on~$V$ whose intersection is the partial order on~$V$. This notion was introduced in~\cite{DusMil41}.
%According to graphclasses.org, e
Even for dimension $d=3$, the class of comparability graphs of posets of dimension~$d$ has unknown speed. \longversion{This notion of dimension corresponds to the uniformity of words that permutationally represent graphs; see \cite{KitSei2008,KitLoz2015}.}
Interestingly, for $d=3$, \href{http://graphclasses.org}{graphclasses.org} lists 12 maximal subclasses %of graphs all of which have known 
with factorial speed.

The \emph{interval dimension} of a poset was introduced in~\cite{TroBog76}.
 A partial order is an \emph{interval order} if we can associate closed intervals~$I_x$ of the real line to elements~$x$ and set $x\prec y$ \iffl $\forall \xi\in I_x\forall \eta\in I_y: \xi\leq \eta$, i.e., $I_x$ is completely to the left of $I_y$, possibly sharing at most one element.
As singletons can be viewed as closed sets, any linear order is an interval order.
The interval dimension of a poset is the smallest number of interval orders whose intersection equals the given poset. Again, one can define comparability graphs of posets of bounded interval dimension.
Comparability graphs of posets of bounded interval dimension~1 are exactly the co-interval graphs.
As comparability graphs of posets of bounded interval dimension~2 are exactly the co-trapezoid graphs, so we can handle them.

A graph $G=([n],E)$ is a \emph{$k$-letter graph} if there exists a word $w\in [k]^n$ and a \emph{decoder}  $\cD\subseteq [k]^2$ such that, for $1\leq i<j\leq n$, $\{i,j\}\in E$ \iffl $w[i]w[j]\in \cD$. We also write $G=G(\cD,w)$. 
We will encounter further graph classes in passing whose definitions can be found \longversion{when they are used}\shortversion{in the appendix, where also many proofs have been moved (marked by a clickable `$(*)$')}.

\paragraph{Interpreting Formal Languages as Graph Classes.}
We are now ready to formulate our main link between languages and graph classes, showing how we can interpret words over some alphabet~$V$ as graphs with the help of a language $L\subseteq\{0,1\}^*$.

\begin{definition}
Let $L\subseteq\{0,1\}^*$ be $0$-$1$-symmetric. Let $w\in V^*$ with $V=\alphabet(w)$.
We call a graph $G=(V,E)$ \emph{$L$-represented by $w\in V^{\ast}$} iff, for all $u,v\in V$ with $u \neq v$, we have
\(\{u,v\}\in E \Leftrightarrow h_{u,v}(w)\in L. \)
A graph  $G=(V,E)$  is \emph{$L$-representable} if it can be $L$-represented by some word $w\in V^*$. 
Let $G(L,w)$ denote the graph $G$ that is $L$-represented by $w\in\Sigma^{\ast}$ and let $\cG_L=\{G(L,w)\mid w\in\Sigma^*, \Sigma=\alphabet(w)\}$.
\end{definition}

As $L\subseteq\{0,1\}^*$ is $0$-$1$-symmetric, $h_{u,v}(w)\in L$ \iffl $h_{v,u}(w)\in L$. Hence, for each $0$-$1$-symmetric  language $L\subseteq\{0,1\}^*$, the graph class $\cG_L$ is well-defined. Also, for any $G=G(L,w)$ and any $A\subseteq \alphabet(w)=V(G)$,  $G[A]\in \cG_L$ as $G[A]=G(L,h_A(w))$, where $h_A$ acts as the identity on~$A$ and maps other letters to~$\emptyword$. In other words, any $\cG_L$ is hereditary. This is interesting, as several papers on the speed of graph classes focus on hereditary classes; see \cite{BalBSS2009,BalBolWei2000,BalBolWei2001,LozMayZam2011,SchZit94}.

%\todohf{It might be an idea to explain this approach with some examples. Natural candidates would be: classical word-representable graphs, interval graphs, permutation graphs, circle graphs.}
\shortversion{\noindent}We will now explain this approach with some examples.

\begin{example}\label{exa:word-representability}
One interesting way of representing graphs was introduced by Kitaev and Pyatkin \cite{KitPya2008}: a graph is represented by a word $w$ that has the graph's vertices as letters and an edge $\{\ta,\tb\}$ is given \iffl $\ta$ and $\tb$ alternate in~$w$.
For instance, the graph $G=(\{\ta,\tb,\tc\},\{\{\ta,\tb\}\})$ is represented by the word 
$\ta\tb\tc\tc\ta$. \shortversion{We refer to the monography~\cite{KitLoz2015}.}\longversion{The class of word-representable graphs also found some interest in the pure graph-theoretic literature, see, e.g.,  \cite{CheKitSun2016,ChoKimKim2019,ColKitLoz2017,EnrKit2019,GleKitPya2018,Gle2019,KitLoz2015,KitSai2020,SriHar2024,SriHar2025}.} 
\emph{Classical} word representability based on alternation is modeled in our setting by the binary  $0$-$1$-symmetric language 
$L_{\classical}=\langle\{\emptyword,1\}\{01\}^*\{\emptyword,0\}\rangle,$ leading to the graph class $\cG_{L_{\classical}}$. 
\longversion{Kitaev and Lozin showed in~\cite[Theorem 5.1.7]{KitLoz2015} that}\shortversion{As shown in \cite[Theorem 5.1.7]{KitLoz2015},} the sublanguage of 2-uniform alternating words $\langle 0101\rangle$ already describes an interesting subclass of word-representable graphs\shortversion{:}\longversion{. More precisely,} 
$\cG_{\langle 0101\rangle}$ is the class of circle (or \longversion{interval }overlap) graphs. 
\end{example}

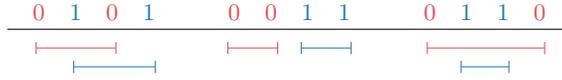
\begin{figure}[tb]
    \centering
    \begin{tikzpicture}[scale=0.5]
\draw[] (0,0) -- (15,0) node[pos=0.5, sloped, above] {{\color{DR}0}\quad{\color{DB}1}\quad{\color{DR}0}\quad{\color{DB}1}\quad\quad\quad
{\color{DR}0}\quad{\color{DR}0}\quad{\color{DB}1}\quad{\color{DB}1}\quad\quad\quad
{\color{DR}0}\quad{\color{DB}1}\quad{\color{DB}1}\quad{\color{DR}0}
};
\draw[|-|,DR] (0.75,-0.5) -- (2.9,-0.5) node[pos=0.5, sloped, above] {};
\draw[|-|,DB] (1.75,-1) -- (3.95,-1) node[pos=0.5, sloped, above] {};
\draw[|-|,DR] (5.85,-0.5) -- (7.2,-0.5) node[pos=0.5, sloped, below] {};
\draw[|-|,DB] (7.8,-0.5) -- (9.15,-0.5) node[pos=0.5, sloped, below] {};
\draw[|-|,DR] (11.15,-0.5) -- (14.3,-0.5) node[pos=0.5, sloped, above] {}; 
\draw[|-|,DB] (12.05,-1) -- (13.35,-1) node[pos=0.5, sloped, below] {};

\end{tikzpicture}
    \caption{Three interval relation patterns: overlap, disjointness and containment.}
    \label{fig:l-example}
\end{figure}

\begin{example}
Further\longversion{ well-known} graph classes can be described by 2-uniform languages:
\longversion{\begin{itemize}
    \item $\cG_{\langle 0101,0110\rangle}$ is the class of interval graphs;
    \item $\cG_{\langle 0110\rangle}$ is the class of permutation (or interval containment) graphs;
    \item $\cG_{\langle 0011\rangle}$ is the class of co-interval graphs.
\end{itemize}}
\shortversion{(a) $\cG_{\langle 0101,0110\rangle}$ is the class of interval graphs; (b) $\cG_{\langle 0110\rangle}$ is the class of permutation (or interval containment) graphs; (c) $\cG_{\langle 0011\rangle}$ is the class of co-interval graphs.} 
For a complete classification of graph classes described by 2-uniform languages, we refer to \cite{FenFFMS2024}. For this paper, the following two thoughts are important.
(1) For instance, looking at the two words $0101,0110$ that are meant to describe interval graphs, one can imagine the two situations in which intervals may interact in order to create an edge in the well-known interval graph model: either they overlap ($0101$) or one is contained in the other one ($0110$). For illustration, we refer to \myref{fig:l-example}. \longversion{This also explains why circle graphs, also known as overlap graphs, are described by $0101$ and why permutation graphs, also known as interval containment graphs, are described by $0110$. }(2) All these graph classes are known to possess factorial speed\longversion{, see graphclasses.org}. 
\end{example}

\section{Main Tool Proving At Most Factorial Speed}

The following theorem formulates the main combinatorial result. Despite its simplicity, it has many applications and helps solve open questions. % from graphclasses.org.

\begin{theorem}\label{thm:counting-Lfinite}
If $L\subseteq\{0,1\}^*$ is finite, then $\cG_{\langle L\rangle}$ has \longversion{speed at most $2^{O(n\log n)}$ and hence is at most factorial}\shortversion{at most factorial speed}.
\end{theorem}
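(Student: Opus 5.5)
Let $L\subseteq\{0,1\}^*$ be finite and set $N=\max\{|u| : u\in L\}$ (if $L$ is empty or contains only short words the argument below still applies verbatim). The plan is a direct counting argument: we bound the number of labeled graphs in $\cG_{\langle L\rangle}^n$ by the number of words that can represent them, after showing that a representing word may be assumed to have length $O(n)$.

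\textbf{Step 1: reduce letter multiplicities.} Let $G=G(\langle L\rangle,w)$ with $V(G)=[n]$. Call a letter $\ta$ \emph{heavy} if $|w|_\ta>N$. For any $\tb\neq\ta$, the word $h_{\ta,\tb}(w)$ has length at least $|w|_\ta>N$, so it is not in $\langle L\rangle$ (complementation preserves length), and thus a heavy letter is isolated in~$G$. Now let $w'$ be obtained from $w$ by deleting all occurrences of every heavy letter except one; more precisely, keep for each heavy letter exactly $N+1$ occurrences (or simply delete all but $N+1$). Then for a heavy $\ta$ and any $\tb$, $|h_{\ta,\tb}(w')|\geq N+1$, so $\ta$ stays isolated; for two light letters $\tb,\tc$, $h_{\tb,\tc}(w')=h_{\tb,\tc}(w)$ since deleting other letters does not affect this projection. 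Hence $G(\langle L\rangle,w')=G$ and $|w'|\le (N+1)n$. (One should double-check the edge case that a letter must still occur, which is why we keep occurrences rather than deleting heavy letters entirely.)

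\textbf{Step 2: count.} Every graph in $\cG_{\langle L\rangle}^n$ is thus $G(\langle L\rangle,w')$ for some word $w'\in[n]^*$ of length at most $M=(N+1)n$. The number of such words is at most $\sum_{m\le M} n^m\le (M+1)n^{M}= 2^{O(n\log n)}$, because $N$ is a constant depending only on~$L$. Since the map $w'\mapsto G(\langle L\rangle,w')$ is surjective onto $\cG_{\langle L\rangle}^n$, we get $|\cG_{\langle L\rangle}^n|\le n^{cn}$ for a constant $c$ (e.g.\ $c=N+3$ for $n\ge2$), i.e., at most factorial speed.

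The only step requiring care is Step 1, namely arguing that long representing words can be shortened without changing the represented graph; the key observation that makes it easy is that finiteness of $L$ forces any letter with more than $N$ occurrences to be isolated, and projections $h_{\tb,\tc}$ are insensitive to deletions of other letters. The counting in Step 2 is then routine.
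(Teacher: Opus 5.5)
Your proof is correct and is essentially the paper's argument: a letter occurring more than $N$ times is isolated, so you can cut it down to $N+1$ occurrences without changing the graph, which gives representing words of length at most $(N+1)n$ and hence at most $2^{O(n\log n)}$ graphs. One small fix: drop the stray phrase ``except one'' in Step~1, since keeping $N+1$ occurrences, as you then specify, is exactly what guarantees the letter stays isolated.
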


\begin{proof} We are going to give an information-theoretic argument.
If $L$ is finite, then so is $\langle L\rangle$. Assume $L=\langle L\rangle$. For each $G=(V,E)\in \cG_L$, there is a word $w\in V^*$ of length at most $(\text{ml}(L) + 1)\cdot |V|$ (where $\text{ml}(L)$ denotes the maximum length of any word in~$L$) with $G(L,w)=G$. Namely, if we had to consider a graph $G(L,u)$ for some $u$ of length more than  $(\text{ml}(L) + 1)\cdot |V|$, then by the pigeon-hole principle, there will be a vertex~$\ta$ that occurs more than $\text{ml}(L)+1$ many times. Hence, $\ta$ is an isolated vertex in $G(L,u)$, and there is a word $u'$ that can be obtained from~$u$ by deleting all but the first  $\text{ml}(L)+1$ many occurrences of $\ta$, while $G(L,u)=G(L,u')$.
By iterating this argument, one can see the claimed upper bound on the length of words that need to be considered in order to describe each graph of $\cG_L$. As  $\text{ml}(L)$ is a constant only depending on the fixed language~$L$, writing down $w\in V^*$ in order to describe $G\in \cG_L$ costs at most $\cO(|V|\log(|V|))$ many bits.
Hence, at most $2^{\cO(|V|\log(|V|))}$ many graphs with $|V|$ many vertices can belong to $\cG_L$. 
\qed 
\end{proof}

This immediately implies that no graph class with superfactorial speed can be described by any finite language.  For instance, this means that no finite language can represent the class of word-representable graphs. In the following, we apply \myref{thm:counting-Lfinite} %this theorem 
by giving representations of graph classes by finite languages, hence proving that their speed can be at most factorial. Although this theorem has no complicated proof, it is quite powerful as a uniform tool to resolve hitherto unknown speed questions for certain graph classes.

\begin{toappendix}
\begin{remark}
Let us continue our example concerning classical word representability. Recall that any word-representable graph~$G$ can be represented by some $k$-uniform word. The smallest possible $k$ such that there exists a $k$-uniform word~$w$ such that $G\cong G(L_{\classical},w)$ is also known as the \emph{representation number} of~$G$.
In \cite{DwaMozKri2025}, the authors showed that the representation number of word-representable split graphs is at most~3. Using our formalism, this means that word-representable split graphs belong to $\cG_L$ with $L=\langle 01(01 \cup\emptyword)(01\cup\emptyword)\rangle$. As both the class of split graphs and the class of word-representable graphs are superfactorial, while $\cG_L$ is at most factorial by \myref{thm:counting-Lfinite}, the class of split graphs and the class of word-representable graphs are incomparable. More on word-representable split graphs can be found in \cite{CheKitSai2022,Iam2022}.
\end{remark}

\begin{remark}\label{rem:adding-isolates}
It is not hard to see that for any finite language $L\subseteq\{0,1\}^*$, $\cG_{\langle L\rangle}$ is closed under adding isolated vertices: any letter~$\ta\in \alphabet(w)=V$ of a word $w\in V^*$ with $|w|_\ta\notin \{|u|_0\mid u\in\langle L\rangle\}$ encodes an isolated vertex in $G(\langle L\rangle,w)$. Therefore, when we are considering a graph class $\cG$ with our methodology such that $\cG$ is not (known to be) closed under adding isolates, then we are actually showing that the superclass $\cG'$ obtained from $\cG$ by adding to each graph $G\in\cG$ an arbitrary number of isolated vertices can be represented by a finite language and hence has at most factorial speed. But then, also $\cG$ has at most factorial speed.
\end{remark}
\end{toappendix}

\section{Applications: New Speed Results}

In this section, we collect a number of consequences of our main result that help resolve several cases marked as \textit{unknown} in graphclasses.org concerning the question whether a certain graph class has factorial speed.
\longversion{All results in this section follow the same pattern: we exhibit a finite uniform language encoding the graph class and then invoke \myref{thm:counting-Lfinite}. In some cases, speed results are also derived by inclusions of graph classes.}

%\todo[inline]{HF would suggest to include proofs for characterizations of graph classes only in the main part if the speed results do not follow by inclusion. \\ Silas: agreed}

\longversion{\subsection{Multi-Interval and Multi-Track Graphs}}
\shortversion{\subsection{Multi-Interval, Multi-Track, and Bounded Boxicity Graphs}}

%The $\ell$-interval graphs are the graphs with an intersection model consisting of $\ell$ (not necessarily disjoint) closed intervals on the real line, originally introduced in~\cite{TroHar79}. 
%Clearly, for each $\ell$-interval graph, we can find such an $\ell$-interval representation with pairwise disjoint intervals and distinct endpoints.

We will start our discussion of graph classes defined by geometric intersection models by considering multi-interval graphs. We \longversion{will }discuss them to some details\longversion{ as several other models can be treated alike}. 

\begin{theorem}\label{thm:lIntervalGraphs}
For each $\ell\in\mathbb{N}_{\geq 1}$, $\cG_{L_\ell^{\text{int}}}$ is the class of  $\ell$-interval graphs, where 
%and the finite language 
$L_{\ell}^{\text{int}} \coloneqq \{ w \in \{0,1\}^{2\ell\text{-uni}}
%0^{2\ell} \shuffle 1^{2\ell} 
\mid \exists k,m \in [\ell]: \deletion{k}{m}{2\ell}(w) \in \langle 0101, 0110 \rangle\}$ is $2\ell$-uniform.
\end{theorem}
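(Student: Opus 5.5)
We prove the two inclusions separately, after two preliminary remarks. First, $L_\ell^{\text{int}}$ is $0$-$1$-symmetric. Swapping $0$ and $1$ in $w$ turns $\deletion{k}{m}{2\ell}(w)$ into the complement of $\deletion{m}{k}{2\ell}(\widetilde w)$, and $\langle 0101,0110\rangle$ is closed under complement, so $\cG_{L_\ell^{\text{int}}}$ is well defined. Second, every $\ell$-interval graph has a \emph{normalized} representation: each vertex gets exactly $\ell$ pairwise disjoint closed intervals, and all $2\ell n$ endpoints are distinct. To obtain it, merge overlapping intervals of the same vertex, which does not change the union. If fewer than $\ell$ intervals remain, split one interval into several tiny adjacent pieces, or add tiny far-away intervals for an isolated vertex. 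Finally, perturb the endpoints slightly; this keeps all intersections and non-intersections, since we work with finitely many closed intervals.

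\emph{$\ell$-interval graphs are in $\cG_{L_\ell^{\text{int}}}$.} Take a normalized representation and read all endpoints from left to right, writing the letter $v$ for each endpoint belonging to vertex $v$. This gives a $2\ell$-uniform word $w$. Because the intervals of $v$ are disjoint, the $(2k-1)$st and $(2k)$th occurrences of $v$ are exactly the left and right endpoints of the $k$th interval of $v$ in left-to-right order. Hence, for $u\neq v$, the word $\deletion{k}{m}{2\ell}(h_{u,v}(w))$ is the endpoint pattern of the $k$th interval of $u$ and the $m$th interval of $v$. As illustrated in \myref{fig:l-example}, two intervals with distinct endpoints intersect \iffl their pattern lies in $\langle 0101,0110\rangle$ (overlap or containment) rather than in $\langle 0011\rangle$ (disjointness). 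So $\{u,v\}\in E$ \iffl $h_{u,v}(w)\in L_\ell^{\text{int}}$.

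\emph{$\cG_{L_\ell^{\text{int}}}$ is contained in the $\ell$-interval graphs.} Let $G=G(L_\ell^{\text{int}},w)$. A letter $v$ with $|w|_v\neq 2\ell$ yields $h_{u,v}(w)\notin L_\ell^{\text{int}}$ for every $u$, since all words of the language are $2\ell$-uniform, so $v$ is isolated. We represent such $v$ by $\ell$ tiny intervals far to the right of everything else. For every other vertex $v$, let its $k$th interval be $[p,q]$, where $p$ and $q$ are the positions in $w$ of the $(2k-1)$st and $(2k)$th occurrences of $v$. These intervals are disjoint, and all endpoints are distinct integers. By the same pattern correspondence as in the first inclusion, the $k$th interval of $u$ meets the $m$th interval of $v$ \iffl $\deletion{k}{m}{2\ell}(h_{u,v}(w))\in\langle 0101,0110\rangle$. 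Hence the resulting intersection graph is exactly $G$.

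The main obstacle is the normalization step, in particular keeping exactly $\ell$ disjoint intervals per vertex without creating or destroying edges. Splitting an interval into abutting pieces would create shared endpoints. Instead, we split it into nearly adjacent pieces separated by tiny gaps that avoid all other endpoints; this is possible because there are finitely many endpoints and we may first stretch the line. Every other step is direct bookkeeping with the deletion operator.
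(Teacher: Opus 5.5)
Your proof is correct and follows the same route as the paper. You normalize to $\ell$ pairwise disjoint intervals per vertex with distinct endpoints and read the endpoints left to right to get a $2\ell$-uniform word; conversely, the $(2k-1)$st and $(2k)$th occurrences of a $2\ell$-fold letter span its $k$th interval, letters of other multiplicities become isolated vertices placed far away, and in both directions the overlap/containment versus disjointness pattern correspondence does the work. Your normalization is more explicit than the paper's bare ``w.l.o.g.'', with one caveat: finiteness only protects non-intersections under perturbation, so touching intervals such as $[0,1]$ and $[1,2]$ need tie-breaking that keeps them intersecting (e.g., first enlarge all intervals by a tiny $\varepsilon$).
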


\begin{proof}By construction, $L_\ell^{\text{int}}$ is $2\ell$-uniform.
%\begin{enumerate}
%\item 

Let $G = (V,E)$ be an  $\ell$-interval graph with representation
$f_G: V \to \binom{\intervals}{\ell}$. W.l.o.g., assume that the intervals of $f_G(v)$ are pairwise disjoint for all $v\in V$ and that all the endpoints~$x_i$ are distinct\shortversion{, ordered as $x_1 < \dots < x_{2\ell |V|}$}. \longversion{Let $x_1 < \dots < x_{2\ell |V|}$ be the endpoints of the intervals in the $\ell$-interval representation of~$G$. }Define $w\in V^{2\ell |V|}$ by $w[i] \coloneqq v$ if $x_i$ is an endpoint of an interval in $f_G(v)$ for all $i \in [2 \ell |V|]$. 
%Let $w  \coloneqq \prod_{i \in [2 \ell |V|]}w[i]$. 
Clearly, $G(L_{\ell}^{\text{int}},w) = G$. 

%\item 
For the other direction, let $G=(V,E) \in \mathcal{G}_{L_\ell^{\text{int}}}$ be represented by a word $w\in V^*$.  
Let $V_{2\ell} = \{ v \in V \mid |w|_v = 2\ell\}$. 
As $L_\ell^{\text{int}}$ is $2\ell$-uniform, all $v \in \overline{V_{2\ell}(w)}$ are isolated vertices. Let $w'$ be obtained from $w$ by deleting all letters of $ \overline{V_{2\ell}(w)}$ from~$w$. By construction, $w'$ is $2\ell$-uniform.
%, i.e., $w'=\ta_1\ta_2\cdots \ta_{2\ell n'}$, with $n'=|V_{2\ell}(w)|=|\alphabet(w')|$. 
Let $m \in [\ell]$. For $v \in V_{2\ell}(w)$, define $I_{m,v} \coloneqq [\position_v(2m-1,w'),\position_v(2m,w')]$.
%[e_1,e_2]$, where $e_1$ is the index of the $\nth{(2m-1)}$ and  $e_2$ is the index of the $\nth{(2m)}$ occurrence of $v$ in~$w'$. %Let $M \coloneqq \max \{ e_1,e_2 \mid v \in V_{2\ell} \wedge m \in [\ell]\}$. 
For all $v \in \overline{V_{2\ell}(w)}$, choose intervals $I_{m,v}$ such that %all the endpoints are greater than 
$|w'|<\lborder(I_{m,v})$ and all these intervals are pairwise disjoint. 
Let $f(v) \coloneqq \bigcup \{I_{m,v} \mid m \in [\ell]\}$. We want to show that $f$ is an $\ell$-interval representation of~$G$.
Note that $f(u) \cap f(v) \neq \emptyset$ \iffl there exists an interval $[r_1,r_2]\in \{I_{m,u} \mid m \in [\ell]\}$ that intersects with an interval $[s_1,s_2]\in \{I_{m,v} \mid m \in [\ell]\}$. As all interval endpoints are different, this means \shortversion{either overlap or containment}\longversion{ that either (1) $r_1<s_1<r_2<s_2$ or $s_1<r_1<s_2<r_2$ (overlap situations), or (2) $r_1<s_1<s_2<r_2$ or $s_1<r_1<r_2<s_2$ (containment situations)}. 

%Assume that (in the natural linear ordering of disjoint intervals, from left to right) $[r_1,r_2]$ is the $\nth{k_u}$ interval in $f(u)$ and that $[s_1,s_2]$ is the $k_v^{\mbox{\tiny th}}$ interval in $f(v)$. As $[r_1,r_2]\cap [s_1,s_2]\neq\emptyset$, these interval endpoints refer to positions of letters in~$w'$. More precisely, for $q\in[2]$,$r_q=\position_u(2k_u+(q-2),w')$ and $s_q=\position_v(2k_v+(q-2),w')$. 

%Consider $h_{u,v}(w')$. By construction, this word is $2\ell$-uniform. For $q\in[2]$, let $r_q'=\position_0(2k_u+(q-2),h_{u,v}(w'))$ and let $s_q'=\position_1(2k_v+(q-2),h_{u,v}(w'))$.
%By definition, their sequence determines the situations of the intervals as described above. More precisely, $r_1<s_1<r_2<s_2$ \iffl $r_1'<s_1'<r_2'<s_2'$, while $r_1<s_1<s_2<r_2$ \iffl $r_1'<s_1'<s_2'<r_2'$, etc., so that $\deletion{k_u}{k_v}{2\ell}(h_{u,v}(w'))\in\langle 0101\rangle$ describes an overlap situation, while $\deletion{k_u}{k_v}{2\ell}(h_{u,v}(w'))\in\langle 0110\rangle$ describes a containment. Hence, if $[r_1,r_2]\cap [s_1,s_2]\neq\emptyset$, then $\deletion{k_u}{k_v}{2\ell} (h_{u,v}(w'))\in\langle 0101,0110\rangle$. In that case, also $\deletion{k_u}{k_v}{2\ell}(h_{u,v}(w))\in\langle 0101,0110\rangle$.
%Conversely, for $u,v\in V_{2\ell}(w)$, if $\deletion{k_u}{k_v}{2\ell} (h_{u,v}(w'))\in\langle 0101,0110\rangle$, then $[r_1',r_2']\cap [s_1',s_2']\neq\emptyset$, implying $[r_1,r_2]\cap [s_1,s_2]\neq\emptyset$. 
%By construction, with $u\notin  V_{2\ell}(w)$, $u$ is an isolated vertex in~$G$, and so $f(u)$ is disjoint with $f(v)$ for any vertex $v\in V$.

Let $u,v \in V$. If $u \in \overline{V_{2l}}$ or $v \in \overline{V_{2l}}$, $f(u) \cap f(v) \neq \emptyset$ by construction and $h_{u,v}(w) \not\in L_{\ell}^{\text{int}}$ because $h_{u,v}(w)$ is not $2\ell$-uniform. Now assume that $u,v \in V_{2\ell}$. Hence, $h_{u,v}(w) = h_{u,v}(w')$. 
{\allowdisplaybreaks
\begin{align*}
& h_{u,v}(w) = h_{u,v}(w') \in L_{\ell}^{\text{int}} \\
\Leftrightarrow &  \exists k,m \in [\ell]: \delta_{k,m}^{2\ell}(h_{u,v}(w')) \in \{ 0101, 1010, 0110, 1001 \} \\
\Leftrightarrow & \exists k,m \in [\ell]: \\
& \phantom{\vee } \; \; \position_u(2k-1,w') < \position_v(2m-1,w') < \position_u(2k,w') < \position_v(2m,w') \\
& \vee \position_v(2m-1,w') < \position_u(2k-1,w') < \position_v(2m,w') < \position_u(2k,w') \\
& \vee \position_u(2k-1,w') < \position_v(2m-1,w') < \position_v(2m,w') < \position_u(2k,w') \\
& \vee \position_v(2m-1,w') < \position_u(2k-1,w') < \position_u(2k,w') < \position_v(2m,w') \\
\longversion{\Leftrightarrow & \exists k,m \in [\ell]: \\
& \phantom{\vee } \; \; [\position_u(2k-1,w'),\position_u(2k,w')] \cap [\position_v(2m-1,w'),\position_v(2m,w')] \neq \emptyset \\}
\Leftrightarrow & \exists k,m \in [\ell]: I_{k,u} \cap I_{m,v} \neq \emptyset \longversion{\\}
\Leftrightarrow \longversion{&} f(u) \cap f(v) \neq \emptyset
\end{align*}}

T\longversion{his concludes the proof that $f$ is an interval representation of $G$ and t}herefore, $G$ is an $\ell$-interval graph.
\qed 
%\end{enumerate}
\end{proof}

\begin{figure}[tb]
    \centering
    \begin{tikzpicture}[scale=0.5]
\draw[-] (-0.25,0) -- (15.25,0) node[pos=0.5, sloped, above] {${\color{DR}\ta}\quad{\color{DG}\tb}\quad{\color{DR}\ta}\quad{\color{DB}\tc}\quad
{\color{DG}\tb}\quad{\color{DG}\tb}\quad{\color{DG}\tb}\quad{\color{DO}\td}\quad
{\color{DB}\tc}\quad{\color{DR}\ta}\quad{\color{DO}\td}\quad{\color{DO}\td}\quad
{\color{DR}\ta}\quad{\color{DO}\td}\quad{\color{DB}\tc}\quad{\color{DB}\tc}$};
\draw[|-|,DR] (-0.25,1) -- (2.5,1) node[pos=0.5, sloped, above] {};
\draw[|-|,DB] (2.5,1) -- (8.5,1) node[pos=0.5, sloped, above] {};
\draw[|-|,DG] (0.5,1.5) -- (4.5,1.5) node[pos=0.5, sloped, below] {};
\draw[|-|,DO] (6.5,1.5) -- (10.5,1.5) node[pos=0.5, sloped, below] {};

\draw[-] (-0.25,-1.5) -- (15.25,-1.5) node[pos=0.5, sloped, above] {}; 
\draw[|-|,DG] (4.5,-0.5) -- (6.5,-0.5) node[pos=0.5, sloped, below] {};
\draw[|-|,DR] (8.5,-0.5) -- (12.5,-0.5) node[pos=0.5, sloped, below] {};
\draw[|-|,DB] (13.5,-0.5) -- (15.25,-0.5) node[pos=0.5, sloped, above]  {};
\draw[|-|,DO]  (10.5,-1) -- (13.5,-1) node[pos=0.5, sloped, below] {};

\node (qa) at (18,1) {{\color{DR}$\ta$}};
\node (qb) at (20,1) {\color{DG}$\tb$};
\node (qd) at (18,-1) {\color{DO}$\td$};
\node (qc) at (20,-1) {\color{DB}$\tc$};

\draw   (qa) edge[-] node{} (qb);
\draw   (qa) edge[-] node{} (qd);
\draw   (qc) edge[-] node{} (qb);
\draw   (qc) edge[-] node{} (qd);

\end{tikzpicture}
    \caption{Constructing a $2$-interval graph from a $4$-uniform word}
    \label{fig:l-interval-example}
\end{figure}
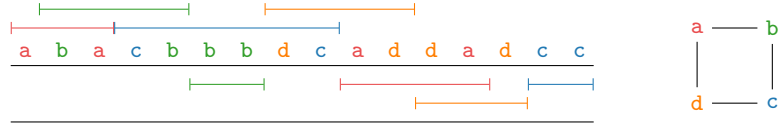
%\todoscs{replace the photograph by a tikzpicture}

\begin{example}
Let $w=\mathtt{abacbbbdcaddadcc}$ and $\ell=2$. Then, $G=G(L_2^{\text{int}})\cong C_4$, as $f_G(\ta)=\{[1,3],[10,13]\}$, $f_G(\tb)=\{[2,5],[6,7]\}$, $f_G(\tc)=\{[4,9],[15,16]\}$, $f_G(\td)=\{[8,11],[12,14]\}$\shortversion{; see \myref{fig:l-interval-example}}. Interestingly, $C_4$ is not an interval graph.\longversion{ \myref{fig:l-interval-example} is a sketch of this example.} 
\end{example}

\noindent
With \myref{thm:counting-Lfinite}, we can deduce:

\begin{corollary}\label{cor:interval}
The class of  $\ell$-interval graphs is factorial for each fixed $\ell\in\mathbb{N}_{\geq 1}$.
\end{corollary}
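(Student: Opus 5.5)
The corollary needs two bounds: an upper bound on the number of labeled $\ell$-interval graphs on $[n]$, and a matching lower bound of the form $n^{c_1 n}$.

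\emph{Upper bound.} By \myref{thm:lIntervalGraphs}, the class of $\ell$-interval graphs equals $\cG_{L_\ell^{\text{int}}}$. The language $L_\ell^{\text{int}}$ consists of $2\ell$-uniform words over $\{0,1\}$, so every word in it has length $4\ell$; hence $L_\ell^{\text{int}}$ is finite. It is also $0$-$1$-symmetric, i.e.\ $\langle L_\ell^{\text{int}}\rangle = L_\ell^{\text{int}}$. To see this, note that complementing a word $w$ swaps the roles of $0$ and $1$. Therefore $\deletion{k}{m}{2\ell}(\widetilde{w}) = \widetilde{\deletion{m}{k}{2\ell}(w)}$, and the set $\langle 0101,0110\rangle$ is closed under complement. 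So \myref{thm:counting-Lfinite} applies and gives speed at most $2^{O(n\log n)}$. In other words, the class is at most factorial.

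\emph{Lower bound.} Every interval graph is a $1$-interval graph. For $\ell\ge 1$, every interval graph is also an $\ell$-interval graph: a vertex's single interval $I$ can be replaced by $\ell$ tiny disjoint intervals inside $I$, plus extra disjoint intervals placed far to the right. The cleaner choice is to put each vertex's $\ell-1$ extra intervals at pairwise disjoint far-right positions, so that they create no new edges. Equivalently, the class $\cG_L$ is monotone in this sense. It is well known that interval graphs have factorial speed. A direct witness is also easy: take a perfect matching on $[n]$ plus arbitrary choices. Simpler still, the interval graphs on $[n]$ already include at least $n!/\text{poly}$ distinct labeled threshold or interval graphs, for instance the labeled "staircase" graphs. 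Counting these gives at least $n^{c_1 n}$ of them. By inclusion, the $\ell$-interval graphs therefore have speed at least $n^{c_1 n}$.

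Combining the two bounds shows the class is factorial. The only step that needs genuine care is checking that $\ell$-interval graphs contain interval graphs with exactly the same edges. The far-right disjoint placement of the extra intervals handles this.
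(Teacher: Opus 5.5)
Your proposal is correct and follows the paper's route. The upper bound comes from applying \myref{thm:counting-Lfinite} to the finite, $0$-$1$-symmetric language $L_\ell^{\text{int}}$ of \myref{thm:lIntervalGraphs}, and your symmetry check via $\deletion{k}{m}{2\ell}(\widetilde{w})=\widetilde{\deletion{m}{k}{2\ell}(w)}$ makes explicit what the paper leaves implicit. The lower bound comes from the inclusion of interval graphs, whose speed is known to be factorial. In the inclusion argument, only your far-right padding of $\ell-1$ extra intervals is valid; shrinking $I$ to tiny disjoint subintervals would destroy adjacencies. Your ad hoc ``direct witnesses'' are unnecessary, since the known factorial speed of interval graphs already suffices.
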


This has been known for (1-)interval graphs, but was considered unknown for larger interval numbers (as explicitly stated for $\ell=2$ and $\ell=3$). % in \url{https://graphclasses.org/classes/speed.html}. 
Yet, this result also follows with \cite[Lemma 1]{ErdWes85}. We presented the result including a detailed proof as many of the following constructions and proofs follow a similar pattern.
%\todoscs{Explain or add ref for $\ell$-track graphs. See intro!}

\begin{thmrep}\applabel{thm:ltrackGraphs}
For each $\ell\in\mathbb{N}_{\geq 1}$, $\cG_{L_\ell^{\text{trk}}}$ is the class of $\ell$-track graphs, where  $L_{\ell}^{\text{trk}} \coloneqq \{ w \in \{0,1\}^{2\ell\text{-uni}}\mid \exists k\in [\ell]: \deletion{k}{k}{2\ell}(w) \in \langle 0101, 0110 \rangle\}$ is $2\ell$-uniform.
\end{thmrep}

\begin{proof}
By construction, $L_\ell^{\text{trk}}$ is $2\ell$-uniform.
\begin{enumerate}
\item Let $G = (V,E)$ be an  $\ell$-track graph with  representation $f_G: V \to \intervals^\ell$. For $i \in [\ell]$, $f_G(v)(i)$\longversion{, the $i$-th component of $f_{G}(v)$,} describes the interval corresponding to $v$ on track~$i$. 
W.l.o.g., assume that 
%the intervals in the $\ell$-track representation are pairwise disjoint for all $i \in [\ell]$ 
all intervals on the $i$-th track (for all $i \in [\ell]$) have pairwise different endpoints and that $\max \{\rborder(f_G(v)(i))\mid v\in V\}<\min\{\lborder(f_G(v)(i+1))\mid v\in V\}$\longversion{, i.e., 
all the endpoints of intervals on track~$i$ are smaller than the smallest endpoint of an interval on track $i+1$} for all $i\in[\ell-1]$. Let $x_1 < \dots < x_{2\ell|V|}$ be the endpoints of the intervals in the $\ell$-representation~$f_G$\longversion{, i.e., $\{x_i\mid i\in [2 \ell |V|]\}=\{\lborder(f_G(v)(i)),\rborder(f_G(v)(i))\mid v\in V, i\in [\ell]\}$}. 
Define  $w\in V^{2\ell |V|}$ by $w[i] \coloneqq v$ if $x_i\in \{\lborder(f_G(v)(j)),\rborder(f_G(v)(j))\mid j\in [\ell]\}$ is an endpoint of an interval in $f_G(v)$ for all $i \in [2 \ell |V|]$. 
%Let $w \coloneqq \prod_{i \in [2 \ell |V|]}w_i$. 
Clearly, $G(L_{\ell}^{\text{trk}},w) = G$. 
\item Let $G=(V,E) \in \mathcal{G}_{L_\ell^{\text{trk}}}$ be represented by a word~$w$. As $L_\ell^{\text{trk}}$ is $2\ell$-uniform, all $v \in \overline{V_{2\ell}(w)}$ are isolated vertices. For $v \in V_{2\ell}(w)$ and $m \in [\ell]$, define $I_{m,v} \coloneqq [\position_v(2m-1,w'),\position_v(2m,w')]$.
%For $v \in V_{2\ell}(w)$ and $m \in [\ell]$, define $I_{m,v} \coloneqq [e_1,e_2]$, where $e_1$ is the $(2m-1)$-th and the $2m$-th occurrence of $v$ in~$w$, respectively. 
For each $m\in [\ell]$ and for all $v \in \overline{V_{2\ell}(w)}$, choose an interval $I_{m,v}$ for~$v$ on track~$m$ such that the endpoints are greater than the largest endpoint of any interval assigned on track~$m$ for vertices from $V_{2\ell}(w)$, and such that all the intervals added to track~$m$ are pairwise disjoint. 
Let $f(v) \coloneqq \{I_{m,v} \mid m \in [\ell]\}$. Similarly to the proof of \myref{thm:lIntervalGraphs}, one sees that $f$ is an $\ell$-track representation of~$G$.\qed
\end{enumerate}
\end{proof}

As $\ell$-track graphs are a subclass of $\ell$-interval graphs, we get the following corollary that again solves  questions explicitly stated as unknown for $\ell=2,3$.
%in \url{https://graphclasses.org/classes/speed.html}.

\begin{corollary}\label{cor:l-track}
The class of  $\ell$-track graphs is factorial for each fixed $\ell\in\mathbb{N}_{\geq 1}$.
\end{corollary}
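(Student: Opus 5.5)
The corollary needs an upper and a lower bound on the speed.

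For the upper bound, I combine \myref{thm:ltrackGraphs} with \myref{thm:counting-Lfinite}. By \myref{thm:ltrackGraphs}, the class of $\ell$-track graphs equals $\cG_{L_\ell^{\text{trk}}}$. The language $L_\ell^{\text{trk}}$ consists of $2\ell$-uniform words over $\{0,1\}$, so every word in it has length exactly $4\ell$. Hence it is finite, with at most $\binom{4\ell}{2\ell}$ elements. It is also $0$-$1$-symmetric, since its defining condition uses $\langle 0101,0110\rangle$, which is closed under complement, and complementation commutes with $\deletion{k}{k}{2\ell}$. So $\langle L_\ell^{\text{trk}}\rangle=L_\ell^{\text{trk}}$, and \myref{thm:counting-Lfinite} gives speed at most $2^{O(n\log n)}$. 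Alternatively, one could argue purely by inclusion: $\ell$-track graphs are $\ell$-interval graphs, which have at most factorial speed by \myref{cor:interval}.

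For the lower bound, I need $|\cC^n|\ge n^{c_1 n}$. Here I use that the interval graphs, which are exactly the $1$-track graphs, are a subclass of the $\ell$-track graphs for every $\ell\ge1$. To see this inclusion, place the given interval model on track~$1$. On each of the tracks $2,\dots,\ell$, give every vertex its own interval, with these intervals pairwise disjoint, so no new adjacencies appear. The class of interval graphs is known to be factorial (as recalled in the examples), so the $\ell$-track graphs have at least factorial speed by monotonicity of speed under inclusion.

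Putting the two bounds together shows that the class is factorial. The only point needing any care is checking the hypotheses of \myref{thm:counting-Lfinite}, namely finiteness and symmetry, and both are immediate from uniformity.
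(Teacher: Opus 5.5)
Your proof is correct and is essentially the paper's argument. The paper gets the upper bound from the inclusion of $\ell$-track graphs in $\ell$-interval graphs together with \myref{cor:interval} (your stated alternative), which rests on the same finite-language counting of \myref{thm:counting-Lfinite} that you apply directly to $L_\ell^{\text{trk}}$; the lower bound comes, as in your argument, from interval graphs being $1$-track and hence $\ell$-track graphs.
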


%A graph is a \emph{balanced 2-interval graph} \iffl it has an intersection model whose objects consist of two equal length intervals on a real line. Such graphs are 2-interval graphs, and hence their speed is also factorial, while this was stated as ``unknown'' in \url{https://graphclasses.org/classes/speed.html}. 

\longversion{\subsection{Boxicity}}
\noindent
We now turn to graphs defined by an intersection model of $b$-dimensional boxes.

%The concept of boxicity has been introduced by Roberts~\cite{Rob69}.  Boxicity refers to another geometric intersection model: now, $b$-dimensional axes-parallel boxes are associated to vertices. In particular, $b=2$ is also known as rectangle intersection graphs, %For the special class of graphs of boxicity at most~$2$, 
%we refer to~\cite{QueWeg90}.

\begin{thmrep}\applabel{thm:bboxGraphs}
For each $b\in\mathbb{N}_{\geq 1}$, $\cG_{L_b^{\text{box}}}$ is the class of graphs of boxicity~$\leq b$, where $L_{b}^{\text{box}} \coloneqq \{w \in \{0,1\}^{2b\text{-uni}}\mid \forall k\in [b]: \deletion{k}{k}{2b}(w) \in \langle 0101, 0110 \rangle\}$ is $2b$-uniform. 
\end{thmrep}

\begin{proof} 
Clearly, $L_{b}^{\text{box}}$ is $2b$-uniform by construction.
\begin{enumerate}
\item Let $G = (V,E)$ be a graph of boxicity $b$, i.e., there is a set of boxes $\{B_v\}_{v \in V}$ with $B_v = [\ell_1(v),r_1(v)] \times \dots \times [\ell_b(v),r_b(v)]$ for real numbers $\ell_d(v), r_d(v)$ for each $d \in [b]$ and $v \in V$. These boxes correspond to the vertices of $G$ in the sense that for all $u,v \in V$, $\{u,v\} \in E \Leftrightarrow B_v \cap B_u \neq \emptyset$. W.l.o.g., assume that all the $\ell_d(v)$ and $r_d(v)$ are distinct. Let $x_1 < \dots <x_{2b |V|}$ be the numbers $\{ \ell_d(v), r_d(v) \mid d \in [b], v \in V\}$ in linear order. Define $w\in V^{2b|V|}$ by setting, for all $i \in [2b|V|]$, $w[i] \coloneqq v$ if there is a $d \in [b]$ such that $x_i \in \{ \ell_d(v), r_d(v)\}$. %Let $w \coloneqq \prod_{i \in [2b|v|]} w_i$. 
Note that for each $d \in [b]$ and $v \in V$, $\ell_d(v)=\position_v(2d-1,w)$ and $r_d(v)=\position_v(2d,w)$.
Therefore, for all $u,v\in V$, 
\begin{align*}
h_{u,v}(w) \in L_b^{\text{box}} 
& \Leftrightarrow \forall d\in [b]: \deletion{d}{d}{2b}(w) \in \langle 0101, 0110 \rangle \\
& \Leftrightarrow \forall d\in [b]: [\ell_d(u),r_d(u)] \cap [\ell_d(v),r_d(v)] \neq \emptyset \\
& \Leftrightarrow B_u \cap B_v \neq \emptyset \\
& \Leftrightarrow \{u,v\} \in E.
\end{align*}
Hence, $G(L_b^{\text{box}},w) = G$.
\item Let $G =(V,E)\in \mathcal{G}_{L_{b}^{\text{box}}}$, i.e. there exists a word $w \in V^*$ such that $G(L_{b}^{\text{box}},w) = G$. 
%Let $V_{2b}\coloneqq \{ v \in V \mid |w|_v = 2b\}$. A
By $2b$-uniformity, all $v \in \overline{V_{2b}(w)}$ are isolated vertices. 
For $v \in V_{2b}(w)$ and $d \in [b]$, define 
$\ell_d(v)\coloneqq\position_v(2d-1,w)$ and $r_d(v)\coloneqq\position_v(2d,w)$.
%$I_{d}(v) \coloneqq [\ell_d(v),r_d(v)]$, where $\ell_d(v)$ is the $(2d-1)$-th and $r_d(v)$ the $2d$-th occurrence of $v$ in~$w$ and 
Let $B_v \coloneqq [\ell_1(v),r_1(v)] \times \dots \times [\ell_b(v),r_b(v)]$. For each $v \in \overline{V_{2b}(w)}$, we choose a box $B_v$ that is disjoint from all other boxes. Clearly, $f$ is a $b$-dimensional box representation of~$G$. Proof details are similar to \myref{thm:lIntervalGraphs}.\qed 
\end{enumerate}
\end{proof}

\begin{corollary}\label{cor:bboxGraphs}
The class of  graphs of boxicity~$\leq b$ is factorial for each %fixed
$b\in\mathbb{N}_{\geq 1}$.
\end{corollary}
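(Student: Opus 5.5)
The corollary has two halves, an upper bound and a lower bound. I will combine \myref{thm:bboxGraphs} with \myref{thm:counting-Lfinite} for the upper bound, and use inclusion of interval graphs for the lower bound.

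\emph{Upper bound.} By \myref{thm:bboxGraphs}, the class of graphs of boxicity $\leq b$ equals $\cG_{L_b^{\text{box}}}$. Here $L_b^{\text{box}}$ is a set of $2b$-uniform binary words, so it consists of words of length exactly $4b$. Hence $|L_b^{\text{box}}|\leq 2^{4b}$, and $L_b^{\text{box}}$ is finite. The language is also $0$-$1$-symmetric. The defining condition asks that each $\deletion{k}{k}{2b}(w)$ lie in $\langle 0101,0110\rangle$. Complementing $w$ commutes with these deletion operators up to swapping the roles of $0$ and $1$, and because the same index $k$ is used for both letters the condition is preserved. The target set $\langle 0101,0110\rangle$ is itself closed under complement, so $\langle L_b^{\text{box}}\rangle=L_b^{\text{box}}$. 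Now \myref{thm:counting-Lfinite} applies directly and bounds the number of labeled graphs on $[n]$ of boxicity $\leq b$ by $2^{O(n\log n)}$. For fixed $b$ this is $n^{cn}$ for some constant $c$.

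\emph{Lower bound.} Every interval graph has boxicity $1\leq b$; equivalently, $L_1^{\text{box}}=\langle 0101,0110\rangle$ represents the interval graphs. So the class contains all interval graphs. Interval graphs are known to be factorial (as recalled in the examples section, referring to graphclasses.org), and therefore have at least $n^{c_1 n}$ labeled members on $[n]$. By the monotonicity of speed under inclusion noted in the introduction, the class of boxicity-$\leq b$ graphs is at least factorial. If one prefers a self-contained argument: permutation graphs have boxicity at most $2$, so they do not give containment for $b=1$; for $b=1$ I would instead count labeled interval graphs directly, for instance unit interval graphs or threshold-like families yielding $n!/\text{poly}$ distinct labeled graphs. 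Citing the known factorial speed of interval graphs suffices.

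The only step that needs any care is verifying that $L_b^{\text{box}}$ is $0$-$1$-symmetric, so that the hypothesis $\langle L\rangle$ of \myref{thm:counting-Lfinite} matches. Even this can be sidestepped. \myref{thm:counting-Lfinite} is stated for $\cG_{\langle L\rangle}$ with an arbitrary finite $L$, and $\langle L_b^{\text{box}}\rangle$ is finite in any case. The counting argument in its proof only uses finiteness and the maximum word length, here $4b$. Everything else is immediate.
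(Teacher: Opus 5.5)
Your proof is correct and follows the paper's route. Theorem~\myref{thm:bboxGraphs} supplies a finite ($2b$-uniform, hence all words of length $4b$), $0$-$1$-symmetric language; Theorem~\myref{thm:counting-Lfinite} gives the upper bound; and the inclusion of interval graphs (boxicity $1\le b$) gives the lower bound. Drop the aside claiming permutation graphs have boxicity at most~$2$: it is false, since the complement of a perfect matching $kK_2$ is a permutation graph (permutation graphs are closed under disjoint union and complementation) of boxicity~$k$ by \cite{Rob69}, but your argument never uses it.
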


\noindent
This result can also (independently) be concluded from \cite[Lemma 1]{ErdWes85}.

\longversion{\begin{remark}
Boxicity is only one example of a notion of dimensionality in the sense of Cozzens and Roberts, see \cite{CozRob89}. With similar constructions, we can also describe (for instance) graphs of bounded overlap dimension with a suitable language, based on \myref{exa:word-representability}, leading to $L_{b}^{\text{ovlp}} = \{w \in \{0,1\}^{2b\text{-uni}}\mid \forall k\in [b]: \deletion{k}{k}{2b}(w) \in \langle 0101\rangle\}$. As threshold graphs are interval graphs, the threshold dimension (also discussed, for instance, in~\cite{AdiBhoCha2010}) upper-bounds the boxicity of a graph. Hence, graphs of threshold dimension at most~$t$ have factorial speed.\footnote{This (classical) notion of threshold graph and threshold dimension should not be confused with the same notions introduced  in \cite{MolMurOel2020}; the context is metric dimension.} 
\end{remark}}

\subsection{Trapezoid Graphs and Variants}

In this section, we consider further variants of geometric intersection graphs\longversion{ that are related to trapezoid graphs, which is the class of graphs that we start with}.
\longversion{\subsubsection{Trapezoid and Point-Interval Graphs}}

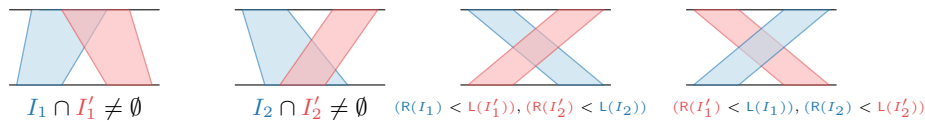
\begin{figure}[tb]
    \centering
    %\begin{subfigure}[t]
        %\centering
        \begin{tikzpicture}
            \draw[-] (0,0) -- (2,0); \draw[-] (0,-1) -- (2,-1);
            \filldraw[fill=LB,line width=0.5,draw=DB,opacity=0.6] (0.4,0) -- (1.3,0) -- (0.7,-1) -- (0.1,-1) -- (0.3,0);
            \filldraw[fill=LR,line width=0.5,draw=DR,opacity=0.6] (0.7,0) -- (1.6,0) -- (1.9,-1) -- (1.3,-1) -- (0.7,0);
            \node[] at (1,-1.3) {\small $\textcolor{DB}{I_1}\cap \textcolor{DR}{I_1'}\neq\emptyset$};

            \draw[-] (3,0) -- (5,0); \draw[-] (3,-1) -- (5,-1);
            \filldraw[fill=LB,line width=0.5,draw=DB,opacity=0.6] (3.1,0) -- (3.7,0) -- (4.5,-1) -- (3.4,-1) -- (3.1,0);
            \filldraw[fill=LR,line width=0.5,draw=DR,opacity=0.6] (4.3,0) -- (4.9,0) -- (4.2,-1) -- (3.6,-1) -- (4.3,0);
            \node[] at (4,-1.3) {\small $\textcolor{DB}{I_2}\cap \textcolor{DR}{I_2'}\neq\emptyset$};
            
            \draw[-] (6,0) -- (8,0); \draw[-] (6,-1) -- (8,-1);
            \filldraw[fill=LB,line width=0.5,draw=DB,opacity=0.6] (6.1,0) -- (6.7,0) -- (7.9,-1) -- (7.3,-1) -- (6.1,0);
            \filldraw[fill=LR,line width=0.5,draw=DR,opacity=0.6] (7.3,0) -- (7.9,0) -- (6.7,-1) -- (6.1,-1) -- (7.3,0);
            \node[] at (6.8,-1.3) {\tiny $\textcolor{DB}{(\rborder(I_1)}<\textcolor{DR}{\lborder(I_1'))}, \textcolor{DR}{(\rborder(I_2')}<\textcolor{DB}{\lborder(I_2))}$};

            \draw[-] (9,0) -- (11,0); \draw[-] (9,-1) -- (11,-1);
            \filldraw[fill=LR,line width=0.5,draw=DR,opacity=0.6] (9.1,0) -- (9.7,0) -- (10.9,-1) -- (10.3,-1) -- (9.1,0);
            \filldraw[fill=LB,line width=0.5,draw=DB,opacity=0.6] (10.3,0) -- (10.9,0) -- (9.7,-1) -- (9.1,-1) -- (10.3,0);
            \node[] at (10.5,-1.3) {\tiny $\textcolor{DR}{(\rborder(I_1')}<\textcolor{DB}{\lborder(I_1))}, \textcolor{DB}{(\rborder(I_2)}<\textcolor{DR}{\lborder(I_2'))}$};

        \end{tikzpicture}
    %\end{subfigure}
    \caption{Different patterns in the intersection model of two trapezoids. In  $L^{\text{trap}}$, the first two situations correspond to interval overlap and interval containment, expressed by projecting into $\langle 0101,0110\rangle$, while the last two correspond to $\langle 00111100\rangle$.}
    \label{fig:inter-trapezoid}
\end{figure}

\begin{thmrep}\applabel{thm:trapezoid}
%\shortversion{$(*)$}
For the 4-uniform language $L^{\text{trap}}$ defined as $$\{ w \in \{0,1\}^{4\text{-uni}} \mid \deletion{1}{1}{4}(w) \in \langle 0101, 0110 \rangle \vee \deletion{2}{2}{4}(w) \in \langle 0101, 0110 \rangle\} \cup \langle 00111100 \rangle\,,$$ $\cG_{L^{\text{trap}}}$ is the class of trapezoid graphs\shortversion{, which therefore has factorial speed}.
\end{thmrep}
\noindent
Different situations that $L^{\text{trap}}$ has to consider are shown in \myref{fig:inter-trapezoid}.

 \begin{proof}
Let $\cT$ be the set of trapezoids between two parallel lines. We can identify $\cT$ with $\intervals^2$.
%For each $T_1 \in \mathcal{T}$, let $U(T_1)$ be the interval on the upper line that describes the upper side of $T_1$ and let $L(T_1)$ be the interval on the lower line that describes the lower side of $T_1$. Notice that for any two trapezoids $T_1,T_2 \in \mathcal{T}$, $T_1 \cap T_2 \neq \emptyset$ \iffl $U(T_1) \cap U(T_2) \neq \emptyset$ or $L(T_1) \cap L(T_2) \neq \emptyset$ or $(\max U(T_1) < \min U(T_2) \wedge \max L(T_2) < \min L(T_1))$ or $(\max U(T_2) < \min U(T_1) \wedge \max L(T_1) < \min L(T_2))$.
Two trapezoids $\tau=(I_1,I_2)$ and $\tau'=(I_1',I_2')$ intersect \iffl $I_1\cap I_1'\neq\emptyset$ or $I_2\cap I_2'\neq \emptyset$ or $(\rborder(I_1)<\lborder(I_1'))\land (\rborder(I_2')<\lborder(I_2))$ or $(\rborder(I_1')<\lborder(I_1))\land (\rborder(I_2)<\lborder(I_2'))$. These different situations are illustrated in \myref{fig:inter-trapezoid}.
For each $d \in \mathbb{R}$, let $t_d: \cT \to \cT$ be such that, for any trapezoid $\tau=(I_1,I_2)$, % with $L(T_1) = [\ell_1, \ell_2]$, 
$t_d(\tau)=(I_1,[\lborder(I_2)+d,\rborder(I_2)+d])$.
%is the trapezoid described by $U(T_1)$ and $[\ell_1+d,\ell_2+d]$. 
Notice $(*)$ that for all $\tau_1,\tau_2 \in \cT$ and all $d \in \mathbb{R}$, $\tau_1$ and $\tau_2$ intersect \iffl 
$t_d(\tau_1)$ and $t_d(\tau_2)$ intersect.
%$T_1 \cap T_2 \neq \emptyset$ \iffl $t_d(T_1) \cap t_d(T_2) \neq \emptyset$.

\begin{enumerate}
\item Let $G$ be a trapezoid graph. There exists a set $T \subseteq \mathcal{T}$ such that $(T, \{\{\tau, \tau'\} \mid \tau, \tau' \text{ intersect}\,\}) \cong G$ and all the endpoints of the intervals describing the trapezoids in $T$ are distinct. Let $d\coloneqq \max_{(I_1,I_2)\in T}\rborder(I_1)+1$. Let $G_d$ be a trapezoid graph described by $T_d=\{t_d(\tau)\mid \tau \in T\}$. By $(*)$, $G\cong G_d$. 
%Choose a real number $d$ such that $d$ is greater than the maximum $m$ of the union of the intervals on the upper line of the trapezoids in $T$. Notice that $G$ is isomorphic to the graph $G_d = (T,E_d)$ described by $t_d(T)$ as well. 
Let $x_1 < \dots < x_{4|T|}$ be the endpoints of the intervals describing the trapezoids in $T_d$ in linear order; in other words, $x_i$ is the $x$-coordinate of a corner of a trapezoid from~$T_d$. Define $w\in V^{4|V|}$ by setting, for each $i \in [4|V|]$, $w[i]\coloneqq \tau$ if $x_i$ is an endpoint of an interval of the trapezoid~$\tau$. 
%Define $w \coloneqq \prod_{i \in [4|V|]} w_i$. 
For all $ \tau=(I_1,I_2),\tau'=(I_1',I_2')\in \cT$,
\begin{align*}
%\{T_1, T_2\} \in E_d 
\tau,\tau' \text{ intersect}
\Leftrightarrow \;  & (I_1\cap I_1'\neq \emptyset) \lor (I_2\cap I_2'\neq \emptyset)
%& U(T_1) \cap U(T_2) \neq \emptyset \\
%                & \vee L(T_1) \cap L(T_2) \neq \emptyset \\
\\&\lor ((\rborder(I_1)<\lborder(I_1'))\land (\rborder(I_2')<\lborder(I_2)))\\& \lor ((\rborder(I_1')<\lborder(I_1))\land (\rborder(I_2)<\lborder(I_2')))\\
%                & \vee (\max U(T_1) < \min U(T_2) \wedge \max L(T_2) < \min L(T_1)) \\
%                & \vee (\max U(T_2) < \min U(T_1) \wedge \max L(T_1) < \min L(T_2)) \\
\Leftrightarrow \;  & \deletion{1}{1}{4}(h_{\tau,\tau'}(w)) \in \langle 0101,0110\rangle  \lor \deletion{2}{2}{4}(h_{\tau,\tau'}(w)) \in \langle 0101,0110\rangle \\
                & \vee h_{\tau,\tau'}(w) = 00111100  \vee h_{\tau,\tau'}(w) = 11000011 \\
\Leftrightarrow \; & h_{\tau,\tau'}(w) \in L^{\text{trap}} 
\end{align*}
The second equivalence holds as all endpoints of $I_1$ and $I_1'$ are smaller than all endpoints of  $I_2$ and $I_2'$. Hence, $G \cong G_d \cong G(L^{\text{trap}},w)$.

\item Let $G = (V,E) \in \mathcal{G}_{L^{\text{trap}}}$, i.e., there exists a word $w \in V^*$ such that $G = G(L^{\text{trap}},w)$. %Let $V_4 = \{ v \in V \mid |w|_v = 4 \}$. 
For each $v \in V_4(w)$, let $v_1 < v_2 < v_3 < v_4$ be the indices of~$v$ in~$w$. Let $\tau_v=([v_1,v_2],[v_3,v_4])$.
%be the trapezoid with the interval $[u_1,u_2]$ on the upper line as the upper side and the interval $[\ell_1,\ell_2]$ on the lower line as the lower side of the trapezoid. 
For all $u \in \overline{V_4(w)}$, define trapezoids $\tau_u$ such that they do not intersect with any other trapezoid.
%are pairwise disjoint and disjoint to all the $\tau_v$ for $v \in V_4(w)$. 
We show analogously to the other inclusion that $\{\tau, \tau'\} \in E $ \iffl $h_{\tau,\tau'}(w) \in L^{\text{trap}} $. Hence, $G$ is the graph with the intersection model $\{\tau_v \mid v \in V\}\subset\cT$.\qed
\end{enumerate}
\end{proof}
\longversion{\begin{corollary}\label{cor:trapezoid}
The class of trapezoid graphs has factorial speed.
\end{corollary}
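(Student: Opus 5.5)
The corollary says that trapezoid graphs have factorial speed. We prove it by combining \myref{thm:trapezoid} and \myref{thm:counting-Lfinite} for the upper bound with a known inclusion for the lower bound.

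\textbf{Upper bound.} By \myref{thm:trapezoid}, the class of trapezoid graphs equals $\cG_{L^{\text{trap}}}$. The language $L^{\text{trap}}$ is finite, since it is a subset of the $4$-uniform binary words, all of which have length $8$. It is also $0$-$1$-symmetric. The part defined through $\deletion{1}{1}{4}$ and $\deletion{2}{2}{4}$ is symmetric because complementing $w$ complements the extracted factor and $\langle 0101,0110\rangle$ is closed under complement. The extra part $\langle 00111100\rangle$ is symmetric by definition. Hence $\langle L^{\text{trap}}\rangle = L^{\text{trap}}$, and \myref{thm:counting-Lfinite} gives at most $2^{O(n\log n)}$ labeled trapezoid graphs on $[n]$. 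So the class is at most factorial.

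\textbf{Lower bound.} It suffices to find a factorial subclass, for instance the interval graphs or the permutation graphs, whose factorial speed is recorded in the examples above.

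\begin{itemize}
\item \emph{Via interval graphs.} Every interval graph is a trapezoid graph: represent each vertex by the trapezoid $(I,I)$ with the same interval $I$ on both lines. Two such trapezoids intersect iff the intervals intersect. The two crossing cases cannot occur, since they would require $\rborder(I)<\lborder(I')$ together with $\rborder(I')<\lborder(I)$.
\item \emph{Via permutation graphs.} By the cited result of~\cite{BodKKM98}, permutation graphs are the $2$-trapezoid graphs in which all intervals have length~$1$.
\end{itemize}

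The only point needing any care is the standard fact that interval graphs (or permutation graphs) have speed at least $n^{c n}$. For this we take the known factorial speed from graphclasses.org, as the paper does. Monotonicity of speed under inclusion then gives a lower bound of the form $n^{c_1 n}$.

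Combining the two bounds, the class of trapezoid graphs is factorial.
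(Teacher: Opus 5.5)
Your proposal is correct and follows essentially the paper's route: the upper bound is exactly \myref{thm:trapezoid} combined with \myref{thm:counting-Lfinite}, and the lower bound comes from the inclusion of interval (or permutation) graphs, whose factorial speed the paper likewise takes as known. Your explicit checks that $L^{\text{trap}}$ is finite and $0$-$1$-symmetric, and that the trapezoids $(I,I)$ realize every interval graph, spell out details the paper leaves implicit.
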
}\shortversion{\noindent}
The point-interval graph classes $\textsc{PI}$ and $\textsc{PI}^*$  are well-known subclasses of trapezoid graphs.
\begin{toappendix}
More precisely, let $\cT$ be the set of triangles between two parallel lines $L$ and $M$ such that each $T \in \mathcal{T}$ has one corner $a(T)$ on $L$ and two corners $b(T)$ and $c(T)$ with $b(T) < c(T)$ on $M$.
The intersection model of these triangles defines the class \hypertarget{def:PI}$\textsc{PI}$. They are also called \emph{point-interval graphs} (which explains their abbreviations) because vertices are specified by one point on~$L$ and an interval on~$M$. If we allow, in addition, triangles with one point on $M$ and an interval on~$L$, then we arrive at the (larger) graph class $\textsc{PI}^*$. Clearly, both forms of triangles can be viewed as extreme cases of  trapezoids. This observation immediately entails a speed result for these graph classes.
\end{toappendix}
%Therefore, freely using the terminology from graphclasses.org, we can state:

\begin{corollary}\label{cor:PI}
The graph classes $\textsc{PI}$ and $\textsc{PI}^*$ have factorial speed.
\end{corollary}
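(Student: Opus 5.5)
The proof needs an upper bound and a lower bound on the speed.

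For the upper bound, I will show that $\textsc{PI}\subseteq\textsc{PI}^*$ are both subclasses of the trapezoid graphs. By \myref{thm:trapezoid}, the trapezoid graphs equal $\cG_{L^{\text{trap}}}$ for a finite language, so \myref{thm:counting-Lfinite} applies. To get the inclusion, take a $\textsc{PI}^*$ representation. Each triangle has one point on one line and an interval on the other. Replace the point $p$ by a degenerate interval $[p,p]$; this turns each triangle into a (degenerate) trapezoid $(I_1,I_2)$ with the same point set. The intersection graph is therefore unchanged, and the trapezoid intersection criterion used in the proof of \myref{thm:trapezoid} still applies verbatim. If one insists on non-degenerate intervals, replace $[p,p]$ by $[p,p+\varepsilon]$ for small $\varepsilon$. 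This preserves all strict order relations between endpoints, provided $\varepsilon$ is smaller than the minimum gap between distinct endpoints. After perturbing so that all endpoints are distinct, it therefore preserves all intersections and non-intersections. Hence every $\textsc{PI}^*$ graph is a trapezoid graph and has speed at most $2^{O(n\log n)}$.

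For the lower bound, I will exhibit a known factorial subclass of $\textsc{PI}$. The natural choice is the permutation graphs. A permutation graph is given by segments joining point $a_v$ on $L$ to point $b_v$ on $M$. Widening the bottom point into a tiny interval $[b_v,b_v+\varepsilon]$ gives a thin triangle in $\textsc{PI}$. For small $\varepsilon$ and distinct endpoints, two such triangles intersect exactly when the segments cross, since disjoint segments remain separated by a positive distance. So permutation graphs $\subseteq \textsc{PI}\subseteq\textsc{PI}^*$. Alternatively, interval graphs embed: put all apexes at one far-away point of $L$. Then two triangles sharing the apex region meet only if... that fails, since a common apex makes all triangles intersect. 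So I stay with permutation graphs. Permutation graphs are factorial, as the paper already recalls.

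The step needing most care is the degenerate/perturbation argument in the upper bound. The point is that, with distinct endpoints, the intersection of two convex polygons between the lines is determined solely by the order pattern of endpoints on each line. A small enough perturbation keeps that pattern, so intersection status is preserved in both directions.
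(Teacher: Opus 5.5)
Your proposal is correct and follows the paper's route: the upper bound comes from viewing both kinds of triangles as degenerate trapezoids, so $\textsc{PI}\subseteq\textsc{PI}^*$ lies inside the trapezoid graphs, which are at most factorial by \myref{thm:trapezoid} and \myref{thm:counting-Lfinite}. The paper leaves the lower bound implicit (containment of known factorial classes such as permutation graphs), and your explicit embedding of permutation graphs into $\textsc{PI}$ via thin triangles supplies it correctly.
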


We also have representations for these triangle (point-interval) intersection graph classes, which\longversion{ would yield the previous result as well and also} shows the versatility of our formalism. We show this for \textsc{PI}.

\begin{thmrep}\applabel{thm:PI-representation}
%Let $\delta_:\{0,1\}^{3\text{-uni}}\to \{0,1\}^{2\text{-uni}}$ be the function that deletes the first occurrence of~0 and the first occurrence of~1.
%%all but the 2nd and the 3rd occurrence of 0 and 1, respectively. 
%For the 3-uniform language $L^{\textsc{PI}} = \{ w \in 0^3 \shuffle 1^3 \mid \delta(w) \in \langle 0101, 0110 \rangle\} \cup \langle 011100 \rangle$, $\cG_{L^{\textsc{PI}}}$ is the class $\textsc{PI}$.
%\todohf{Easier to write this explicitly?}
%$L^{\textsc{PI}} =\langle 010101,100101,001101,010110,100110,001110, 011100 \rangle$.
%\\
%OR:\\
$L^{\textsc{PI}} \coloneqq\langle 001101,001110,010101,010110,011001,011010,011100 \rangle$ is a 3-uniform language such that $\cG_{L^{\textsc{PI}}}$ is the class $\textsc{PI}$.
\end{thmrep}

\begin{proof}
\begin{enumerate}
\item Let $\cT$ be the set of triangles between two parallel lines $L$ and $M$ such that each $T \in \mathcal{T}$ has one corner $a(T)$ on $L$ and two corners $b(T)$ and $c(T)$ with $b(T) < c(T)$ on $M$. We denote $T$ as $\Delta(a(T),b(T),c(T))$, and in this way, we can identify $\cT$ with $\mathbb{R}^3$. For each $d \in \mathbb{R}$, we define $t_d: \mathcal{T} \to \mathcal{T}$ with $t_d(T) = (a(T), b(T) + d, c(T) + d)$. Note that for each $T_1, T_2 \in \mathcal{T}$, $T_1 \cap T_2 \neq \emptyset$ \iffl $t_d(T_1) \cap t_d(T_2) \neq \emptyset$. 
Let $G = (V,E)$ be a \textsc{PI} graph described by $\cU \subset \cT$. W.l.o.g., assume that all the $x$-coordinates of the corners of the triangles are distinct. Let $m \coloneqq \text{max}_{T \in \mathcal{U}} a(T)+1$. 
Note that $\mathcal{U}_m \coloneqq \{t_m(T) \mid T \in \mathcal{U}\}$ describes $G$ as well. 
Let $x_1 < x_2 < \dots < x_{3|\mathcal{U}|}$ be the $x$-coordinates of the  corners of the triangles in $\mathcal{U}_m$. Define
$w\in V^{3|\cU|}$ by setting, for $i\in [3|\mathcal{U}|]$, 
%Let $w \coloneqq \Pi_{i \in [3|\mathcal{U}|]} w_i$ where 
$w_i \coloneqq T$ if $x_i$ is a corner of~$T$. 
For $T_1, T_2 \in \mathcal{U}_m$, $T_1 \cap T_2 \neq \emptyset$ \iffl $[b(T_1),c(T_1)] \cap [b(T_2),c(T_2)] \neq \emptyset$ or $a(T_1) < a(T_2) < b(T_2) < c(T_2) < b(T_1) < c(T_1)$ or $a(T_2) < a(T_1) < b(T_1) < c(T_1) < b(T_2) < c(T_2)$. 
Clearly, $T_1 \cap T_2 \neq \emptyset$ \iffl $h_{T_1,T_2}(w) \in L$. Hence, $G = G(L,w)$.
\item Let $G \in \mathcal{G}_L$, i.e. there exists a word $w \in V(G)^*$ such that $G(L^{\textsc{PI}},w) = G$. %Let $V_{3} \coloneqq \{ v \in V \mid |w|_v = 3 \}$. 
For each $v \in \overline{V_3}$, define $a_v,b_v,c_v > |w|$ such that the triangles $T_v  \coloneqq \Delta(a_v,b_v,c_v)$ are pairwise disjoint. 
For $v \in V_3$, let $T_v \coloneqq \Delta(\position_v(1,w), \position_v(2,w), \position_v(3,w))$. Clearly for $u,v \in V_2$, $T_v \cap \Delta_u \neq \emptyset$ iff $h_{u,v}(w) \in L^{\textsc{PI}}$ because all triangles $T_v$ have the corner points $a_v < b_v < c_v$, with all $a_v$ drawn on one line~$L$ and $b_v$, $c_v$ drawn on a parallel line~$M$. 
\qed
\end{enumerate}
\end{proof}

\begin{toappendix}
An alternative geometric reasoning is based on the idea that triangles can be viewed as a very special form of trapezoids, where the endpoints of one interval are identical (or at least very close to another).
Consider $w\in \{0,1\}^{\text{3-uni}}$. Apply to $w$ the operation $\iota:\{0,1\}^{\text{3-uni}}\to \{0,1\}^{\text{4-uni}}$ that replaces the first occurrence of~0 by $00$ and the first occurrence of~1 by $11$. Then, $w\in L^{\textsc{PI}}$ \iffl $\iota(w)\in L^{\text{trap}}$. Now, it is easy to check that 
%$\iota(011100)=00111100$, $\iota(010101)=00110101$, $\iota(100101)=11000101$, $\iota(0011$
\begin{align*}\iota(\{001101,001110,010101,010110,011001,011010,011100\}=\\\{00011101,00011110,00110101,00111001,00110110,00111010,00111100\}\subset L^{\text{trap}}\,.
\end{align*}
\end{toappendix}

\longversion{
\begin{theorem}\label{thm:PIstar-representation}
%Let $d$ be the map that deletes everything but the last three occurrences of $0$ and the last three occurrences of $1$.
$L^{\textsc{PI}^*} \coloneqq L^{\textsc{PI}} \cup \{ w \in \{ 0,1\}^{4-\text{uni}} \mid (d_0(d_1(w)))^R \in L^{\textsc{PI}}\} \cup  \langle \{w \in \{ 0,1\}^* \mid |w|_0 = 3 \wedge |w|_1 = 4 \wedge d_1(w) \notin (0 \cdot (00 \shuffle 11) \cdot 1) \cup 11 (0 \shuffle 1) \cdot 00)\} \rangle$ 
is a finite language such that $\cG_{L^{\textsc{PI}^*}}$ is the class $\textsc{PI}^*$.
\end{theorem}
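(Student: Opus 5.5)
We will prove both inclusions by reusing the proof of \myref{thm:PI-representation}. The additional idea is that a vertex of the second type, a triangle with its apex on~$M$ and its base interval on~$L$, is encoded by a letter occurring exactly four times. The first occurrence is a dummy, and the last three occurrences carry the geometry. Letters occurring three times then encode \textsc{PI}-triangles, letters occurring four times encode reversed triangles, and all other letters are isolated. We check that $L^{\textsc{PI}^*}$ treats each kind of pair correctly. A pair of two 3-letters is handled by $L^{\textsc{PI}}$. A pair of two 4-letters is handled by the second set. A mixed $(3,4)$-pair is handled by the third set, in which the $4$-letter is named~$1$ after symmetrisation. Pairs with any other occurrence counts lie in none of the three sets, since the first two sets contain only uniform words and the third only words with counts $3$ and~$4$. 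So such pairs give no edge.

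For ``$\textsc{PI}^*\subseteq\cG_{L^{\textsc{PI}^*}}$'', take a $\textsc{PI}^*$ model with distinct coordinates. As in \myref{thm:PI-representation}, we use the translation $t_d$, moving line~$M$ so far to the right that every $M$-coordinate exceeds every $L$-coordinate; intersections are preserved. Reading corners from left to right yields a word in which each triangle contributes three letters. We then prepend, at the very beginning of the word, one dummy occurrence of every reversed triangle. Then $d_1$ (or $d_0$) deletes exactly these dummies. For two reversed triangles we apply the point reflection of the strip, which swaps $L$ and~$M$ and reverses the $x$-order. This maps reversed triangles to ordinary \textsc{PI}-triangles and preserves intersection. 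After the translation trick, the corner word of the reflected pair is the reversal of the original word, so this case reduces to \myref{thm:PI-representation}, which is exactly why the condition is $(d_0(d_1(w)))^R\in L^{\textsc{PI}}$. One point must be checked here: after reflection, $L$-coordinates must still precede $M$-coordinates. We ensure this by applying a second translation after reflecting, which does not change the intersection pattern.

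The main obstacle is the mixed case. Let $T=\Delta(a,b,c)$ have $a$ on~$L$ and $b<c$ on~$M$, and let $T'$ have the interval $[b',c']$ on~$L$ and the apex $a'$ on~$M$. In the word with the dummy removed, the $0$'s are $a,b,c$ and the $1$'s are $b',c',a'$. The two triangles are disjoint if and only if one lies entirely to the left of the other on both lines, since both are convex regions between the two lines. There are two such cases. The first is $a<b'$ on~$L$ and $c<a'$ on~$M$. Using the ordering ``$L$ before $M$'', this gives the pattern $0\,(00\shuffle 11)\,1$. The second is $c'<a$ on~$L$ and $a'<b$ on~$M$, giving $11\,(0\shuffle 1)\,00$. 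Every other relative order means intersection, which matches the complement condition in the third set. For the converse inclusion, ``$\cG_{L^{\textsc{PI}^*}}\subseteq\textsc{PI}^*$'', take any word. For a 3-letter we build $\Delta$ from its three positions, for a 4-letter we build a reversed triangle from its last three positions, and other letters become far-away disjoint triangles. The same three case analyses, read backwards, show that the edges coincide. For pairs involving a 4-letter we use that the language depends only on the word after its first occurrence has been deleted.
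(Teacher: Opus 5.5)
Your proposal is correct and follows essentially the same route as the paper: a dummy first occurrence for each reversed triangle, a split into $(3,3)$-, $(4,4)$- and mixed $(3,4)$-pairs, reduction of the first two cases to \myref{thm:PI-representation}, and the ``left of the other on both lines'' analysis for the mixed case, producing $0\cdot(00\shuffle 11)\cdot 1$ and $11\cdot(0\shuffle 1)\cdot 00$; your point reflection makes explicit why the reversal appears, which the paper leaves implicit. Two cosmetic remarks: the second translation after reflecting is unnecessary, because negating coordinates already puts all new $L$-corners before all new $M$-corners; and in the converse you should note that the mixed-case patterns do not depend on the ``$L$ before $M$'' ordering, since $a<b'$ just says that the word (after deleting the dummy) starts with $0$ and $c<a'$ that it ends with $1$, with the second pattern handled the same way.
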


Notice that we employ here the deletion operator $d_\ta(w)$ that deletes the first occurrence of~$\ta$ in~$w$ in our application, as we make sure that $|w|_\ta>0$.

\begin{proof}
Let $G = (V,E)$ be a $\textsc{PI}^*$ graph. Let $\Delta = (\Delta_v)_{v \in V}$ be an intersection model of $G$ such that for each $v \in V$, $\Delta_v$ is a triangle with one or two corners on a line $L$ and the other corner(s) on a parallel line $M$. W.l.o.g., assume that for each triangle $\Delta$ the corner(s) on $L$ are to the left of the corner(s) on $M$ and the positions of all corners are distinct. Let $p_1 < \dots < p_{3|V|}$ be the positions of the corners of the triangles from $\Delta$. Let $w \in V^{\text{3-uni}}$ such that $w[i] = v$ if $p_i$ is a corner of $\Delta_i$ for all $i \in [3|V|]$. Let $u \in V^*$ be such that for each $v \in V$, $|u|_v = 0$ if $\Delta_v$ is a triangle with one corner on $L$ and $|u|_v = 1$ if $\Delta_v$ is a triangle with two corners on $L$. We want to show that $G(L^{\textsc{PI}^*}, uw) = G$. Let $x,y \in V$. 
\begin{itemize}
    \item Case 1: $\Delta_x$ and $\Delta_y$ each have one corner on $L$. Hence, $h_{x,y}(uw) = h_{x,y}(w) \in \{0,1\}^{\text{3-uni}}$. We have to show that $\{x,y\} \in E$ \iffl $h_{x,y}(uw) \in L^{\textsc{PI}}$. This follows from \autoref{thm:PI-representation}.
    \item Case 2: $\Delta_x$ and $\Delta_y$ each have two corners on $L$. Hence, $h_{x,y}(uw) \in \{0,1\}^{\text{1-uni}} \cdot h_{x,y}(w) \subseteq \{0,1\}^{\text{4-uni}}$. We show that $\{x,y\} \in E$ \iffl $h_{x,y}(w)^R \in L^{\textsc{PI}}$. This follows from \autoref{thm:PI-representation}.
    \item Case 3: $\Delta_x$ has one corner on $L$ and $\Delta_y$ has two corners on $L$. Hence, $h_{x,y}(uw) = 1 \cdot h_{x,y}(w) \in 1 \cdot \{0,1\}^{\text{3-uni}}$. Clearly, $\{x,y\} \notin E$ \iffl $\Delta_x$ and $\Delta_y$ are disjoint. 
    \begin{itemize}
        \item Case 3.1: The leftmost corner of $\Delta_x$ and $\Delta_y$ belongs to $\Delta_x$. $\Delta_x$ and $\Delta_y$ are disjoint \iffl the corner of $\Delta_x$ on $L$ is to the left of all other corners and the corner of $\Delta_y$ on $M$ is to the right of all other corners \iffl $d_1(h_{x,y}(w)) \in (0 \cdot (00 \shuffle 11) \cdot 1)$. 
        \item Case 3.2: The leftmost corner of $\Delta_x$ and $\Delta_y$ belongs to $\Delta_y$. $\Delta_x$ and $\Delta_y$ are disjoint \iffl the corner of $\Delta_x$ on $L$ is to the right of the second corner of $\Delta_y$ on $L$ and the corner of $\Delta_y$ on $M$ is to the left of the first corner of $\Delta_x$ on $M$ \iffl $d_1(h_{x,y}(w)) \in (11 \cdot (0 \shuffle 1) \cdot 00)$.
    \end{itemize}
        Therefore, $\Delta_x$ and $\Delta_y$ are disjoint \iffl $h_{x,y}(w) \notin L^{\textsc{PI}^*}$.
    \item Case 4: $\Delta_x$ has two corners on $L$ and $\Delta_y$ has one corner on~$L$. This follows analogously to Case 3. 
\end{itemize}

For the converse direction, let $G = G(L^{\textsc{PI}},w)$ for some word $w$. Let $V_{3,4} := \{ \ta \in \alphabet(w) \mid 3 \leq |w|_\ta \leq 4\}$. Let $v$ be the word that we get by projecting $w$ to $V_{3,4}$. We will show that $G(L,v)$ is a $\textsc{PI}^*$ graph. This is sufficient because the class of $\textsc{PI}^*$ graphs is closed under adding isolated vertices. 

For each $\ta \in \alphabet(v)$ such that $|v|_\ta = 3$, let $\Delta_\ta$ be the triangle with one corner on $L$ at position $\position_\ta(1,w)$ and two corners on $M$ at the positions $\position_\ta(2,w)$ and $\position_\ta(3,w)$. 
For each $\ta \in \alphabet(v)$ such that $|v|_\ta = 4$, let $\Delta_\ta$ be the triangle with two corners on $L$ at the positions $\position_\ta(2,w)$ and  $\position_\ta(3,w)$ and one corner on $M$ at position $\position_\ta(4,w)$.
We want to show that $G(L^{\textsc{PI}},v)$ is the $\textsc{PI}^*$ graph with the intersection model $(\Delta_\ta)_{\ta \in \alphabet(v)}$. We can show that $h_{x,y}(v) \in L^{\textsc{PI}}$ \iffl $\Delta_x \cap \Delta_y \neq \emptyset$ with a case distinction analogous to the other direction. \qed
\end{proof}
}

%\todoscs{I believe for $\textsc{PI}^*$ we might need the prefix normal form. HF: This would not matter, as it will go into the appendix anyways.}

\noindent
We are now going to generalize \myref{thm:trapezoid} to a multi-dimensional scenario. %\todo{There seem to be problems with Cref and thmrep; hence, I introduced the macro myref}
%Let $L_1, \dots, L_d$ be $d$ parallel lines in the plane. A \emph{$d$-trapezoid} is the polygon obtained by choosing an interval $I_i$ on every line $L_i$ and connecting the left, resp. right endpoint of $I_i$ with the left, resp. right endpoint of $I_{i+1}$. A graph is a $d$-trapezoid graph if it has an intersection model consisting of $d$-trapezoids between $d$ parallel lines.

%\begin{definition}
%For every $\ell\in\mathbb{N}_{\geq 1}$ and  
%$k,m,k',m' \in [\ell]$ such that $k \neq k'$ and $m \neq m'$, let 
%$\delta^{2\ell}_{k,m,k',m'}:\{0,1\}^{2\ell\text{-uni}}\to \{0,1\}^{4\text{-uni}}$
%be the function that deletes any but the $\nth{(2k-1)}$, $\nth{(2k)}$, $\nth{(2k'-1)}$ and $\nth{(2k')}$ occurrence of~$0$ and any but the $\nth{(2m-1)}$, $\nth{(2m)}$, $\nth{(2m'-1)}$ and $\nth{(2m')}$ occurrence of $1$ from~$w$. 
%\end{definition}
%\todoscs{Do we want to simplify this or do we want to reuse this somewhere else?}

\begin{theorem} \label{thm:d-trapezoid-representation}
For every $d\in\mathbb{N}_{\geq 1}$, the finite language $L^{\text{trap}}_d = \{ w \in \{0,1\}^{2d\text{-uni}} \mid \exists 1 \leq i \leq d: \deletion{i}{i}{2d}(w) \in \langle 0101,0110 \rangle \vee \exists 1 \leq i \leq d-1: \deletion{i}{i}{2d\to 4}
%\delta^{2d}_{i,i,i+1,i+1} 
\in \langle 00111100 \rangle \}$ 
is $2d$-\linebreak[4]uniform\shortversion{;}\longversion{, and} $\mathcal{G}_{L^{\text{trap}}_d}$ is the class of $d$-trapezoid graphs\shortversion{ and hence has factorial speed}.
\end{theorem}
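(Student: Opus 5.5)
The plan follows the pattern of \myref{thm:trapezoid}, replacing two lines by $d$ lines. Uniformity is immediate: every word in $L^{\text{trap}}_d$ is taken from $\{0,1\}^{2d\text{-uni}}$. (Here $\deletion{i}{i}{2d\to 4}$ is applied to~$w$ and keeps the occurrences $2i-1,\dots,2i+2$ of each letter, i.e., the intervals on lines $i$ and $i+1$.) The core of the proof is a geometric lemma describing when two $d$-trapezoids $\tau=(I_1,\dots,I_d)$ and $\tau'=(I_1',\dots,I_d')$ intersect. They intersect \iffl either $I_i\cap I_i'\neq\emptyset$ for some $i\in[d]$, or there is some $i\in[d-1]$ such that $I_i$ lies strictly left of $I_i'$ and $I_{i+1}$ lies strictly right of $I_{i+1}'$ (or vice versa).

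To prove the lemma, restrict to the strip between lines $i$ and $i+1$. There each $d$-trapezoid is an ordinary trapezoid spanned by $I_i$ and $I_{i+1}$, and the union over the strips is the whole polygon. Hence $\tau\cap\tau'\neq\emptyset$ \iffl some strip contains a common point, and by the case analysis in the proof of \myref{thm:trapezoid}, this happens \iffl the two intervals meet on line~$i$ or on line~$i+1$, or they swap order between the two lines. For $d=1$ only the first alternative is available, which matches the interval-graph case.

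For the forward inclusion, take a $d$-trapezoid model with all endpoints distinct. Using the translation trick $t_d$ from \myref{thm:trapezoid} repeatedly, shift line $i+1$ so that all of its endpoints lie to the right of all endpoints on line~$i$; by the argument $(*)$ there, horizontal shifts of a single line preserve intersections. Reading off the endpoints from left to right gives a $2d$-uniform word~$w$. In~$w$, occurrences $2i-1$ and $2i$ of~$v$ are the endpoints of $v$'s interval on line~$i$. So $\deletion{i}{i}{2d}(h_{u,v}(w))\in\langle 0101,0110\rangle$ expresses $I_i\cap I_i'\neq\emptyset$. Likewise, $\deletion{i}{i}{2d\to 4}(h_{u,v}(w))=00111100$ (or its complement) expresses the swap pattern, because the line-$i$ endpoints all precede the line-$(i+1)$ endpoints. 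The lemma then gives $G=G(L^{\text{trap}}_d,w)$.

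For the reverse inclusion, delete from~$w$ all letters that do not occur exactly $2d$ times. These are isolated vertices, which are realized by trapezoids placed far to the right on every line. For each remaining~$v$, define $I_i(v)=[\position_v(2i-1,w),\position_v(2i,w)]$ on line~$i$. The main obstacle is that, unlike in the forward direction, the intervals on line $i+1$ need not lie to the right of those on line~$i$, so the pattern $00111100$ does not literally encode order on separate lines. I would handle this by observing that only the relative order of endpoints on each single line matters. So I would choose real coordinates on each line that preserve the induced order of positions of~$w$ restricted to that line, which is the identity. Then I would check that "$I_i$ left of $I_i'$ and $I_{i+1}$ right of $I_{i+1}'$" is exactly what the word $00111100$ after $\deletion{i}{i}{2d\to4}$ states. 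In that word the first two $0$s (line~$i$) precede the first two $1$s (line~$i$), and the last two $1$s (line $i+1$) precede the last two $0$s (line $i+1$). Cross-line comparisons never enter the lemma. With this check and the lemma, $h_{u,v}(w)\in L^{\text{trap}}_d$ \iffl $\tau_u\cap\tau_v\neq\emptyset$. Factorial speed then follows from \myref{thm:counting-Lfinite}, together with the lower bound from interval graphs, which are exactly the $1$-trapezoid graphs.
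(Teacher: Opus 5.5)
Your proposal is correct and follows the paper's approach. The paper's sketch likewise rests on the observation that two $d$-trapezoids meet \iffl the ordinary trapezoids spanned by some consecutive pair of lines $i,i+1$ meet, and then repeats the proof of \myref{thm:trapezoid} pair by pair; your resolution of the reverse-direction worry is also sound, because the within-letter order of occurrences forces the cross-line comparisons, so $\deletion{i}{i}{2d\to 4}(h_{u,v}(w))=00111100$ is equivalent to the two single-line conditions $\position_u(2i,w)<\position_v(2i-1,w)$ and $\position_v(2i+2,w)<\position_u(2i+1,w)$.
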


\begin{proof}[sketch]
Let $T,T'$ be two $d$-trapezoids and for $i \in [d]$, let $I_i$ be the interval of $T$ and $I_i'$ be the interval of $T'$ on the $\nth{i}$ line. Note that $T \cap T'\neq \emptyset$ \iffl there exists $1 \leq i \leq d -1$ such that the trapezoid described by $I_i$ and $I_{i+1}$ and the trapezoid described by $I'_i$ and $I'_{i+1}$ intersect. With this idea, the theorem can be proven analogously to \myref{thm:trapezoid} which handles the case $d=2$.\qed
\end{proof}

\longversion{\begin{corollary}
\label{cor:d-trapezoid-speed}    
For every $d\in\mathbb{N}_{\geq 1}$, the class of $d$-trapezoid graphs  has factorial speed.
\end{corollary}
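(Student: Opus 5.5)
The corollary follows from \myref{thm:d-trapezoid-representation} together with \myref{thm:counting-Lfinite}, plus a lower bound from a known factorial subclass.

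\emph{Upper bound.} By \myref{thm:d-trapezoid-representation}, the class of $d$-trapezoid graphs equals $\cG_{L^{\text{trap}}_d}$, and $L^{\text{trap}}_d$ is $2d$-uniform. Hence $L^{\text{trap}}_d$ is a subset of $0^{2d}\shuffle 1^{2d}$, so it is finite. It is also $0$-$1$-symmetric, because both defining conditions are phrased through $\langle\cdot\rangle$ and swapping the roles of $0$ and $1$ commutes with the deletion operators. Therefore $\langle L^{\text{trap}}_d\rangle = L^{\text{trap}}_d$, and \myref{thm:counting-Lfinite} gives at most $2^{O(n\log n)}$ labeled $d$-trapezoid graphs on $[n]$. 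So the class is at most factorial.

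\emph{Lower bound.} We need a factorial subclass, and the natural candidate is the interval graphs, which are known to have factorial speed. Given an interval model $(I_v)_{v}$, place the interval $I_v$ on every one of the $d$ lines. The resulting $d$-trapezoids $T_v$ and $T_u$ intersect iff $I_u\cap I_v\neq\emptyset$. Indeed, if the intervals meet, the trapezoids share points on each line. If they are disjoint, say $I_u$ lies entirely left of $I_v$ on every line, then a vertical line separates the two convex polygons.

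The case $d=1$ is trivial, since the class is then exactly the interval graphs. For $d\ge 2$ the embedding above works, and alternatively one can invoke the permutation graphs as $2$-trapezoid graphs, using the same "copy to all lines" idea. Either way we get at least $n^{c_1 n}$ labeled graphs, so the class is factorial.

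The only step needing care is checking that $L^{\text{trap}}_d$ really is $0$-$1$-symmetric, so that \myref{thm:counting-Lfinite} applies as stated. Even if it were not, the bound from \myref{thm:counting-Lfinite} still applies to $\cG_{\langle L^{\text{trap}}_d\rangle}$, and $\cG_{L^{\text{trap}}_d}$ is contained in that class.
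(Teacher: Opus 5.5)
Your proposal is correct and follows the paper's route. The upper bound is exactly \myref{thm:d-trapezoid-representation} (a finite, $2d$-uniform, $0$-$1$-symmetric language) fed into \myref{thm:counting-Lfinite}, and the lower bound comes, as in the paper, from containing a known factorial class; your copy-the-interval-to-every-line construction makes this explicit for interval graphs. Your closing fallback remark is unnecessary and not quite meaningful, since $\cG_L$ is only defined for $0$-$1$-symmetric $L$. Your observation that $\delta_{i,i}$ commutes with complementation already settles that point.
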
}

\longversion{\begin{remark}
Golumbic, Rotem and Urrutia~\cite{GolRotUrr83} have generalized permutation graphs by associating to every vertex a sequence of $d+1$ points, each on one of $d+1$ subsequent parallel lines $L_0,L_1,\dots,L_d$; these $d+1$ points define a piecewise linear curve. Two vertices are adjacent \iffl their curves intersect.
They showed that a graph~$G$ can be represented by an intersection model of curves with $d$ linear pieces \iffl its complement is a comparability graph  of a poset of dimension~$d$. Clearly there is a finite language representation for such a class. Also, as this intersection model can be seen as a ``very slim'' $(d+1)$-dimensional trapezoids, we get at most factorial speed for these classes immediately from \myref{cor:d-trapezoid-speed}.  
\end{remark}}

\begin{toappendix}
\shortversion{\paragraph{Further related graph classes.}}\longversion{\subsubsection{Generalizing Circle  Trapezoids}}
A \emph{circle trapezoid} is the region that lies between two non-crossing chords of a circle. So if $ab$ and $cd$ are non-crossing chords, with their endpoints in the order $abcd$
 on the circle, the boundary of the circle trapezoid $abcd$
 consists of the chord $ab$, the arc $bc$, the chord $cd$ and the arc $da$. Alternatively, a circle trapezoid is the convex hull of two disjoint arcs on the circle.
A graph is a \emph{circle-trapezoid graph} \cite{Lin2006} if it is the intersection graph of circle trapezoids on a common circle. These graphs naturally generalize trapezoid graphs. For general remarks about relations between linear and circular models, we refer to~\cite{KraKloMul97}.
Circle trapezoids are  the same as circle-2-gon graphs.
\end{toappendix}
%\todo{definitions of n-gon  are added at the beginning}
\shortversion{\noindent Circle-trapezoid ($=$ circle-2-gon) graphs generalize trapezoid graphs.}\longversion{They have been further generalized to circle-$k$-gon graphs. Another graph class that can be generalized to the circle-$k$-gon graphs are the $k$-gon-circle graphs.\longversion{ They are also interesting, because 3-gon-circle graphs clearly encompass $\textsc{PI}^*$. Also, it was shown in \cite{EnrKit2019} that the only interesting graph class that is both polygon-circle and word-representable (in the classical sense) are the circle graphs, which can be also viewed as 2-gon-circle graphs and as circle-1-gon graphs.}

\begin{thmrep} \applabel{thm:k-gon-circle-representation}
For every $k \geq 2$, let
$$
L^{\text{gon-c}}_k \coloneqq  \{w\in \{ 0,1\}^* \mid 2 \leq |w|_0, |w|_1  \leq k\} \setminus \langle 0^* 1^* 0^* \rangle.
$$
$\mathcal{G}_{L^{\text{gon-c}}_k}$ is the class of $k$-gon-circle graphs. 
\end{thmrep}

\begin{proof}
Let $c$ be a fixed point on the line of a circle and $G$ be a $k$-gon-circle graph described by convex polygons $p_1, \dots, p_n$ whose corners are on the circle line. W.l.o.g., assume that the corners of the polygons are distinct. Let $0 < c_1 < c_2 < \dots < c_\ell \leq 2 \pi$ be the positions of the corners of $p_1, \dots, p_n$ relative to $c$ with $c$ at position $0$. We call $w = p(c_1) \cdots p(c_\ell)$ with $p(c_i) = p_j$ if $c_i$ is a corner of $p_j$ for $i \in [\ell]$ and $j \in [n]$ the \emph{corner string} of the polygons. Note that $w \in \{\{ 0,1\}^* \mid 2 \leq |w|_0, |w|_1  \leq k\}$. For $p_i,p_j$ with $i,j \in [n]$, $p_i$ and $p_j$ intersect iff $w$ projected to $p_i$ and $p_j$ contains the subsequence $p_i p_j p_i p_j$ or $p_j p_i p_j p_i$. This is the case iff $h_{p_i,p_j}(w) \notin \langle 0^* 1^* 0^* \rangle $. Therefore, $G = G(L^{\text{gon-c}}_k,w)$.

Let $\mathcal{G}_{L^{\text{gon-c}}_k}$, i.e., there exists a word $w \in V(G)^*$ such that $G(L^{\text{gon-c}}_k,w) = G$. For each $\ta \in V(G)^*$ such that $|w|_\ta \leq k$, let $p_\ta$ be the polygon with corners at the positions $\{ \frac{2i\pi}{|w|+1} \mid w[i] = \ta\}$ on the line of a circle. For each $\ta,\tb \in V(G)$ such that  $|w|_\ta \leq k$, $p_\ta$ and $p_\tb$ intersect \iffl neither $\ta\tb\ta\tb$ nor $\tb\ta\tb\ta$ is a subsequence of $w$ \iffl $h_{\ta,\tb}(w) \notin L^{\text{gon-c}}_k$. Since the $k$-gon-circle graphs are closed under adding isolated vertices, $G$ is a $k$-gon-circle graph. \qed
\end{proof}
}

\begin{thmrep} \applabel{thm:circle-k-gon-representation}
For every $k\in\mathbb{N}_{\geq 1}$, let 
\begin{align*}
L^{\text{c-gon}}_k \coloneqq \{ w \in \{ 0,1\}^* \mid{} & 2 \leq |w|_0, |w|_1 \leq 2k+1 \wedge \longversion{{}\\
&} (\text{$|w|_0$ and $|w|_1$ odd} \\
& \vee (\text{$|w|_0$ odd and $|w|_1$ even} \Rightarrow w \notin (00)^+(11)^*0^+)\\
& \vee (\text{$|w|_0$ even and $|w|_1$ odd} \Rightarrow w \notin (11)^+(00)^*1^+) \\
& \vee (\text{$|w|_0$ and $|w|_1$ even} \Rightarrow w \notin \langle  (00)^+ 1^* 0^*\rangle)
\}
\end{align*}
%where $\text{del}_{\text{odd}}$ deletes the first occurrence of $0$ if the number of $0$s is odd and the first occurrence of $1$ if the number of $1$s is odd. 
$\mathcal{G}_{L^{\text{c-gon}}_k}$ is the class of circle-$k$-gon graphs plus isolated vertices. \shortversion{Hence, the class of circle-$k$-gon graphs has factorial speed.}
\end{thmrep}

\begin{proof}
%[sketch]
We fix a point $c$ on the circle. For $1 \leq k' \leq k$, describe the circle-$k'$-gons~$\Delta$ by the $2k'$ endpoints of the chords of $\Delta$ on the circle. We assume that no two chords share an endpoint. (If they do, we replace the endpoint by two points directly next to each other.) If one side of $\Delta$ is an arc that contains $c$, we denote $\Delta$ by $2k'+1$ letters and the last $2k'$ letters describe the endpoints of the chords on the circle. The position of the first letter is ignored. The language $L^{\text{c-gon}}_k$ describes all possibilities for a circle-$k_1$-gon $\Delta_1$ and a circle-$k_2$-gon $\Delta_2$, $1\le k_1,k_2\le k$, such that all the endpoints of chords of $\Delta_2$ are in between two arcs of $\Delta_1$. For an example, see \autoref{fig:circle-n-gon}. More precisely, this figure indicates the general situation that $\Delta_1$ and $\Delta_2$ do not intersect. 

%First consider the case that $c$ is not contained in any arc of $\Delta_1$ or $\Delta_2$. 
%This means that, for the letter (vertex) $\ta_i$ that is represented by $\Delta_i$ (for $i=1,2$), the projection of a word~$w$ that describes a graph~$G$ in this scenario contains $2k_i$ occurrences of $\ta_i$. In the projection $h_{\ta_1,\ta_2}(w)$, assuming that $\position_{\ta_1}(1,w)<\position_{\ta_2}(1,w)$, which means that we first find an endpoint of $\Delta_1$ when moving from $c$ on the cycle in clockwise direction, we find $h_{\ta_1,\ta_2}(w)\in (00)^+1^*0^*$ \iffl $\Delta_1$ and $\Delta_2$ do not intersect. Symmetrically, if $\position_{\ta_2}(1,w)<\position_{\ta_1}(1,w)$, we find $h_{\ta_1,\ta_2}(w)\in (11)^+0^*1^*$ \iffl $\Delta_1$ and $\Delta_2$ do not intersect. 
%Next, assume that $c$ is contained in an arc of $\Delta_1$ but not of $\Delta_2$. This means that $w$ contains $2k_1+1$ occurrences of $\ta_1$ and $2k_2$ occurrences of $\Delta_2$. 

Let $G = (V,E)$ be a circle-$k$-gon graph with additional isolated vertices. Let $V'$ be the set of non-isolated vertices in $G$. Let $c$ be a point on the circle that is not an endpoint of any interval. Since $G[V']$ is a circle-$k$-gon graph, it is the intersection graph of the circle-$k$-gons associated with the $k$-gon-interval-filaments (union of arcs) with respect to $c$, i.e., for each vertex $v \in V'$ there is an interval-filament $a(v) \subseteq [0,2\pi]$ and there exists a set of intervals $I(v) = \{ I_{1,v}, \dots, I_{k_v,v}\}$ such that $a(v) = \bigcup_{j \in [k_v]} I_{j,v}$ with $k_v\le k+1$. (The intervals $I_{j,v}$ describe arcs on the circle in clockwise direction from~$c$ to~$c$.) Two vertices $u,v \in V'$ are adjacent iff the associated circle-$k$-gons intersect. W.l.o.g., assume that all endpoints of the intervals are distinct except for $0$ and $2\pi$, if $\lborder(I_{1,v}) = 0$ and $\rborder(I_{k_v,v}) = 2\pi$. Let $k' \coloneqq \sum_{v \in V'} k_v - |\{ v \in V' \mid \lborder(I_{1,v}) = 0 \wedge \rborder(I_{k_v,v})=2\pi\}| $. Let $e_1 < \dots < e_{2k'}$ be the endpoints of the intervals in ascending order except for $0$ and $2\pi$. Define the word $w'$ by $w'[j] \coloneqq v$ if $e_j$ is an endpoint of an interval of $a(v)$ for $v \in V'$. 
Further, let $w_\mathrm{l}$ be a word that contains every $v\in V'$ with $\lborder(I_{1,v}) = 0$ and $\rborder(I_{k_v,v})=2\pi$ exactly once, and let $w_\mathrm{r}$ be a word that contains every isolated vertex exactly once.
We define $w=w_\mathrm{l}\cdot w'\cdot w_\mathrm{r}$ (i.e., the concatenation of the three words). We will show that $G(L^{\text{c-gon}}_k,w) = G$. Clearly, all $v \in V \setminus V'$ are isolated vertices in $G(L^{\text{c-gon}}_k,w)$ as well. Clearly, for all $v \in V'$, $|w|_v$ is odd iff $c \in a(v)$. 
Let $u,v \in V'$. 
\begin{enumerate}
\item Case 1: $|w|_v$ and $|w|_u$ are even. 
\begin{align*}
h_{u,v}(w) \notin L^{\text{c-gon}}_k 
& \Leftrightarrow h_{u,v}(w) \in \langle (00)^+ 1^* 0^* \rangle \\
& \Leftrightarrow \{u,v\} \notin E
\end{align*}
For the last equivalence, consider the left side of \autoref{fig:circle-n-gon}.
\item Case 2: $|w|_v$ is odd and $|w|_u$ is even.
\begin{align*}
h_{u,v}(w) \notin L^{\text{c-gon}}_k 
& \Leftrightarrow h_{u,v}(w) \in (00)^+ (11)^* 0^+\\
& \Leftrightarrow h_{u,v}(w') \in 0(00)^* (11)^* 0^+\\
& \Leftrightarrow \{u,v\} \notin E
\end{align*}
For the last equivalence, consider the right side of \autoref{fig:circle-n-gon}.
\item Case 3: $|w|_v$ is even and $|w|_u$ is odd. This case is analogous to the previous case.
\item Case 4: $|w|_v$ and $|w|_u$ are odd. In this case, $h_{u,v}(w) \in L^{\text{c-gon}}_k$. By construction, $c \in a(v) \cap a(u)$ and therefore $\{u,v\} \in E$. 
\end{enumerate}
In each case, $h_{u,v}(w) \in L^{\text{c-gon}}_k \Leftrightarrow \{u,v\} \in E$. 

For the other direction, let $G \in \mathcal{G}_{L^{\text{c-gon}}_k}$, i.e. there exists a word $w$ such that $G(L^{\text{c-gon}}_k,w)$. Let $V' \coloneqq \{ v \in V \mid 2 \leq |w|_v \leq 2k+1\}$. Each $v \in G(V) \setminus V'$ is isolated. Let $v \in V'$. Let $v_1 < \dots < v_{|w|_v} $ be the indices of letters $v$ in $w$. If $|w|_v$ is even, let $I_{j,v} \coloneqq [2j-1, 2j]$ for each $j \in [\frac{|w|_v}{2}]$. If $|w|_v$ is odd, let $I_{1,v} \coloneqq [0,v_2]$, $I_{j,v} \coloneqq [2j-1, 2j]$ for each $1 < j < \frac{|w|_v-1}{2}$ and $I_{\frac{|w|_v-1}{2},v} \coloneqq [v_{|w|_v}, 2\pi]$. In both cases, let $a(v) \coloneqq \bigcup_{j \in [\lfloor\frac{|w|_v}{2}\rfloor]} I_{j,v}$.
Let $G = (V',E)$ be the circle-$k$-gon graph that is the intersection graph of the circle-$k$-gon associated with the $k$-gon-interval-filaments $a(v)$ for $v \in V'$. Analogously to the other direction, we prove that $h_{u,v}(w) \in L^{\text{c-gon}}_k$ iff $\{ u,v\} \in E$ for all $u,v \in V'$.  \qed
\end{proof}

\begin{figure}
\begin{minipage}[b]{.31\textwidth} %
\begin{center}
\scalebox{.8}{\begin{tikzpicture}
	\newcommand{\rad}{2cm}

	\begin{pgfonlayer}{background}
	   \node[draw,circle,minimum size=2*\rad] (C) at (0,0) {};
	\end{pgfonlayer}
	\node[circle, draw, fill=white, inner sep=1.5pt, label={[shift={(190:2mm)}, yshift=-.5mm, anchor=center, color=DB]{0}}] (w_0) at (C.190) {};
	\node[circle, draw, fill=white, inner sep=1.5pt, label={[shift={(220:2.5mm)}, anchor=center, color=DB]{0}}] (w_1) at (C.220) {};
	\node[circle, draw, fill=white, inner sep=1.5pt, label={[shift={(240:3mm)}, anchor=center, color=DR]{1}}]  (w_2) at (C.240) {};
	\node[circle, draw, fill=white, inner sep=1.5pt, label={[shift={(260:3.5mm)}, anchor=center, color=DR]{1}}]  (w_3) at (C.260) {};
	\node[circle, draw, fill=white, inner sep=1.5pt, label={[shift={(330:2.5mm)}, anchor=center, color=DR]{1}}]  (w_4) at (C.330) {};
	\node[circle, draw, fill=white, inner sep=1.5pt, label={[shift={(350:2mm)}, anchor=center, color=DR]{1}}]  (w_5) at (C.350) {};
	\node[circle, draw, fill=white, inner sep=1.5pt, label={[shift={(10:2mm)},  yshift=-.5mm, anchor=center, color=DB]{0}}] (w_6) at (C.10) {};
	\node[circle, draw, fill=white, inner sep=1.5pt, label={[shift={(60:4mm)},  yshift=-2mm, anchor=center, color=DB]{0}}] (w_7) at (C.60) {};
	\node[circle, draw, fill=white, inner sep=1.5pt, label={[shift={(120:2mm)}, anchor=center, color=DB]{0}}] (w_8) at (C.120) {};
	\node[circle, draw, fill=white, inner sep=1.5pt, label={[shift={(150:2mm)}, anchor=center, color=DB]{0}}] (w_9) at (C.150) {};
	
	\begin{pgfonlayer}{background}
    	\draw[line width=0.5mm,color=DB] (w_0.center) arc (190:220:\rad) -- (w_6.center) arc (10:60:\rad) -- (w_8.center) arc (120:150:\rad) -- (w_0.center);
    	\draw[line width=0.5mm,color=DR] (w_2.center) arc (240:260:\rad) -- (w_4.center) arc (330:350:\rad) -- (w_2.center);
	\end{pgfonlayer}
	
	\node[label={[shift={(85:4mm)}, yshift=-2mm, anchor=center]{c}}] (c) at (C.85) {};
	\draw (C.85) -- (85:2.2cm);
	
	\node[yshift=-8mm, anchor=center] (w) at (C.270) {(a) \textbf{{\color{DB}00}{\color{DR}1111}{\color{DB}0000}}};
\end{tikzpicture}}    
\end{center}
\end{minipage} %
\begin{minipage}[b]{.31\textwidth} %
\begin{center}
\scalebox{.8}{\begin{tikzpicture}
	\newcommand{\rad}{2cm}

	\begin{pgfonlayer}{background}
	   \node[draw,circle,minimum size=2*\rad] (C) at (0,0) {};
	\end{pgfonlayer}
	\node[circle, draw, fill=white, inner sep=1.5pt, label={[shift={(190:2mm)}, yshift=-.5mm, anchor=center, color=DB]{0}}] (w_0) at (C.190) {};
	\node[circle, draw, fill=white, inner sep=1.5pt, label={[shift={(220:2.5mm)}, anchor=center, color=DB]{0}}] (w_1) at (C.220) {};
	\node[circle, draw, fill=white, inner sep=1.5pt, label={[shift={(240:3mm)}, anchor=center, color=DR]{1}}]  (w_2) at (C.240) {};
	\node[circle, draw, fill=white, inner sep=1.5pt, label={[shift={(260:3.5mm)}, anchor=center, color=DR]{1}}]  (w_3) at (C.260) {};
	\node[circle, draw, fill=white, inner sep=1.5pt, label={[shift={(330:2.5mm)}, anchor=center, color=DR]{1}}]  (w_4) at (C.330) {};
	\node[circle, draw, fill=white, inner sep=1.5pt, label={[shift={(350:2mm)}, anchor=center, color=DR]{1}}]  (w_5) at (C.350) {};
	\node[circle, draw, fill=white, inner sep=1.5pt, label={[shift={(10:2mm)},  yshift=-.5mm, anchor=center, color=DB]{0}}] (w_6) at (C.10) {};
	\node[circle, draw, fill=white, inner sep=1.5pt, label={[shift={(60:4mm)},  yshift=-2mm, anchor=center, color=DB]{0}}] (w_7) at (C.60) {};
	\node[circle, draw, fill=white, inner sep=1.5pt, label={[shift={(120:2mm)}, anchor=center, color=DB]{0}}] (w_8) at (C.120) {};
	\node[circle, draw, fill=white, inner sep=1.5pt, label={[shift={(150:2mm)}, anchor=center, color=DB]{0}}] (w_9) at (C.150) {};
	
	\begin{pgfonlayer}{background}
    	\draw[line width=0.5mm,color=DB] (w_0.center) arc (190:220:\rad) -- (w_6.center) arc (10:60:\rad) -- (w_8.center) arc (120:150:\rad) -- (w_0.center);
    	\draw[line width=0.5mm,color=DR] (w_2.center) arc (240:260:\rad) -- (w_4.center) arc (330:350:\rad) -- (w_2.center);
	\end{pgfonlayer}
	
	\node[label={[shift={(30:4mm)}, yshift=-2mm, anchor=center]{c}}] (c) at (C.30) {};
	\draw (C.30) -- (30:2.2cm);
	
	\node[yshift=-8mm, anchor=center] (w) at (C.270) {(b) \textbf{{\color{DB}(0)0}{\color{DR}1111}{\color{DB}00000}}};
\end{tikzpicture}}
\end{center}\end{minipage} %
\begin{minipage}[b]{.36\textwidth}
The figures on the left illustrate two main cases leading to $L^{\text{c-gon}}_n$ in \myref{thm:circle-k-gon-representation}. If $c$ is chosen so that no arc contains it, as in (a), then the vertex representation contains an even number of letters; otherwise, as in (b), it contains an odd number, but the first one is ignored concerning the positions of the letters. Other cases are symmetric or even trivial.
\end{minipage}
\caption{A circle-3-gon and a circle-2-gon\longversion{ (circle trapezoid) described}\shortversion{ given} by a word, for different positions of\longversion{ the point}~$c$.}
\label{fig:circle-n-gon}
\end{figure}

\longversion{\begin{remark}
The reader might have asked what the interrelation between between $k$-gon-circle and  circle-$k$-gon graphs is. Clearly, every $k$-gon-circle graph is a circle-$k$-gon graph. But also conversely, any circle-polygon graph is a polygon-circle graph, which can be seen by replacing each arc by a sequence of polygon edges that approximate the arc in the sense that these edges will make the resulting polygon intersect with the other geometric figures whenever the arc did so.
\end{remark}}

\longversion{\begin{corollary}
\label{cor:circle-k-gon-speed}    
For every $k\in\mathbb{N}_{\geq 1}$, the class of circle-$k$-gon graphs  has factorial speed.
\end{corollary}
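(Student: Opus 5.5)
The plan is to combine \myref{thm:circle-k-gon-representation} with \myref{thm:counting-Lfinite} for the upper bound, and to use an inclusion argument for the lower bound.

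\textbf{Upper bound.} First I check that $L^{\text{c-gon}}_k$ is finite. Every word in it satisfies $2\le |w|_0,|w|_1\le 2k+1$, so its length is at most $4k+2$. Hence the language is a finite subset of $\{0,1\}^{\le 4k+2}$.

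Next I check $0$-$1$-symmetry, so that $\langle L^{\text{c-gon}}_k\rangle = L^{\text{c-gon}}_k$. The length and parity conditions are symmetric under complementation. The two mixed-parity clauses are exchanged by $\widetilde{\cdot}$. The even–even exclusion $\langle (00)^+1^*0^*\rangle$ is closed under complement by definition. If any asymmetry turned up in the literal formulation, I would pass to $\langle L^{\text{c-gon}}_k\rangle$ instead. \myref{thm:counting-Lfinite} is stated for $\cG_{\langle L\rangle}$ with finite $L$ anyway, and the representation theorem already identifies the graph class in question.

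By \myref{thm:circle-k-gon-representation}, $\cG_{L^{\text{c-gon}}_k}$ is the class of circle-$k$-gon graphs together with arbitrarily many isolated vertices. This class contains the circle-$k$-gon graphs. Therefore \myref{thm:counting-Lfinite} yields at most $2^{O(n\log n)}$ labeled graphs on $[n]$ in the superclass, and so also in the class itself, as in \myref{rem:adding-isolates}.

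\textbf{Lower bound.} The permutation graphs have factorial speed (known, and listed in the examples). So it suffices to show that they are contained in the circle-$k$-gon graphs for every $k\ge 1$. Permutation graphs are circle graphs, since $\cG_{\langle 0110\rangle}\subseteq \cG_{\langle 0101\rangle}$ via the standard chord model between two parallel lines bent onto a circle. Circle graphs are exactly the circle-$1$-gon graphs.

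For $k\ge 1$, a circle-$1$-gon is a circle-$k$-gon, because the definition allows at most $k$ chords. Hence permutation graphs are circle-$k$-gon graphs, and the speed is at least $n^{c_1 n}$. Alternatively, one can use interval graphs $\subseteq$ trapezoid graphs $\subseteq$ circle-trapezoid graphs $=$ circle-$2$-gon graphs for $k\ge2$.

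\textbf{Main obstacle.} The only nontrivial step is the inclusion of circle graphs into circle-$1$-gon graphs. Here I would rely on the paper's stated identification of circle graphs with circle-$1$-gon graphs rather than reprove it. Everything else is direct bookkeeping.
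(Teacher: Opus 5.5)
Your proof is correct and follows the paper's route: the corollary comes from \myref{thm:circle-k-gon-representation} combined with \myref{thm:counting-Lfinite} (the language is finite and $0$-$1$-symmetric, and isolated vertices are handled as in \myref{rem:adding-isolates}), and the lower bound comes from including a known factorial class. One caution for $k=1$: the proof of \myref{thm:circle-k-gon-representation} encodes a circle-$1$-gon as the convex hull of a single arc, so $\cG_{L^{\text{c-gon}}_1}$ consists of circular-arc graphs, which do not contain every permutation graph (e.g.\ $K_{2,3}$); the quoted identification ``circle graphs $=$ circle-$1$-gon graphs'' only holds if a $1$-gon is read as a bare chord, so to stay consistent with the model behind your upper bound you should take the $k=1$ lower bound from interval graphs $\subseteq$ circular-arc graphs instead.
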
}
%\todo{The following paragraph (def.) is not really used so far.}
\shortversion{\noindent}
%As circular-arc graphs are circle-1-gons, \todo{Gavril writes circle graphs, not circular-arc graphs! Somebody should check our construction for $n=1$.}
%a subclass of circle-trapezoid graphs, i.e. the circle-2-gon graphs, 
%we can conclude\longversion{ (again fixing a class marked as ``unknown'' in graphclasses.org)}:

%\begin{corollary}\label{cor:circular-arc}
%The class of circular-arc graphs is factorial.
%\end{corollary}

%There is another way to deduce \myref{cor:circular-arc}, namely, i
It is shown in \cite{GamHabVia2007,GamVia2007} that circular-arc graphs are contained in the class of balanced 2-interval graphs, so that factorial speed can be concluded with \myref{cor:interval}. 
\longversion{\begin{corollary}\label{cor:circular-arc}
The class of circular-arc graphs is factorial.
\end{corollary}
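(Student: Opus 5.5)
Factorial speed means a lower bound $n^{c_1 n}$ and an upper bound $n^{c_2 n}$, and I will prove these separately.

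\emph{Upper bound.} The sentence just before the corollary gives the inclusion from \cite{GamHabVia2007,GamVia2007}: every circular-arc graph is a balanced 2-interval graph, hence a 2-interval graph. By \myref{thm:lIntervalGraphs}, the 2-interval graphs are exactly $\cG_{L_2^{\text{int}}}$. The language $L_2^{\text{int}}$ is finite (it is 4-uniform over $\{0,1\}$) and $0$-$1$-symmetric. So \myref{thm:counting-Lfinite}, in the form of \myref{cor:interval} with $\ell=2$, shows that 2-interval graphs have speed at most $2^{\cO(n\log n)}$. Since the speed is monotone under inclusion, circular-arc graphs are at most factorial.

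As a sanity check that does not depend on the cited inclusion, I would also give a direct encoding. A circular-arc graph is a circle-$1$-gon-like object: cut the circle at a point $c$ avoiding all endpoints. An arc not containing $c$ becomes an interval and is encoded by two letters. An arc containing $c$ splits into two pieces $[0,x]$ and $[y,2\pi]$ and can be encoded with an odd marker, as in the proof of \myref{thm:circle-k-gon-representation}. This yields a finite language, and \myref{thm:counting-Lfinite} applies again. I expect this to be optional, so the cited inclusion is the main route.

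\emph{Lower bound.} Every interval graph is a circular-arc graph: take the arcs inside a small sector of the circle. Interval graphs are known to have factorial speed, so at least $n^{c_1 n}$ labeled circular-arc graphs exist on $[n]$. If one wants this argument to be self-contained, permutation graphs also give it, but the standard citation suffices.

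\emph{Main obstacle.} The only delicate point is the upper bound's reliance on the inclusion of circular-arc graphs into (balanced) 2-interval graphs. If I had to verify it myself, I would cut the circle at $c$. Each arc containing $c$ then becomes the union of two intervals, one at each end of the line. Each arc not containing $c$ is a single interval, which I pad with a tiny disjoint dummy interval placed far to the right. Two arcs meet if and only if some of their pieces meet. The only exception is two wrap-around arcs, which always meet at $c$, and their pieces also meet near both ends. So the resulting 2-interval graph is the same graph.
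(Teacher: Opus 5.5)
Your proposal is correct and follows the paper's route: circular-arc graphs are included in (balanced) 2-interval graphs by \cite{GamHabVia2007,GamVia2007}, and \myref{cor:interval} then gives the upper bound, while the lower bound comes from interval graphs, which the paper leaves implicit. Your own check of the inclusion by cutting the circle is the ``rather straightforward'' argument the paper alludes to, and your optional direct encoding is essentially the paper's later explicit language $L^{\text{c-int}}_1$.
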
}
\begin{comment}
\begin{thmrep}\applabel{thm:circular-arc-as-2-interval}
Every circular-arc graph is a 2-interval graph.
\end{thmrep}
\begin{proof} 
Let $G=(V,E)$ be given by a collection of arcs of the unit circle~$C$. 
If one cuts the circle arbitrarily, say, at the point $(1,0)$, then one can consider it as an interval $I_C$ of length $2\pi$. Those arcs that do not contain the point $(1,0)$ as an inner point can be viewed as ``single subintervals'' of $I_C$. The arcs that contain $(1,0)$ appear as two intervals, one of the form $[0,a]$ and another one of the form $[b,2\pi]$.  If we now split the ``single subinterval'' $[p,r]$ by choosing any $q\in (p,r)$ and take $[p,q]$ and $[q,r]$ instead, one understands that we can associate to each $v\in V$ a collection of two intervals such that the originally associated arcs intersect \iffl 
the newly associated pairs of intervals intersect.\qed 
\end{proof}    
\end{comment}
%\todohf{Many classes starting with circle .. or circular .. in our ``speed unknown'' file should work with our approach. More comments there ... \\ Silas: Is there any class from the file that we want to add to this work before we publish it on arXiv?}
%\todohf{Def. of  $k$-circular track graphs? I hope the following description suffices.}

\longversion{We will provide an explicit finite-language representation of circular-arc graphs in the next subsection. 
\subsubsection{Generalizations and Restrictions of Circular-arc Graphs}}

\emph{$\ell$-circular track graphs} generalize circular-arc graphs similarly as $\ell$-track graphs generalize interval graphs. Hence, combining the arguments that show that $\ell$-track graphs are $\ell$-interval graphs with the rather straightforward arguments showing that circular-arc graphs are 2-interval graphs,
\longversion{also see \cite[Propriété~2]{GamHabVia2007},}%\myref{thm:circular-arc-as-2-interval}, 
we can conclude:

\begin{theorem}\label{thm:k-circular-track-as-2k-interval}
For $k\geq 1$, every  $k$-circular track graph is a $2k$-interval graph.
\end{theorem}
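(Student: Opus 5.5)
We give an explicit geometric transformation. Let $G=(V,E)$ be a $k$-circular track graph, given by $k$ circles $C_1,\dots,C_k$ (the tracks) and, for each $v\in V$ and $i\in[k]$, a closed arc $A_i(v)$ on $C_i$. We have $\{u,v\}\in E$ iff $A_i(u)\cap A_i(v)\neq\emptyset$ for some $i\in[k]$. As in the proof of Theorem~\ref{thm:ltrackGraphs}, we may perturb so that on each track all arc endpoints are distinct. On each circle $C_i$ we choose a cut point $c_i$ that is not an endpoint of any arc. We then parametrize $C_i\setminus\{c_i\}$ clockwise by an open real interval $J_i=(a_i,b_i)$, where the $J_i$ are pairwise disjoint and placed from left to right on the real line, mirroring the separation of tracks used in Theorem~\ref{thm:ltrackGraphs}.

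Each arc $A_i(v)$ now becomes one or two closed intervals inside the closure of $J_i$. If $c_i\notin A_i(v)$, the arc is a single closed interval $[p,r]$. In that case we split it at an arbitrary inner point $q$ into $[p,q]$ and $[q,r]$, whose union is the same point set. If $c_i\in A_i(v)$, the arc becomes $[a_i,r]\cup[p,b_i]$. The point $c_i$ itself now corresponds to the two endpoints $a_i$ and $b_i$. We let $f(v)$ be the collection of these $2$ intervals per track, for $2k$ intervals in total. Intervals from different tracks lie in disjoint segments of the line and can never meet. Two vertices may both use the shared endpoints $a_i$ and $b_i$, but this happens exactly when both of their arcs contain $c_i$, and then those arcs genuinely intersect. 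Hence the following chain holds:
\[\textstyle\bigcup f(u)\cap\bigcup f(v)\neq\emptyset \iff \exists i\in[k]: A_i(u)\cap A_i(v)\neq\emptyset \iff \{u,v\}\in E.\]

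The main obstacle is the requirement that each vertex receive exactly $2k$ intervals, together with the common endpoints $a_i$ and $b_i$ produced by the cut. Splitting unwrapped arcs at an inner point fixes the count. The shared endpoints are harmless for the same intersection reason as above. If distinct endpoints and pairwise disjoint intervals are wanted, we can still get them. We replace each shared endpoint $a_i$ by slightly different points in the right order, keeping all the $[a_i,\cdot]$ intervals overlapping one another near $a_i$, and do the same at $b_i$. Each vertex's two pieces of a split arc can likewise be separated by making them disjoint sub-intervals; we then use a tiny overlap with neighbours to preserve every existing intersection, and a small enough perturbation creates no new ones. This yields a $2k$-interval representation of $G$. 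Alternatively, the same argument can be phrased via words and $L_{2k}^{\text{int}}$ from Theorem~\ref{thm:lIntervalGraphs}. The case $k=1$ recovers the known fact that circular-arc graphs are $2$-interval graphs.
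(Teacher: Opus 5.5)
Your proof is correct and follows essentially the paper's route. The paper combines the $\ell$-track-to-$\ell$-interval idea (tracks laid side by side on the line) with the cut-the-circle argument for circular-arc graphs: arcs through the cut point become two intervals, and the others are split at an inner point. One small fix: require the closed segments $[a_i,b_i]$ to be pairwise disjoint (e.g.\ $b_i<a_{i+1}$), not merely the open $J_i$, since otherwise an interval ending at $b_i$ and one starting at $a_{i+1}$ could touch.
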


\begin{corollary}
    \label{cor:k-circular-track}
For each $k \geq 1$, the class of $k$-circular track graphs is factorial.
\end{corollary}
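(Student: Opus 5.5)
The plan is a direct geometric construction. Starting from a $k$-circular track representation, we build a $2k$-interval representation of the same graph on a single real line. We combine two transformations: the cut-open trick that turns an arc into at most two intervals, and the track-separation trick from the proof of \myref{thm:ltrackGraphs}.

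First fix the model. Let $G=(V,E)$ be given by $k$ circles $C_1,\dots,C_k$. Each vertex $v$ receives one arc $A_i(v)$ on each $C_i$, and $\{u,v\}\in E$ \iffl $A_i(u)\cap A_i(v)\neq\emptyset$ for some $i\in[k]$. If a vertex misses some circle, we add a tiny private arc there, placed at a point no other arc touches. For each $i$, choose a point $c_i$ on $C_i$ and parametrize $C_i\setminus\{c_i\}$ by the open interval $(0,2\pi)$.

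An arc not containing $c_i$ becomes a single closed interval $[p,r]$. We split it as $[p,q]\cup[q,r]$ with $q$ inside, which yields exactly two intervals with the same union. An arc containing $c_i$ becomes the two intervals $[0,a]$ and $[b,2\pi]$. This is where closedness needs care: two such arcs both contain $c_i$ and must intersect, and indeed both of their images contain $0$. An arc through $c_i$ and one avoiding it intersect \iffl their images intersect. So on each circle, each vertex gets two intervals whose union $J_i(v)\subseteq[0,2\pi]$ satisfies
\[ A_i(u)\cap A_i(v)\neq\emptyset \iff J_i(u)\cap J_i(v)\neq\emptyset. \]

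Next, shift the images of the different circles apart on one real line. We map $C_i$'s picture into $[3\pi i,3\pi i+2\pi]$, so the ranges belonging to different circles are pairwise disjoint. Let $f(v)$ be the union of the $2k$ shifted intervals of $v$. Intervals coming from different circles never meet, so
\[ f(u)\cap f(v)\neq\emptyset \iff \exists i:\ J_i(u)\cap J_i(v)\neq\emptyset \iff \{u,v\}\in E. \]
Hence $G$ is a $2k$-interval graph. The definition of $\ell$-interval graphs does not require the intervals to be disjoint, so shared endpoints such as $q$ are harmless.

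The main obstacle is the boundary behaviour at the cut point $c_i$. We must make sure that cutting neither creates nor destroys intersections, in particular when an arc has $c_i$ as an endpoint. We handle this up front by choosing $c_i$ distinct from all arc endpoints, which is possible since there are finitely many. Factorial speed (\myref{cor:k-circular-track}) then follows from \myref{cor:interval}.
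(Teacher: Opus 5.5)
Your proof is correct and follows the paper's route. The paper derives the corollary from \myref{thm:k-circular-track-as-2k-interval} (every $k$-circular track graph is a $2k$-interval graph), justified by combining the cut-open argument for circular-arc graphs with the track-separation argument from the $\ell$-track case, and then invokes \myref{cor:interval}; you supply the details the paper leaves implicit (choosing $c_i$ off all endpoints, splitting single arcs, shifting circles apart), and the matching lower bound that this does not give, also left implicit in the paper, comes from interval graphs being $k$-circular track graphs (use one circle and give each vertex a private arc on the others).
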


\longversion{Of course, one can also give a language representation of $k$-circular track graphs, but as this is now more like an exercise, we do not make it explicit here.}

\begin{remark}\label{rem:interval-containment}
There are interesting subclasses of circular-arc graphs, i.e., 1-circular track graphs, with hitherto unknown speed like co-interval containment bigraphs.\longversion{ This also settles the question of whether the complement class of interval containment bigraphs are factorial.} Interval containment bigraphs are equivalent to  complements of circular-arc graphs of clique covering number 2~\cite{HelHua2004}, bipartite co-circular-arc graphs~\cite{ShrTayUen2010}, two-directional orthogonal-ray graphs~\cite{ShrTayUen2009,ShrTayUen2010}, and permutation bigraphs~\cite{SahBSW2014}\longversion{, also compare \cite{HelHLM2020} for a more general setting}.
\end{remark}
%\todohf{Possibly, ``our way'' to deal with complements could help for    co-trapezoid graphs. Again, it is open if their speed is factorial.}

\longversion{In \cite{FraGonOch2015}, the authors introduced \emph{$t$-circular interval graphs}. In this intersection model, to each vertex, a collection of $t$ arcs of one circle is associated. Clearly, $1$-circular interval graphs are circular-arc graphs. %Let $m_1: \{0,1\}^* \to \{0,1\}^*$ be the map that moves the first $1$ in a word to the leftmost position, e.g., $m_1(00{\color{blue}1}01) = {\color{blue}1}0001$. 
\begin{theorem}
For each $t \geq 1$, the language 
\begin{align*}
L^{\text{c-int}}_t \coloneqq \langle \{w \in \{0,1\}^* \mid{} & 2t \leq |w|_0, |w|_1 \leq 2t+1 \wedge  {}\\& (|w|_0 = |w|_1 = 2t+1 \vee {}\\
& (|w|_0 = |w|_1 = 2t \wedge \exists i,j \in [t]: \deletion{i}{j}{2t}(w) \in \langle 0101,0110 \rangle ) \vee {} \\
& (|w|_0 = 2t \wedge |w|_1 = 2t+1 \wedge \exists i,j \in [t+1]: \\
& \deletion{i}{j}{2(t+1)}(1d_1(w)100) \in \langle 0101,0110 \rangle)\,)\} \rangle
\end{align*}
represents the $t$-circular interval graphs plus isolated vertices. 
\end{theorem}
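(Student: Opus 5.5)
We proceed as in the proof of \myref{thm:circle-k-gon-representation}: cut the circle at a fixed point~$c$ and read positions clockwise from~$c$. Each vertex is then represented by $2t$ or $2t+1$ letters, where the parity records whether one of its arcs contains~$c$. We prove the two inclusions separately and finally invoke closure under adding isolated vertices.

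\emph{From models to words.} Let $G=(V,E)$ be a $t$-circular interval graph plus isolated vertices, and let $V'$ be its non-isolated vertices. Take an arc model of $G[V']$ with all arc endpoints distinct, and choose $c$ not an endpoint. The $t$ arcs of a vertex may be assumed pairwise disjoint (merging overlapping arcs and splitting long ones keeps the count at $t$), so at most one arc of each vertex contains~$c$.

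If no arc of $v$ contains $c$, then cutting at $c$ turns its arcs into $t$ intervals with $2t$ endpoints. If one arc contains $c$, that arc becomes $[0,a]\cup[b,2\pi]$, and $v$ again has exactly $2t$ endpoints in $(0,2\pi)$.

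Let $w'$ list the vertices along all endpoints in $(0,2\pi)$, let $w_{\mathrm l}$ contain each vertex whose arc covers $c$ exactly once, and let $w_{\mathrm r}$ contain each isolated vertex once. Set $w=w_{\mathrm l}w'w_{\mathrm r}$. Then $|w|_v=2t+1$ exactly when an arc of $v$ contains $c$. Isolated vertices occur once, so they lie outside the length constraint of $L^{\text{c-int}}_t$.

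For $u,v\in V'$ we check three cases for $h_{u,v}(w)$, using $0$-$1$-symmetry to fix which letter is which.
\begin{enumerate}
\item Both counts are $2t+1$. Both vertices have an arc containing $c$, so they are adjacent, and the first disjunct of $L^{\text{c-int}}_t$ accepts.
\item Both counts are $2t$. The argument is verbatim that of \myref{thm:lIntervalGraphs}: two interval families intersect iff some pair of intervals overlaps or is nested, i.e., $\deletion{i}{j}{2t}\in\langle0101,0110\rangle$ for some $i,j$.
\item The vertex coded $1$ has $2t+1$ letters and the vertex coded $0$ has $2t$ letters. Here we verify that $1\,d_1(h_{u,v}(w))\,100$ encodes the cut model faithfully.
\end{enumerate}

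In case~3, $d_1$ removes the marker letter coming from $w_{\mathrm l}$. The new leading $1$ is the left end $0$ of the interval $[0,a]$, and the trailing $1$ is the right end $2\pi$ of $[b,2\pi]$, so the $1$-vertex now has $t+1$ honest intervals. The appended $00$ is a dummy $(t+1)$-st interval for the $0$-vertex. It lies after all $1$'s, so its pattern with every $1$-interval is $1100$ and it never creates an intersection. Hence the $\exists i,j\in[t+1]$ condition holds iff some arc of $u$ meets some arc of $v$.

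\emph{From words to models.} Given any word $w$, vertices with a count outside $\{2t,2t+1\}$ are isolated, so we project them away. Place the letters at equally spaced points of the circle, with $c$ at angle $0$.
\begin{itemize}
\item A vertex with $2t$ occurrences gets the arcs spanned by occurrences $(2m-1,2m)$ for $m\in[t]$.
\item A vertex with $2t+1$ occurrences has its first occurrence ignored, and the remaining $2t$ occurrences $p_1<\dots<p_{2t}$ are paired as $[p_{2t},p_1]$ (the arc through $c$) and $[p_{2m},p_{2m+1}]$ for $m\in[t-1]$.
\end{itemize}
The same three-case check shows adjacency iff $h_{u,v}(w)\in L^{\text{c-int}}_t$.

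The main obstacle is case~3, namely confirming that the operator $d_1$ together with the padding $1\cdot\ldots\cdot100$ handles every position of the ignored first $1$. In the converse direction that letter need not come first in $w$, so one must check that deleting it via $d_1$ leaves the relative order of all genuine endpoints unchanged. One must also check that the dummy interval and the two boundary $1$'s produce neither spurious overlaps nor spurious containments. Everything else follows \myref{thm:lIntervalGraphs} routinely.
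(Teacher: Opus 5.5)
Your proposal is correct and follows essentially the same route as the paper's sketch. You cut the circle at $c$, let an extra leading occurrence mark the arc through $c$, split that arc into $[0,a]$ and $[b,2\pi]$ via the padding $1\,d_1(\cdot)\,1$, and neutralize the added $0$-interval with the trailing $00$; your normalization to pairwise disjoint arcs per vertex is a detail the paper leaves implicit. The step you flag as the main obstacle is in fact immediate: $d_1$ removes exactly the first $1$ of $h_{u,v}(w)$, which is the ignored first occurrence, and deleting one letter does not change the relative order of the remaining letters.
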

\begin{proof}[sketch]
We fix a point $c$ on the circle line. If a letter $\ta \in \alphabet(w)$ occurs $2t$ times, we interpret the positions of $\ta$ in $w$ as the endpoints of the arcs listed around the circle in clockwise order starting after the point $c$ (with no arc containing $c$). If a letter $\ta \in \alphabet(w)$ occurs $2t+1$ times, we ignore the first occurrence of $\ta$ and interpret the other occurrences in such a way that one of the arcs contains the point~$c$. 
Note that if two sets of arcs both contain an arc that contains~$c$, they intersect. This explains the condition $|w|_0 = |w|_1 = 2t+1$ in $L^{\text{c-int}}_t$. If two sets of arcs both do not contain an arc that contains~$c$, they intersect \iffl there is one arc in each of them such that those two arcs intersect. If one set of arcs contains an arc that contains $c$ and the other set of arcs does not contain an arc that contains $c$, we interpret the arc that contains $c$ as two arcs separated by $c$ and add an arc that intersects none of the arcs of the other set of arcs. This is done by using the word $1d_1(w)100$ instead of~$w$ as $d_1$ deletes the first occurrence of~1. Therefore, the language $L^{\text{c-int}}_t$ describes exactly the $t$-circular interval graphs plus isolated vertices.\qed
\end{proof}

For $t=1$, we get an explicit finite language representation of circular-arc graphs. More precisely, 
$$L^{\text{c-int}}_1 = \{ 0,1 \}^\text{3-uni} \cup \langle 0101,0110 \rangle \cup \langle (00 \shuffle 111) \setminus \{(1)1001\} \rangle.$$ Note that $(1)1001$ describes the only case of non-intersecting intervals for $0$ and~$1$ such that $c$ is contained in the interval for $1$ and not contained in the interval for~$0$. 
}

\longversion{\subsection{Interval Enumerable Graphs}}

\shortversion{\noindent}Yet another generalization of interval graphs are interval enumerable graphs\longversion{, introduced in}~\cite{BanHuaYeo2000}. 
\begin{toappendix}
Obviously, a linear ordering $<$ on a finite set~$M$ can be specified by writing a word of length~$|M|$ over the alphabet~$M$ that lists each letter once. Also, after fixing such an ordering, for $s,t\in M$ and $s\leq t$, we can define the \emph{interval} $[s,t]\coloneqq\{q\in M\mid s\leq q\leq t\}$.  
A graph $G$ is  \emph{interval enumerable} if there exists a linear ordering  $v_1v_2\cdots v_n$ of $V (G)$ such that, for each vertex $v_i$, there exist numbers $1 \leq \ell_i, r_i \leq  i-1$ such that $N(v_i) \cap [v_1, v_{i-1}] = [v_{\ell_i}, v_{r_i}]$, where $\ell_i > r_i$ is used to indicate that $v_i$ has no neighbor preceding it in the ordering. That is, we require that for each $v_i$ the neighbors with labels less than $i$ form an interval.  
For example, the 4-cycle $\ta-\tb-\tc-\td-\ta$ is interval enumerable by the ordering $\ta\tc\tb\td$. It can be described by the word $\ta\ta\tb\td\ta\tc\tc\tc\td\tb\tb\td$ in the following sense.
\end{toappendix}
\begin{thmrep}
\applabel{thm:int-enum}
For the 3-uniform language $L^{\text{int-en}}\coloneqq\langle (00\shuffle 1)\cdot \{011\}\rangle$, $\cG_{L^{\text{int-en}}}$ is the class of interval enumerable graphs\shortversion{, which hence is factorial}.
\end{thmrep}

%\longversion{\noindent In the construction, we see that $\ta\ta\tb\td\ta\tc\tc\tc\td\tb\tb\td$ describes a $C_4$.}

\begin{proof}
 Clearly, the language  $L^{\text{int-en}}$ is 3-uniform.
\begin{enumerate}
    \item Let $G=(V,E)$ be interval enumerable, i.e., there is a linear  ordering  $v_1v_2\cdots v_n$ of $V$ such that, for each vertex $v_i$, there are numbers $1 \leq \ell_i, r_i \leq  i-1$ such that $N(v_i) \cap [v_1, v_{i-1}] = [v_{\ell_i}, v_{r_i}]$. We construct a 3-uniform word $w_n$ from this ordering inductively as follows. We always claim that $G[\{v_1,\dots,v_j\}]\cong G(L^{\text{int-en}},w_j)$. Let $w_1=v_1v_1v_1$. Assume $w_j$ describes $G[\{v_1,\dots,v_j\}]$ for $j<k$ and construct $w_k$ as follows. If  $\ell_k > r_k$, then $w_k=w_{k-1}v_kv_kv_k$. Otherwise, let $w_{k-1}=u_1u_2u_3$, where $|u_1|_{v_{\ell_k}}=2$, $u_2[1]=v_{\ell_k}$, $u_2[|u_2|]=v_{r_k}$, $|u_1u_2|_{v_{r_k}}=3$. Define $w_k=u_1v_ku_2v_ku_3v_k$. By construction, for any $i<k$, $h_{v_i,v_k}(w_k)$ will end in $011$ \iffl $\ell_k\leq i\leq r_k$.  Thus inductively, $G\cong G(L^{\text{int-en}},w_n)$.
    \item Let $w$ be a 3-uniform word. Let $V=\alphabet(w)$. Then, we claim that $G=G(L^{\text{int-en}},w)$ is interval enumerable. Consider $V$ ordered by the sequence of last (i.e., third) occurrences of letters in~$w$.  Consider a situation where $x<y<z<v$  in this ordering, with $\{x,z\}\subseteq N(v)$. This means that $h_{x,v}(w)$ and $h_{z,v}(w)$ both end with $011$, as $x<z<v$. Hence, in $w$, we find $w=u_1xu_2zu_3$ with $|u_1|_x=2$, $|u_1u_2|_z=2$, $|u_2|_v=0$, $|u_3|_v=2$. As the third occurrence of $z$ must be within $u_2$, $h_{y,v}(w)$ also ends with $011$. Hence, $G$ is  interval enumerable.\qed 
\end{enumerate}  
\end{proof}

As convex-round graphs are a subclass of interval enumerable graphs and as concave-round graphs are complements of convex-round graphs\longversion{, see}~\cite{BanHuaYeo2000},\longversion{ and because biconvex graphs (known to be factorial) are a subclass of convex-round graphs,} we can deduce\shortversion{ factorial speed also for these classes.}\longversion{:

\begin{corollary}\label{cor:int-enum}
The classes of  interval enumerable graphs, convex-round graphs and of concave-round graphs have factorial speed.
\end{corollary}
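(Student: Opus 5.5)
The corollary has three parts: interval enumerable, convex-round and concave-round graphs. For each class I will establish an upper bound and a lower bound on the speed separately, using only inclusions and the results above.

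\emph{Upper bounds.} By \myref{thm:int-enum}, the interval enumerable graphs coincide with $\cG_{L^{\text{int-en}}}$. Here $L^{\text{int-en}}=\langle (00\shuffle 1)\cdot\{011\}\rangle$ is finite, since it consists of words of length six. \myref{thm:counting-Lfinite} therefore gives speed at most $2^{O(n\log n)}$. Convex-round graphs are a subclass of interval enumerable graphs~\cite{BanHuaYeo2000}, so they are at most factorial by monotonicity of speed under inclusion. Concave-round graphs are exactly the complements of convex-round graphs. Complementation is a bijection on labelled graphs with vertex set $[n]$, so the concave-round graphs have the same speed function as the convex-round graphs and are also at most factorial.

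\emph{Lower bounds.} I will take a factorial subclass of each class. Biconvex graphs are known to be factorial, as recorded on graphclasses.org, and they form a subclass of convex-round graphs. Hence convex-round graphs have at least factorial speed, and so do interval enumerable graphs, which contain them. An alternative for interval enumerable graphs is to observe that every interval graph is interval enumerable: order the vertices by right endpoint, and the earlier neighbours of $v_i$ form an interval in this order. Concave-round graphs again inherit the lower bound from convex-round graphs through complementation.

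\emph{Main obstacle.} No step here is technically difficult. The only point to check is that the inclusions from the literature hold as stated, namely
\[
\text{biconvex}\subseteq\text{convex-round}\subseteq\text{interval enumerable}
\quad\text{and}\quad
\text{concave-round}=\text{co-(convex-round)}.
\]
If the inclusion of biconvex graphs turned out to be doubtful, I would fall back on a direct lower bound. For example, convex bipartite graphs, or chain-like bipartite families that are convex in a circular ordering, already give $n^{\Omega(n)}$ labelled graphs on $[n]$.
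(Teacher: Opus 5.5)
Your proposal is correct and follows essentially the same route as the paper: the upper bound comes from \myref{thm:int-enum} together with \myref{thm:counting-Lfinite}, transferred via convex-round $\subseteq$ interval enumerable and via complementation to concave-round graphs, and the lower bound comes from the factorial class of biconvex graphs; the paper also notes that proper circular-arc graphs are concave-round as a second lower-bound route, which parallels your interval-graph remark. One small caveat concerns your unused fallback: general convex bipartite graphs need not be convex-round, since that needs convexity on both sides, i.e., biconvexity, but chain graphs are biconvex and factorial, so that version of the fallback would work.
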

An alternative argument can be based on \cite[Theorem 3.1]{BanHuaYeo2000} where it is shown that concave-round graphs are circular-arc graphs, and then applying \myref{cor:k-circular-track}.} Conversely, proper circular-arc graphs are concave-round, which shows the lower bound for \shortversion{this class}\longversion{\myref{cor:int-enum}}.
\longversion{An interesting fact about convex-round graphs is that o}\shortversion{O}f the 11 minimal superclasses that are listed under graphclasses.org, all but one are known to have superfactorial speed; this only graph class with\longversion{ hitherto} unknown speed is \shortversion{$\cG_{L^{\text{int-en}}}$}\longversion{the class of interval enumerable graphs}.

% Convex-round graphs have T_2 as a forbidden induced subgraph, which is a 1-subdivision of a claw, so a tree.
% We get the forbidden induced subgraph from the complement of those of concave-round graphs.
% Reference: https://onlinelibrary.wiley.com/doi/pdf/10.1002/jgt.22486

\subsection{Thinness and Lettericity}

All previously considered graph classes can be viewed as generalizing the intersection model of interval graphs. \longversion{In this subsection, we are leaving the world of intersection graph models. W}\shortversion{Now, w}e will rather look at another characterization of interval graphs that we then generalize to $k$-thinness.
We \longversion{will }relate the $k$-thin graphs to two further graph classes: they contain the $k$-letter graphs and are contained in the graphs of boxicity~$k$. The latter result is implicitly claimed in Table~2 of \cite{BeiCKMMS2024} with a reference to an unpublished manuscript; graphclasses.org lists \cite{ManORC2007} as a reference, but this paper does not contain a proof. \longversion{Therefore, w}\shortversion{W}e provide an argument \longversion{here}\shortversion{in the appendix}.

\begin{toappendix}
\begin{figure}\centering
    \includegraphics{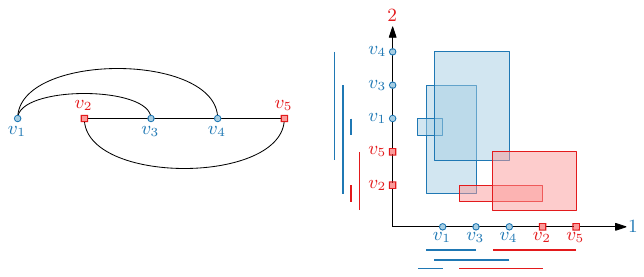}
    \caption{A boxicity-2 representation of a 2-thin graph. Partition $V^1$ is drawn in blue, partition $V^2$ is drawn in red.}
    \label{fig:thin-boxicity}
\end{figure}
 \end{toappendix}

\begin{thmrep}
\applabel{thm:thin-boxicity}
\longversion{For each $k\in\mathbb{N}_{\geq 1}$, the class of $k$-thin graphs is a subclass of the graphs of boxicity~$k$.}\shortversion{For $k\in\mathbb{N}_{\geq 1}$, $k$-thin graphs have boxicity at most~$k$.}
\end{thmrep}

%\todohf{This theorem might be known, as a similar relation is implicitly claimed in Table~2 of \cite{BeiCKMMS2024}; they refer for explanations to their own ArXiv version. There is a bigger table on page 28. BUT: exactly the entry important to us points to an unpublished manuscript that they only seem to know of via yet another paper that cites it. To Philipp: I assume you know at least some of the authors of this chaotic reference, maybe you can follow this up with our best Christmas wishes?}

\begin{proof}
Let $G=(V,E)$ be a $k$-thin graph with a partition $V=V^1\uplus V^2\uplus\cdots\uplus V^k$. We want to find a representation of the vertices by axis-parallel $k$-dimensional boxes in $\mathbb{R}^k$ that intersect \iffl an edge between the vertices exists. If we restrict the axis-parallel $k$-dimensional boxes to one dimension $i$, we obtain an interval representation of some graph $H^i=(V,E^i)$. We have that two boxes intersect \iffl the corresponding intervals overlap in each dimension, that is, we obtain a boxicity $k$ representation of the graph $(V,E^1\cap\ldots\cap E^k)$. See \myref{fig:thin-boxicity} for an example.

For every vertex $v_b$ and every partition $i$, let $\ell_b^i$ be the left-most neighbor of $v_b$ in $V^i$, that is, the vertex $v_a\in V^i$ with $\{v_a,v_b\}\in E$ such that $a$ is minimal. Note that $\ell_b^i$ might not exist for some combinations of $b$ and $i$.

We will now describe the interval representation in dimension $i$ and the graph $H^i$ that it represents. We order the right endpoints of the intervals as follows. We first place the right endpoints of the vertices in $V^i$, ordered by the linear ordering of the graph, and then the right endpoints of all other vertices ordered by the linear ordering of the graph (note that only the order for the vertices in $V^i$ matters, all other vertices could be placed in any arbitrary order). For every vertex $v_a \in V^i$, we place the left endpoint directly left of $\ell_a^i$, if it exists and $\ell_a^i<v_a$, or directly left of $v_a$, otherwise. For every other vertex $v_b \in V^j, j\neq i$, we place the left endpoint directly left of $\ell_b^i$, if it exists, and directly left of the leftmost vertex that is not in $V^i$, otherwise. Thus, the graph $H^i$ will contain the following edges:
\begin{enumerate}[label=(\roman*)]
    \item the edges $\{v_a,v_b\}$ with $v_a \in V^i$ and $a<b$, if $\ell_b^i=v_a$ or $\ell_b^i<v_a$;
    \item the edges $\{v_a,v_b\}$ with $v_a\in V^i$ and $v_b\notin V^i$, if $\ell_b^i=v_a$ or $\ell_b^i<v_a$;
    \item the edges $\{v_a,v_b\}$ with $v_a,v_b\notin V^i$.
\end{enumerate}

We will now prove that $G=(V,E^1\cap\ldots\cap E^k)$. First, let $\{v_a,v_b\}\in E$ with $v_a,v_b\in V^i$ and $a<b$. By definition, either $\ell_b^i=v_a$ or $\ell_b^i<v_a$, so by (i) we have $\{v_a,v_b\}\in E^i$ and by (iii) we have $\{v_a,v_b\}\in E^j$ for every $j\neq i$. Second, let $\{v_a,v_b\}\in E$ with $v_a\in V^i, v_b\in V^j, j\neq i$. Again, either $\ell_b^i=v_a$ or $\ell_b^i<v_a$, so by (ii) we have $\{v_a,v_b\}\in E^i$ and $\{v_a,v_b\}\in E^j$, and by (iii) we have $\{v_a,v_b\}\in E^h$ for every $h\neq i,j$. Finally, let $\{v_a,v_b\}\notin E$ with $v_a\in V^i$ and $a < b$. If $\ell_b^i<v_a$, then the $k$-thinness implies that $\{v_a,v_b\}\in E$; a contradiction. Hence, by (i) or (ii) $\{v_a,v_b\}\notin E^i$. \qed
\end{proof}

\begin{theorem}
For all $k\in\mathbb{N}_{\geq 1}$, any $k$-letter graph is a  $k$-thin graph.
\end{theorem}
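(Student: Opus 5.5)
Let $G=G(\cD,w)$ be a $k$-letter graph on $[n]$ with $w\in[k]^n$ and decoder $\cD\subseteq[k]^2$. The natural plan is to take the vertex ordering given by the positions in~$w$, i.e.\ $v_i=i$, and to partition the vertices by their letters, $V^j\coloneqq\{i\in[n]\mid w[i]=j\}$ for $j\in[k]$. This yields at most $k$ classes, and empty classes do no harm: we can pad with empty sets, or observe that being $k'$-thin for $k'\le k$ implies $k$-thinness. It then remains to check the $k$-thinness condition for this ordering and partition.

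So let $a<b<c$ with $v_a,v_b\in V^j$ for some $j$, and assume $\{v_a,v_c\}\in E$. Since $a<c$, the definition of $G(\cD,w)$ says that $w[a]w[c]\in\cD$. As $v_a$ and $v_b$ lie in the same class, $w[b]=w[a]=j$, and therefore $w[b]w[c]=w[a]w[c]\in\cD$. Together with $b<c$, this shows $\{v_b,v_c\}\in E$, which is precisely the required implication.

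The only point that needs care is the orientation. The decoder is an ordered relation that is applied with the smaller index first, and the thinness condition compares two vertices $v_a,v_b$ of the same class against a vertex $v_c$ that lies to the right of both. This is why the vertex ordering must agree with the order of positions in~$w$ and must not be reversed: with the ordering reversed, $v_c$ would come first and we would have to test $w[c]w[a]$, and the argument would still go through in the same way. Apart from this, the proof is an immediate unfolding of the definitions, so I do not expect a real obstacle.
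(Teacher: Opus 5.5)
Your proof is correct and is essentially the paper's argument: order the vertices by their position in $w$, partition them by letter, and use $w[b]=w[a]$ to turn $w[a]w[c]\in\cD$ into $w[b]w[c]\in\cD$. Your closing remark contradicts itself (the ordering ``must not be reversed'', yet ``the argument would still go through'') but does no harm: the reversed ordering works just as well, because all that matters is that $v_c$ lies on the same side of both $v_a$ and $v_b$.
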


\begin{proof}
Let $G = ([n],E)$ be a $k$-letter graph. Hence, there is a word $w \in [k]^*$ and a decoder $\cD \subseteq [k]^2$ 
%There is an alphabet $\Sigma$ with $|\Sigma| = k$, a word $w \in \Sigma^*$ and  a decoder $\mD \subseteq \Sigma^2$ 
such that $G \cong G(\cD, w)$. Let $P_\ta \coloneqq \{ j \in [n] \mid w[j] = \ta \}$ for each $\ta \in [k]$. Clearly, $(P_\ta)_{\ta\in [k]}$ is a partition of~$V$. Let $r,s,t \in [n]$ with $r < s < t$, $r,s \in P_\ta$ for a letter $\ta \in [k]$ and $\{r,t\} \in E$. The latter implies $w[r] w[t] \in \cD$. Hence, $w[s] w[t] \in \cD$, since $w[s] = \ta = w[r]$. Thus, $\{s,t \} \in E$ and $G$ is $k$-thin.\qed 
\end{proof}

\begin{corollary}\label{cor:k-thin}
For each $k\in\mathbb{N}_{\geq 1}$, the class of $k$-thin graphs and the class of $k$-letter graphs have at most factorial speed.
\end{corollary}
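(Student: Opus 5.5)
The corollary follows by chaining the inclusions just established with the factorial bound for bounded boxicity. First, the preceding theorem shows that every $k$-letter graph is $k$-thin. Second, \myref{thm:thin-boxicity} shows that every $k$-thin graph has boxicity at most~$k$. This gives the chain
\[
\text{$k$-letter graphs} \subseteq \text{$k$-thin graphs} \subseteq \text{graphs of boxicity} \leq k .
\]
By \myref{thm:bboxGraphs}, the class of graphs of boxicity at most~$k$ equals $\cG_{L_k^{\text{box}}}$ for the finite, $2k$-uniform language $L_k^{\text{box}}$. So \myref{thm:counting-Lfinite} (or directly \myref{cor:bboxGraphs}) shows that this class has speed at most $2^{O(n\log n)}$.

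The remaining step is the monotonicity of speed under inclusion, noted in the introduction. If $\cC\subseteq\cC'$, then $\cC^n\subseteq\cC'^n$ for every~$n$, so $|\cC^n|\leq|\cC'^n|\leq n^{cn}$ for the constant $c$ that works for boxicity~$\leq k$. Applying this to both the $k$-thin class and the $k$-letter class gives at most factorial speed for each.

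I expect no real obstacle, since all the substantive work sits in \myref{thm:thin-boxicity} and \myref{thm:bboxGraphs}, which we may assume. The one point to check is consistency of conventions: the speed counts labeled graphs on $[n]$, and these graph classes are closed under isomorphism. So the inclusions hold as inclusions of labeled classes, and the set inclusion $\cC^n\subseteq\cC'^n$ is valid. If desired, one could bypass boxicity for $k$-letter graphs and note that a word in $[k]^n$ together with a decoder gives at most $k^n2^{k^2}$ labeled graphs up to relabeling by $n!$ permutations, which is again $2^{O(n\log n)}$. The inclusion chain is the cleaner route, however.
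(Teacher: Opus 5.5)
Your proposal is correct and follows the paper's own route. The paper states the corollary right after the inclusion results, so it is read off from the chain $k$-letter graphs $\subseteq$ $k$-thin graphs $\subseteq$ graphs of boxicity at most $k$, together with \myref{cor:bboxGraphs} and the fact that speed is monotone under inclusion. Your optional direct count $n!\,k^n2^{k^2}=2^{O(n\log n)}$ for $k$-letter graphs is also right, and the $n!$ relabeling factor you include is genuinely needed for an isomorphism-closed class: threshold graphs are $2$-letter graphs and already have factorially many labeled members on $[n]$. This is exactly the factor that the count in the proof of \myref{thm:kletter-speed} leaves out.
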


As\longversion{ the} $1$-thin graphs are exactly the interval graphs, $k$-thin graphs are\longversion{ indeed} factorial.
%\todo{Somebody should check the following; also, I did not detect anything with my claims in \cite{Loz2023}, but this paper is quite densely written and is the only one I know of that both discusses speed and lettericity, only in two different sections, as it appears to be the case.}
By a\longversion{ basic} combinatorial argument, we show that $k$-letter graphs have a smaller speed.

\begin{thmrep}
\applabel{thm:kletter-speed}
For\longversion{ each} $k\in\mathbb{N}_{\geq 2}$, \longversion{the class of }$k$-letter graphs ha\longversion{s}\shortversion{ve} exponential speed.
\end{thmrep}

\begin{proof}
Let  $k\in\mathbb{N}_{\geq 2}$ be fixed. Obviously, $|[k]^n|=k^n$ is the number of different words of length~$n$ on $k$ letters. For specifying a $k$-letter graph, apart from a word $w\in [k]^n$, one needs a decoder, a set of 2-letter words over the alphabet $[k]$, i.e., $\cD\subseteq {[k]^2}$.
Hence, in total, there cannot be more than $2^{k^2}\cdot k^n$ labeled $k$-letter graphs of order~$n$. As $k$ is fixed, this speed is at most exponential.

Recall that a graph is a \emph{complete split graph} if its vertex set can be partitioned in an independent set and a clique such every vertex in the independent set is adjacent to every vertex in the clique. Complete split graphs are known to have exponential speed (more precisely, $2^n$) which is readily seen, as we can assign any vertex either to the independent set or to the clique.\footnote{Here, it is important that we consider labeled speed.}
Now, consider $w\in[2]^n$ and $\cD=\{11,12,21\}$. Then, the positions (vertices) labeled~1 will form a clique and the vertices labeled~2 will form an independent set, while every vertex in the independent set is adjacent to every vertex in the clique. This way, each complete split graph is seen to be a $k$-letter graph for each $k>1$.
\qed 
\end{proof}

\begin{toappendix}
The case $k=1$ is excluded as the speed of 1-letter graphs is even constant. More precisely, it is two: the complete and the null graph of order~$n$ are the only such graphs, see \cite{Pet2002}. By \cite[Theorem~12]{SchZit94}, this is the largest hereditary class with constant speed.
\end{toappendix}
\shortversion{\noindent}As already interval graphs have factorial speed, we see:

\begin{corollary}\label{cor:k-letter-thin}
\longversion{For each $k\in\mathbb{N}_{\geq 1}$, t}\shortversion{T}he $k$-letter graphs form a proper subclass of the $k$-thin graphs.
\end{corollary}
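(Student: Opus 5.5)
The statement has two halves. First, every $k$-letter graph is $k$-thin; this is the preceding theorem. Second, some $k$-thin graph is not a $k$-letter graph. I will obtain the second half by a counting argument that compares speeds rather than by exhibiting an explicit graph.

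The lower bound comes from interval graphs. The $1$-thin graphs are exactly the interval graphs. Every $1$-thin graph is also $k$-thin: take the same ordering and put all vertices in $V^1$, leaving the other parts empty. If the definition demands nonempty parts, we can instead add isolated vertices, or simply note that a partition into fewer classes still satisfies the thinness condition. Hence the class of $k$-thin graphs contains the interval graphs, which have factorial speed. So the number of labeled $k$-thin graphs on $[n]$ is at least $n^{cn}$ for some constant $c>0$ and all large $n$.

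The upper bound comes from \myref{thm:kletter-speed}, by its counting bound rather than just its conclusion. That bound gives at most $2^{k^2}k^n$ labeled $k$-letter graphs on $[n]$. The exponential bound $2^{k^2}k^n$ is eventually smaller than $n^{cn}$. Therefore, for large $n$, some labeled interval graph on $[n]$, which is $k$-thin, is not a $k$-letter graph. Combined with the inclusion, this makes the inclusion proper.

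The case $k=1$ is excluded from \myref{thm:kletter-speed}, but it is even easier. The $1$-letter graphs are only the complete and edgeless graphs, while $P_3$ is an interval graph and hence $1$-thin. Alternatively, the bound $2^{1}\cdot 1^n$ still applies.

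The only delicate point is the inclusion of interval graphs into $k$-thin graphs, and it is immediate from the definition. Everything else is a direct comparison of the two growth rates.
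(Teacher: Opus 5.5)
Your route is the paper's own: the inclusion theorem plus a speed comparison with interval graphs. However, the counting step it hinges on fails.

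The bound $2^{k^2}k^n$ counts pairs $(\cD,w)$, i.e., graphs $G(\cD,w)$ whose vertices \emph{are} the positions of $w$. The class of $k$-letter graphs is closed under isomorphism: the preceding theorem already uses $G\cong G(\cD,w)$, and $k$-thinness is isomorphism-invariant. So a labeled $k$-letter graph on $[n]$ also needs a bijection between $[n]$ and the positions. The valid labeled bound is $2^{k^2}k^n\cdot n!$, and the factor $n!$ cannot be dropped.

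For $k=2$ take $\cD=\{12,22\}$, so a later vertex with letter $2$ is adjacent to all earlier ones; this yields the threshold graphs. Take $w=(12)^m$ with $n=2m$. Position $2t-1$ has degree $m-t+1$ and position $2t$ has degree $m+t-1$. So all degrees are distinct except that positions $1$ and $2$ are twins, and this single graph already has $n!/2$ labelings on $[n]$.

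Hence for $k\ge 2$ the labeled speed of $k$-letter graphs is factorial, just as for interval graphs, and comparing with an unspecified $n^{cn}$ proves nothing. The same objection applies to the labeled count in the proof of \myref{thm:kletter-speed}, on which the paper's one-line derivation of this corollary rests: $2^{k^2}k^n$ bounds only the number of isomorphism types. Your $k=1$ case is fine, since $K_n$ and $\overline{K_n}$ are the only labeled $1$-letter graphs on $[n]$.

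There are two ways to repair the argument.
\begin{itemize}
\item Compare isomorphism types: at most $2^{k^2}k^n$ non-isomorphic $k$-letter graphs of order $n$, against superexponentially many non-isomorphic interval graphs. The latter is a known fact (recalled in the paper's conclusion), but it does not follow from the labeled bound $n^{cn}$ alone.
\item More simply, exhibit a witness, as the paper's appendix does with the paths $P_{3k+1}$ via the known lettericity of paths.
\end{itemize}
A self-contained witness is the matching $(k^2+1)K_2$. It is an interval graph, hence $k$-thin. Suppose it were $G(\cD,w)$. By pigeonhole, two of its edges $\{x_1,y_1\}$, $\{x_2,y_2\}$ with $x_i<y_i$ and $x_1<x_2$ carry the same letter pair $ab\in\cD$. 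Since $x_1<x_2<y_2$ and $w[x_1]w[y_2]=ab$, the non-edge $\{x_1,y_2\}$ would be an edge, a contradiction.

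A minor point: adding isolated vertices does not show that a \emph{given} graph is $k$-thin. What you want is that empty parts are allowed, or that refining a valid partition preserves the thinness condition, since that condition only constrains pairs $v_a,v_b$ lying in a common part.
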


\begin{toappendix}
We can also provide an explicit family of examples to prove the strictness of this inclusion. Clearly, every path is an interval graph, i.e., it is 1-thin. However, by \cite{Fer2020}, the paths $P_{3k-2}$ are $k$-letter graphs but not $(k-1)$-letter graphs.
\end{toappendix}

\begin{toappendix}
In \cite{FenFMRS2025}, the definition of $k$-letter graphs has been extended 
towards so-called generalized $k$-letter graphs. They allow for more general decoders. Nonetheless, even then the argument from \myref{thm:kletter-speed} remains valid, so that we conclude:

\begin{corollary}
For each fixed $k\in\mathbb{N}_{\geq 2}$, the class of generalized $k$-letter graphs has exponential speed.
\end{corollary}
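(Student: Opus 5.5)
The statement is the corollary: for each fixed $k\geq 2$, generalized $k$-letter graphs have exponential speed. The plan is to reproduce the two-sided argument of \myref{thm:kletter-speed}, checking that neither direction depends on the decoder being a subset of $[k]^2$. The only change is that the decoder now comes from a larger, but still $n$-independent, family.

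\emph{Upper bound.} A labeled generalized $k$-letter graph on vertex set $[n]$ is determined by two things: a word $w\in[k]^n$ and a generalized decoder. In \cite{FenFMRS2025} such a decoder is a finite object built over the alphabet $[k]$, for instance a relation on $[k]^2$ together with finitely many orientation or position flags per pair. The key point is that the number $D_k$ of admissible decoders depends only on $k$, not on $n$. I expect this to be the only real obstacle: one has to read off from the definition that the decoder is chosen from a finite set of size bounded in terms of $k$ alone, for example at most $2^{c k^2}$ for some constant $c$. Granting this, the map $(\cD,w)\mapsto G(\cD,w)$ is a surjection onto the labeled class of order $n$. Hence the class has at most $D_k\cdot k^n\leq c_2^n$ labeled graphs on $[n]$, for a suitable constant $c_2$ depending on $k$.

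\emph{Lower bound.} Every ordinary $k$-letter graph is a generalized $k$-letter graph, since the generalized decoders include the classical ones. By the proof of \myref{thm:kletter-speed}, all labeled complete split graphs on $[n]$ are $2$-letter graphs, realized with $\cD=\{11,12,21\}$. They are therefore $k$-letter graphs for every $k\geq 2$. Since there are at least $2^n$ labeled complete split graphs on $[n]$ (up to the trivial coincidence when one side is empty), the speed is at least $c_1^n$ for some $c_1>1$.

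Combining the two bounds gives $c_1^n\leq|\cC^n|\leq c_2^n$, i.e., exponential speed, which is exactly the claim.
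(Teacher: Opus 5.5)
Your plan (count realizations for the upper bound, embed the complete split graphs for the lower bound) is what the paper means when it says the argument of \myref{thm:kletter-speed} ``remains valid''. Your lower bound is fine, provided ordinary decoders are admissible generalized decoders. The gap is in the upper bound, exactly at the step you grant. Your estimate $D_k\cdot k^n$ never uses $k\geq 2$. For $k=1$, where $[1]^n$ contains a single word, it says there are at most $D_1$ generalized $1$-letter graphs on $[n]$, i.e., bounded speed. The paper, however, records (citing \cite[Proposition~4]{FenFMRS2025}) that generalized $1$-letter graphs have \emph{linear} speed. So your two premises cannot both hold: a decoder family whose size depends on $k$ alone, and surjectivity of $(\cD,w)\mapsto G(\cD,w)$ onto the graphs on $[n]$. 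Your ``only real obstacle'' is therefore not merely unchecked; as you formulate it, the $k=1$ case refutes it.

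To repair the argument, you have to determine from the actual definition which premise fails.

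If generalized decoders carry data that grows with $n$ (which the $k=1$ behaviour suggests when vertices are identified with positions), you must show that the number of distinct adjacency rules they induce on words of length $n$ is at most $2^{O(n)}$. Then $2^{O(n)}\cdot k^n$ still gives exponential speed.

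If instead surjectivity is what fails, i.e., the count is meant up to relabeling of vertices, the counting approach collapses. Surjectivity only holds because vertex $i$ is identified with position $i$ of $w$. Once all relabelings are counted, the bound already fails for ordinary $2$-letter graphs: $\cD=\{12,22\}$ produces every threshold graph, and there are factorially many labeled threshold graphs on $[n]$.
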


The case $k=1$ is a bit different, because now the fact that the class of generalized $k$-letter graphs is not hereditary comes into play insofar as the speed of generalized $1$-letter graphs is linear, see \cite[Proposition~4]{FenFMRS2025}. 
\end{toappendix}

\subsection{Containment Graphs\shortversion{ and (Co-)comparability Graphs}}

A different way to define graphs (and hence graph classes) from geometric objects is behind the idea of containment graphs: Here, an edge is drawn between vertices $u$ and $v$ if either the object associated to~$u$ is contained in the object associated to~$v$ or vice versa. Starting again with intervals, we arrive at interval containment graphs\longversion{ in this way. In fact, they are}\shortversion{,} better known as permutation graphs. %We are considering several containment graph models in this section.
%For instance, a 
A graph is a \emph{containment graph of circular arcs} if it has a containment model consisting of arcs on a circle. This graph class is equivalent to the class of circular permutation graphs\longversion{~\cite{RotUrr82}, as shown in~\cite{NirMasNak88}}\shortversion{ \cite{NirMasNak88,RotUrr82}}.\longversion{ By \myref{rem:adding-isolates}, the following theorem shows that circular permutation graphs have factorial speed.}

\begin{thmrep}
\applabel{thm:containment-circular-arcs}
There is a finite language $L=\langle 0110,100110,010110,01100,\linebreak[4] 01011,00111 \rangle$ such that $\cG_L$ is the class of containment graphs of circular arcs, plus isolated vertices. \shortversion{Hence,  $\cG_L$ has factorial speed.}
\end{thmrep}

\begin{proof}[sketch]
Let $G=(V,E)$ be given as a containment graphs of circular arcs. Formally, this means that we associate to $v\in V$ a pair of numbers $(x_v,y_v)\in [0,2\pi]^2$ such that if $x_v<y_v$, this refers to the arc
starting at $(\cos(x_v),\sin(x_v))$ and spanning anti-clockwise until $(\cos(y_v),\sin(y_v))$ on the unit circle; if   $x_v>y_v$, we move clockwise from $(\cos(y_v),\sin(y_v))$ to $(\cos(x_v),\sin(x_v))$ on the unit circle to describe the circular arc. We can assume that all numbers in $P_G=\{x_v,y_v\mid v\in V\}$ are different, i.e., $|P_G|=2|V|$,  and that $0\notin P_G$. Then, the arcs described by $(x_v,y_v)$ with $x_v>y_v$ are precisely those arcs containing the point $(1,0)$. Let $u\in V^*$ be any word that contains exactly one letter $v$ for every $v\in V$ with  $x_v>y_v$. Initialize $w=u$. Then, we start from $(1,0)$ and move along the unit circle in anti-clockwise direction and we append $v$ to $w$ whenever we encounter an endpoint (i.e., $(\cos(x_v),\sin(x_v))$ or $(\cos(y_v),\sin(y_v))$) of the arc associated to~$v$ on our journey around the circle, until we are back at $(1,0)$. This describes the word~$w$. It can be shown in general that $G\cong G(L,w)$.
For instance, $w=\ta\tb\ta\tc\td\td\tb\tb\tc\ta$ describes a $P_4$; we might have found $x_\ta=\frac{16\pi}{9}$, $y_\ta=\frac{2\pi}{9}$, $x_\tb=\frac{12\pi}{9}$, $y_\tb=\frac{10\pi}{9}$, $x_\tc=\frac{4\pi}{9}$, $y_\tc=\frac{14\pi}{9}$, $x_\td=\frac{6\pi}{9}$, $y_\td=\frac{8\pi}{9}$, so that we have the edges $\{\ta,\tb\}$, $\{\tb,\td\}$ and $\{\tc,\td\}$ in the intersection model. Notice that $h_{\ta,\tb}(w)=101001=\widetilde{010110}$, $h_{\ta,\tc}(w)=00110\notin L$, $h_{\tb,\td}(w)=01100$, $h_{\tc,\td}(w)=0110$. 
As usual, additional isolated vertices can be modeled by letters that occur, say, four times. 

Conversely, consider a word $w\in V^*$. Any letter $v\in V$ with $|w|_v\notin\{2,3\}$ describes an isolated vertex. Assume now that $w$ contains only letters that occur twice or thrice. Consider the 2-uniform word $w'$ obtained from $w$ by deleting the first occurrence of any letter that occurs thrice.
Associate to $v\in V$ the pair of numbers $(x_v,y_v)$ with $x_v=\frac{2\position_v(1,w')\pi}{2|V|+1}$, $y_v=\frac{2\position_v(2,w')\pi}{2|V|+1}$ if $|w|_v=2$ and with   $x_v=\frac{2\position_v(2,w')\pi}{2|V|+1}$, $y_v=\frac{2\position_v(1,w')\pi}{2|V|+1}$ if $|w|_v=3$. This describes an arc containment model of $G(w,L)$.
\qed  
\end{proof}
\longversion{
\begin{corollary}
\label{cor:containment-circular-arcs}
The class of containment graphs of circular arcs ha\longversion{s}\shortversion{ve} factorial speed.
\end{corollary}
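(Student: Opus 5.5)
The corollary follows by a sandwich argument. For the upper bound, I invoke \myref{thm:containment-circular-arcs}. The language $L=\langle 0110,100110,010110,01100,01011,00111\rangle$ is finite, so \myref{thm:counting-Lfinite} gives that $\cG_L$ has speed at most $2^{\cO(n\log n)}$. \myref{thm:containment-circular-arcs} says $\cG_L$ is the class of containment graphs of circular arcs together with arbitrary isolated vertices. Every containment graph of circular arcs on $[n]$ therefore also lies in $\cG_L^n$, so its speed is bounded by that of $\cG_L$. This is exactly the reasoning recorded in \myref{rem:adding-isolates}: passing to a subclass cannot increase speed.

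For the lower bound, I exhibit a factorial subclass. The natural candidate is permutation graphs, i.e., interval containment graphs, which are known to be factorial. I embed an interval containment model into a circle as follows. Rescale all intervals into a short arc, say $[\varepsilon,\pi]$, of the unit circle, and use the corresponding arcs that do not contain $(1,0)$. Containment among these arcs is then exactly containment among the original intervals. Hence every permutation graph is a containment graph of circular arcs, and the class has at least factorial speed.

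Combining the two bounds gives $n^{c_1 n}\le |\cC^n|\le n^{c_2 n}$, i.e., factorial speed. The only step needing care is the embedding for the lower bound. The issue is to ensure that no arc wraps around, so that the linear and circular containment notions agree. Choosing all arcs inside a half circle that avoids the base point settles this. Beyond that, everything is a direct application of earlier results.
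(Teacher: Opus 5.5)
Your proof is correct and follows the paper's own route: the finite language of \myref{thm:containment-circular-arcs}, together with \myref{rem:adding-isolates} and \myref{thm:counting-Lfinite}, gives the upper bound, and the lower bound comes from permutation (interval containment) graphs, an inclusion the paper leaves implicit and your half-circle embedding makes explicit. One caveat comes from the paper rather than from you: as printed, that language seems to lack $\langle 11100\rangle$, the projection the construction produces when a non-wrapping arc lies in the segment $[x_v,2\pi]$ of a wrapping arc $v$ with $x_v>y_v$ (the marker and both endpoints of $v$ come first, then both endpoints of the contained arc); adding it keeps the language finite, so your argument goes through unchanged.
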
}

\begin{remark}
According to \href{http://graphclasses.org/classes/gc_136.html}{graphclasses.org/classes/gc\_136.html}, 
the question of proper containment of containment graphs of circular arcs in several super-classes like comparability graphs seems to be open. As the speed of all these super-classes is  superfactorial, we have settled proper containment.
\end{remark}

\longversion{\subsection{(Co-)comparability Graphs of Fixed Dimension Posets}}
\label{subsec:orderings}

As containment between geometrical objects obviously  yields a partial ordering on the set of these objects, it is quite natural go go a step further from containment graphs to comparability graphs\longversion{, where the considered partial ordering no longer needs to be geometric. This section is hence devoted to the study of comparability graphs of partial orders of dimension~$d$}. 
\begin{toappendix}
A first idea would be to simply write the $d$ linear orders on~$V$ one after the other to obtain a word over~$V$ that describes a comparability graph of the given partial order of dimension~$d$.
This would lead us to the language $\langle (01)^d\rangle$. 
A special case (for $d=2$) would be the overlap graphs, also known as circle graphs. More generally speaking, we would arrive at $L_{\classical}\cap \{0,1\}^{d\text{-uni}}$. 
However, as we will see in the proof, we need a different set of languages.
\end{toappendix}
To this end, let us first inductively define \emph{well-parenthesized words} (wpw) by saying:
\begin{itemize}
    \item $()$ is a wpw with one pair of parentheses.
    \item If $w$ is a wpw with $d$  pairs  of parentheses, then $(w)$ is a wpw with $(d+1)$  pairs  of parentheses.
    \item If $w_1$ and $w_2$ are wpw with $d_1$ and $d_2$  pairs  of parentheses, respectively, then $w_1w_2$ is a wpw with $(d_1+d_2)$  pairs  of parentheses.
 \longversion{   \item No other words over the alphabet $\{(,)\}$ are wpw.}
\end{itemize}
Let $L_d^{\text{Dyck}}$ be the language of all wpw with exactly $d$  pairs of parentheses. The language of all wpw (with any number of parentheses) is also known as the language of restricted Dyck words, see \cite{Ber79}. 
Let  $L_d^{\text{cmp}}=\langle h_{(,)}(L_d^{\text{Dyck}})\rangle$. For example,
 $L_3^{\text{cmp}}=\langle 000111,001011,001101,010011,010101\rangle$\longversion{ as $((()))$, $(()())$, $(())()$, $()(())$ and $()()()$ are the wpw with 3 pairs of parentheses}. 
 To our knowledge, the speed of comparability graphs of fixed dimension $d\ge 3$ had not previously been settled.

\begin{thmrep}\applabel{thm:dimGraphs}
For each $d\in\mathbb{N}_{\geq 2}$, \longversion{the class of }comparability graphs of posets of dimension~$d$ can be represented by the\longversion{ $d$-uniform} language $L_d^{\text{cmp}}$. \shortversion{Hence, $\cG_{L_d^{\text{cmp}}}$ is factorial.}
\end{thmrep}

\begin{proof} 
Clearly, $L_d^{\text{cmp}}$ is $d$-uniform.
\begin{enumerate}
    \item 
If $G=(V,E)$ is a  comparability graph of a poset~$P\subseteq V\times V$ of dimension~$d$, this means that there are $d$ linear orderings $L_1,\dots,L_d\subseteq V\times V$ such that $\bigcap_{i\in [d]}L_i=P$. Let $n=|V|$ and fix a bijection $f:[n]\to V$. A linear ordering $L_i$ (for any $i\in [d]$) can be expressed as a 1-uniform word $w_i=w_{i,1}\cdots w_{i,n}$ such that, for all $j,k\in [n]$, $w_{i,j}$ precedes $w_{i,k}$ in $w_i$ \iffl $(f(j),f(k))\in L_i$. Hence, for all $j,k\in [n]$, $(f(j),f(k))\in P$ \iffl $w_{i,j}$ precedes $w_{i,k}$ in all $w_i$ (with $i\in [d]$). Now, let $w=w_1\cdots w_d$. Consider some vertices $u,v\in V$. Then $uv$ is an edge in $G(w,L_d^{\text{cmp}})$ iff for $j=f^{-1}(u)$ and $k=f^{-1}(v)$, either $w_{i,j}$ precedes $w_{i,k}$ in any $w_i$ (as this is equivalent to $h_{u,v}(w)=(01)^d$), or $w_{i,k}$ precedes $w_{i,j}$ in any $w_i$ (as this is equivalent to $h_{v,u}(w)=(01)^d$). Hence, $uv$ is an edge in $G(w,L_d^{\text{cmp}})$ \iffl either $(u,v)\in P$ or $(v,u)\in P$, i.e., if $u,v$ are comparable in~$P$.

\item
Conversely, let $w\in V^*$. Let $G=(L_d^{\text{cmp}},w)=(V,E)$. By $d$-uniformity of $L_d^{\text{cmp}}$, any letters that do not occur exactly~$d$ times in~$w$ correspond to isolated vertices in~$G$. Let $V_I$ be the collection of all these isolated vertices and define $V'=V\setminus V_I$ and $w'\in V^{\prime\,*}$ as the word obtained from $w$ by erasing all letters from $V_I$. Hence, $G'=(L_d^{\text{cmp}},w')=G[V']$ and by construction, $w'$ is $d$-uniform.  
With the help of the function $\position_\ta(i,w)$, we extract an ordering $L_i$ on $V'$ as follows: $(u,v)\in L_i$ if $\position_u(i,w)\leq \position_v(i,w)$.
Now, consider an edge $uv\in E$, or, equivalently, $h_{u,v}(w)\in L_d^{\text{cmp}}$.
Notice that  $\position_u(i,w)\leq \position_v(i,w)$ \iffl  $\position_0(i,h_{u,v}(w))\leq \position_1(i,h_{u,v}(w))$ for all $i\in [d]$, which in turn is equivalent to $h_{u,v}(w)\in L_d^{\text{cmp}}$. \qed 
%We now define the $d$-linearization $d\text{-lin}(w')$ as follows. For $d=1$, $d\text{-lin}(w')=w'$. For $d>1$, $d\text{lin}(w')=w_{\text{first}}(d-1)\text{lin}(w'')$, where $w_{\text{first}}$ lists all letters from $V'$ exactly once, in the order in which they appear first in~$w'$, and $w''$ is a $(d-1)$-uniform word obtained from $w'$ by deleting all first occurrences of any letter.
%Obviously, $h_{u,v}(w')\in L_d^{\text{cmp}}$ means $h_{u,v}(w')=h_{u,v}(d\text{-lin}(w'))\in L_d^{\text{cmp}}$. Assume that $h_{u,v}(w')\notin L_d^{\text{cmp}}$. Walking from left to right over  $h_{u,v}(w')$, there must be a first occurrence of $00$ or of $11$. Then, at the very same position, there is also an occurrence of $00$ or of $11$ in $h_{u,v}(d\text{-lin}(w'))$. Hence, $h_{u,v}(d\text{-lin}(w'))\notin L_d^{\text{cmp}}$. This is remarkable, because in general $h_{u,v}(w')\neq h_{u,v}(d\text{-lin}(w'))$ in that case. For instance, consider the $3$-uniform word $000111$ \todo{This does not work, sorry ...}
\end{enumerate}
\end{proof}

\longversion{\begin{corollary}\label{cor:ddimGraphs}
For each $d\in\mathbb{N}_{\geq 2}$, the class of comparability graphs of posets of dimension~$d$ has factorial speed.
\end{corollary}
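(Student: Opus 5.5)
Since \myref{thm:dimGraphs} shows that the class of comparability graphs of posets of dimension at most~$d$ is contained in $\cG_{L_d^{\text{cmp}}}$, I will deduce the corollary from this inclusion plus a known lower bound. The plan has two parts: an upper bound of $n^{cn}$ and a lower bound of $n^{c'n}$.

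For the upper bound, note that $L_d^{\text{cmp}}=\langle h_{(,)}(L_d^{\text{Dyck}})\rangle$ is finite. It consists of binary words of length exactly $2d$, and there are only finitely many well-parenthesized words with $d$ pairs of parentheses. Hence \myref{thm:counting-Lfinite} applies directly: $\cG_{L_d^{\text{cmp}}}$ has speed at most $2^{O(n\log n)}$. The class of comparability graphs of posets of dimension~$d$ is a subclass, so it inherits this bound. The corollary is naturally read as ``dimension at most~$d$''. Even if exact dimension~$d$ were meant, that class is still a subclass of $\cG_{L_d^{\text{cmp}}}$, because the first direction of the proof of \myref{thm:dimGraphs} only uses that the poset is an intersection of $d$ linear orders. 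A poset of smaller dimension can be written that way by repeating one of its linear orders.

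For the lower bound, I will use the monotonicity remark from the introduction: a superclass of a factorial class has at least factorial speed. Since $d\ge 2$, every poset of dimension at most~$2$ is an intersection of $d$ linear orders (again by repeating a realizer order). So the comparability graphs of $2$-dimensional posets are contained in the class. These are exactly the permutation graphs, which have factorial speed; in the paper's formalism they are $\cG_{\langle 0110\rangle}$, recalled in the examples as factorial. This gives $|\cC^n|\ge n^{c_1 n}$.

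I expect the only delicate point to be the definitional one: exact dimension versus dimension at most~$d$. For the lower bound under the exact-$d$ reading, I would take a permutation graph on $n-O(1)$ vertices and add a fixed gadget that forces dimension~$d$, for example a disjoint standard example $S_d$. This keeps the count at $n^{\Omega(n)}$, since disjoint union of posets does not lower the dimension and keeps it at most $\max(2,d)$. Everything else is a direct invocation of \myref{thm:counting-Lfinite} and \myref{thm:dimGraphs}.
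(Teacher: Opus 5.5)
Your proposal is correct and follows essentially the same route as the paper. The upper bound comes from the finiteness of $L_d^{\text{cmp}}$ via \myref{thm:dimGraphs} and \myref{thm:counting-Lfinite}, and the lower bound from the inclusion of the factorial class of permutation graphs (comparability graphs of posets of dimension at most~$2$). Your extra treatment of the exact-dimension reading, using a disjoint standard example and $\dim(P+Q)=\max(2,\dim P,\dim Q)$, is sound but unnecessary, since the paper intends dimension at most~$d$.
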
}

\longversion{\begin{remark}
From the considerations in \cite{KitSei2008,Sei2008}, one can deduce that each comparability graph of a poset of dimension~$d$ can be permutationally represented by a $d$-uniform word. Translated into our framework, this means that $\cG_{L_d^{\text{cmp}}}\subseteq \cG_{\langle (01)^d\rangle}$.
This can be seen as a refinement of the fact that comparability graphs are word-representable in the classical sense and was already made explicit in \cite[Lemma~4]{HalKitPya2011}. Reference~\cite{MozKri2025} additionally discusses some relations to boxicity.
\end{remark}}

\longversion{\begin{remark}\label{rem:planar}
Schnyder~\cite{Sch89} suggested to associate to a graph $G=(V,E)$ a strict order of height~2 on the set $X=V\cup E$ by setting $x<y$ \iffl $x\in V$, $y\in E$ and $x$ is incident to~$y$. He could prove that $G$ is planar \iffl the corresponding partial order $\leq$ on~$X$ has dimension at most~3. As the comparability graph of~$\leq$ is basically the incident graph of~$G$, with vertex set~$X$, and as planar graphs of order~$n$ have incident graphs of order at most $4n$, the fact (deduced from \myref{cor:ddimGraphs}) that there are factorially many comparability graphs of partial orders of dimension at most~3 implies that  planar graphs have at most factorial speed.  This is interesting as it proves that sometimes, our approach must necessarily employ inclusion arguments, because it is possible to show that there is no language representation of planar graphs, see~\cite{FenFFMS2024}.
\end{remark}}

\longversion{Although the definition of interval dimension looks more difficult at first glance when compared to that of a dimension of a poset, we will see that now, the simple idea discussed in the beginning of this section works out.}
The definition of comparability graphs of bounded  interval dimension can be easily re-interpreted as a multi-interval representation. \longversion{Namely, l}\shortversion{L}et $d\geq 1$.
$G=(V,E)$ is a comparability graph of posets of bounded interval dimension~$d$ \iffl we can associate to each vertex~$v$ a \longversion{collection}\shortversion{set} of $d$ intervals $I_{v,i}$, $i\in [d]$, with $I_{v,i}=[\ell_{v,i},r_{v,i}]$, such that $uv\in E$ \iffl either $r_{u,i}\leq \ell_{v,i}$ for all $i\in [d]$ or  $r_{v,i}\leq \ell_{u,i}$ for all $i\in [d]$. 
\longversion{The following characterization allows us to speak again about a box representation of~$\cG_{L_d^{\text{idim}}}$. This is in fact the same representation as investigated by Felsner, Habib and Möhring in \cite{FelHabMoh94}. In that paper, they showed that if one starts with a partial ordering $\prec$ on $V$, then its interval dimension equals the dimension of the partial ordering $\prec'$ defined between the boxes $B_v$ associated to the elements~$v$ of~$V$ via the intervals  $I_{v,i}$ by setting $B_u\prec' B_v$ \iffl $r_{u,i}\leq \ell_{v,i}$ for all $i\in [d]$. The ordering $\prec'$ can be also viewed as  a natural multi-dimensional counterpart of an interval ordering. However, this interrelation does not make the following theorem superfluous, because the gap between the interval dimension of a partial ordering $P$ and the dimension of~$P$ can be arbitrarily large, see~\cite{BogRabTro76}.}

\begin{thmrep}\applabel{thm:idimGraphs}
For each $d\in\mathbb{N}_{\geq 1}$, $\cG_{L_d^{\text{idim}}}$ is the class of  comparability graphs of posets of bounded interval dimension~$d$, where $L_{d}^{\text{idim}} = \{w \in \{0,1\}^{2d\text{-uni}}\mid  \exists x\in \langle0011\rangle\,\forall k\in [d]: \deletion{k}{k}{2d}(w) = x  \}$ is $2d$-uniform. \shortversion{Hence, $\cG_{L_d^{\text{idim}}}$  is factorial.}
\end{thmrep}
 
\begin{proof} 
Clearly, $L_{d}^{\text{idim}}$ is $2d$-uniform by construction.
\begin{enumerate}
\item Let $G = (V,E)$ be a comparability graph of posets of bounded interval dimension~$d$. By the multi-interval representation, there is a set of boxes $\{B_v\}_{v \in V}$ with $B_v = [\ell_1(v),r_1(v)] \times \dots \times [\ell_d(v),r_d(v)]$ for real numbers $\ell_j(v), r_j(v)$ for each $j \in [d]$ and $v \in V$. These boxes correspond to the vertices of $G$ in the sense that for all $u,v \in V$, $\{u,v\} \in E$ \iffl $r_j(u) < \ell_j(v)$ for all $j \in [d]$ or $r_j(u) > \ell_j(v)$ for all $j \in [d]$.
W.l.o.g., assume that all the $\ell_j(v)$ and $r_j(v)$ are distinct. Let $x_1 < \dots <x_{2d |V|}$ be the numbers $\{ \ell_j(v), r_j(v) \mid j \in [d], v \in V\}$ in linear order. We now define $w\in V^{2d|V|}$ by setting, for all $i \in [2d|v|]$, $w[i] \coloneqq v$ if there is a $j \in [d]$ such that $x_i \in \{ \ell_j(v), r_j(v)\}$. %Let $w \coloneqq \prod_{i \in [2d|v|]} w_i$. 
Note that for each $j \in [d]$ and $v \in V$, $\ell_j(v)=\position_v(2j-1,w)$ and $r_j(v)=\position_v(2j,w)$.
%correspond to the $(2j-1)$-th and the $2j$-th occurrence of $v$ in~$w$.
Therefore, for all $u,v\in V$, 
\begin{align*}
h_{u,v}(w) \in L_{d}^{\text{idim}} 
& \Leftrightarrow \left( \forall j\in [d]: \deletion{j}{j}{2d}(w) = 0011 \right) \vee \left( \forall j\in [d]: \deletion{j}{j}{2d}(w) = 1100 \right) \\
& \Leftrightarrow \left(\forall j\in [d]: r_j(u) < \ell_j(v)\right)  \vee \left(\forall j\in [d]: r_j(u) > \ell_j(v) \right)\\
& \Leftrightarrow \{u,v\} \in E.
\end{align*}
Hence, $G(L_{d}^{\text{idim}},w) = G$.
\item Let $G =(V,E)\in \mathcal{G}_{L_{d}^{\text{idim}}}$, i.e., there exists a word $w \in V^*$ describing~$G$, i.e., $G(L_{d}^{\text{idim}},w) = G$. 
By $2d$-uniformity, all $v \in \overline{V_{2d}(w)}$ are isolated vertices. 
For $v \in V_{2d}(w)$ and $j \in [d]$, define $I_{j}(v) \coloneqq [\ell_j(v),r_j(v)]$, where $\ell_j(v)$ is the $(2j-1)$-th and $r_j(v)$ the $2j$-th occurrence of $v$ in~$w$ and $B_v = [\ell_1(v),r_1(v)] \times \dots \times [\ell_d(v),r_d(v)]$. For each $v \in \overline{V_{2d}(w)}$, we choose a box $B_v$ that contains all other boxes. Clearly, $f$ is a $d$-dimensional box representation of~$G$. Proof details are similar to those leading to \myref{thm:lIntervalGraphs}.\qed 
\end{enumerate}
\end{proof}

\longversion{\begin{corollary}\label{cor:compBoundInt}
The class of  comparability graphs of posets of bounded interval dimension~$d$  is factorial for each fixed $d\in\mathbb{N}_{\geq 1}$.
\end{corollary}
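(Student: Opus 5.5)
The corollary follows from \myref{thm:idimGraphs} together with \myref{thm:counting-Lfinite} for the upper bound, and from a known factorial subclass for the lower bound.

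For the upper bound, observe that $L_d^{\text{idim}}$ is finite. It is contained in $\{0,1\}^{2d\text{-uni}}$, which has at most $\binom{4d}{2d}$ words. It is also $0$-$1$-symmetric: if $w\in L_d^{\text{idim}}$ witnesses $x\in\langle 0011\rangle$, then $\widetilde{w}$ witnesses $\widetilde{x}\in\langle 0011\rangle$, because the deletion operator $\deletion{k}{k}{2d}$ commutes with the complement morphism. Hence $\langle L_d^{\text{idim}}\rangle=L_d^{\text{idim}}$. \myref{thm:counting-Lfinite} then says that $\cG_{L_d^{\text{idim}}}$ has speed at most $2^{O(n\log n)}$. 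By \myref{thm:idimGraphs}, this class is exactly the class of comparability graphs of posets of interval dimension at most~$d$, so that class is at most factorial.

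For the lower bound, I would use a factorial subclass and argue by inclusion. Comparability graphs of posets of interval dimension~$1$ are exactly the co-interval graphs, as stated in the preliminaries. Interval graphs are factorial, so co-interval graphs are factorial too, since factoriality is preserved under complementation.

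It remains to check that this subclass sits inside the class for every $d$. A poset of interval dimension $1$ also has interval dimension at most $d$: repeat the same interval order $d$ times, which in the box representation means taking $I_{v,i}=I_{v,1}$ for all $i$. So co-interval graphs are contained in the class for each $d\ge 1$, and we get at least $n^{c_1 n}$ labeled graphs on $[n]$.

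The only point needing care is the meaning of ``bounded interval dimension~$d$''. If it means exactly $d$ rather than at most $d$, the upper bound is unaffected, because the class only shrinks. The lower bound would then need a factorial family of posets of interval dimension exactly $d$. I would handle this by taking the disjoint union of each co-interval comparability graph with a fixed poset of interval dimension $d$ on a constant number of vertices. This only adds a constant number of vertices, which preserves the $n^{\Theta(n)}$ count. I expect this to be the main, though minor, obstacle; under the natural ``at most $d$'' reading the argument is immediate.
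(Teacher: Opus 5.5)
Your proposal is correct and follows essentially the paper's route. The upper bound comes from the finite ($2d$-uniform) representation in \myref{thm:idimGraphs} combined with \myref{thm:counting-Lfinite}, and the lower bound comes from the inclusion of the factorial co-interval graphs (interval dimension~$1$). Your explicit check that $L_d^{\text{idim}}$ is $0$-$1$-symmetric fills in a detail the paper leaves implicit, and your treatment of an ``exactly~$d$'' reading is not needed, since the paper means interval dimension at most~$d$.
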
}

This speed question has been known for co-interval graphs, i.e., for $d=1$, but has been unknown for larger bounded interval dimensions~$d>1$. 
\longversion{This is explicitly stated for $d=2$ in \url{https://graphclasses.org/classes/speed.html}.}
%
%\todohf{The following might be an alternative way to trapezoid graphs.}
\longversion{According to}\shortversion{By}~\cite{Flo95}, cocomparability graphs of bounded interval dimension~$d$ are exactly the $d$-trapezoid graphs. We \longversion{can offer}\shortversion{have} an alternative proof for it\longversion{ now}: combine \myref{thm:idimGraphs} and \myref{thm:d-trapezoid-representation} with Proposition 4.3 and Theorem 4.14 of~\cite{FenFFMS2024}, as $L_{d}^{\text{idim}}=\{0,1\}^{2d\text{-uni}}\setminus L^{\text{trap}}_d$.

\section{Conclusions\longversion{ and Further Remarks}}

\longversion{In this paper, w}\shortversion{W}e mainly looked at the formalism of generalized word representability of graphs as  introduced in~\cite{FenFFMS2024}  from the perspective of settling several questions concerning the speed of graph classes qualified as \emph{unknown} or that are not even mentioned in graphclasses.org. 
In total, our approach resolves the factorial-versus-superfactorial question for about thirty graph classes.\begin{toappendix}
We only give few examples in the following\longversion{ (without presenting precise definitions of these classes)}: \emph{balanced 2-interval graphs} (they have an intersection model whose objects consist of two equal length intervals on a real line), or co-bipartite graphs of boxicity~2, or bipartite co-trapezoid graphs; these classes were known to have at least factorial speed by including graph classes with known factorial speed. For \emph{2-subdivision graphs}, graphclasses.org does not provide an entry concerning their speed. Clearly, the complement class of co-2-subdivision graphs is of the same speed. By \cite[Theorem~3]{FraGonOch2015},  co-2-subdivision graphs are 3-track graphs and hence they have at most factorial speed by \autoref{cor:l-track}. For \emph{$k$-leaf power graphs} (important in the context of phylogenetic trees, see~\cite{NisRagThi2002}), their factorial speed is only noted up to $k=4$ in graphclasses.org. However, by a result from~\cite{ChaFraMat2011}, $k$-leaf power graphs have boxicity at most $k-1$, so that our results prove that for each fixed~$k$, the class of $k$-leaf power graphs has (at most) factorial speed. Similarly, for graphs that allow for an embedding in the plane with at most $k$ crossings, it was shown in \cite{AdiChaMat2014} that then their boxicity is upper-bounded by a function in~$k$, so that we can deduce at most factorial speed for all graphs that allow a plane drawing  with at most $k$ crossings\longversion{, which somewhat confirms \myref{rem:planar}}.\longversion{\footnote{Notice that the argument in \myref{rem:planar} is based on relations of planar graphs to the poset dimension, while \cite{AdiChaMat2014} discusses relations to boxicity. In fact, there are several inter-relations between boxicity and poset dimension, see \cite{AdiBhoCha2011,FraGon2017}, but none of these seem to translate concrete factorial speed bounds between both notions.}}
\end{toappendix}
In passing, we also obtained quite a number of representations of graph classes\longversion{, in particular via $k$-uniform languages}. For instance, \longversion{a $d$-uniform language can characterize the $d$-dimensional comparability graphs, or }a $2b$-uniform language can characterize the graphs of boxicity~$b$. It is unclear if this is optimal. To formulate one concrete question: Is it possible to prove that no $(2b-1)$-uniform language can characterize the graphs of boxicity~$b$? 

\longversion{More generally speaking, the parameter~$k$ for languages~$L$ that are $k$-uniform might be an interesting parameter for the complexity of $\cG_L$ itself.\footnote{This corresponds to the parameter \emph{representation number} known from classical word-representable graphs, but there, mostly combinatorial aspects concerning single graphs have been considered, see \cite{GleKitPya2018,HalKitPya2011,Kit2013,Kit2017,KitPya2008}. Of course, one could also consider this parameter in the sense of a complexity measure for graphs, for each (infinite) 0-1-symmetric language~$L$ separately. This would be yet another direction of research.} For instance, for all 2-uniform 0-1-symmetric binary languages, we can observe that the recognition problem (given~$G$, is $G\in \cG_L$?) is solvable in polynomial time. Now, what is the smallest~$k$ such that the recognition problem for a graph class described by some $k$-uniform language is \NP-hard? For $\ell>1$, the question if a given graph is an $\ell$-interval graph is \NP-hard, see~\cite{WesShm84}. Hence, there is a 4-uniform language~$L$, namely,  $L=L_2^{\text{int}}$, such that the recognition problem for $\cG_L$ is \NP-hard.
According to \cite{GyaWes95,Jia2013a}, also the recognition problem for $\cG_{L_2^{\text{trk}}}$ is \NP-hard.
As another example, we could take $L=L_2^{\text{box}}$, based on~\cite{Kra94}. There are even 3-uniform languages with an \NP-hard recognition problem. According to \cite{Yan82}, $L_3^{\text{cmp}}$ is an example for this situation. We leave it as an open problem to understand the structure of $3$- or $4$-uniform languages~$L$ such that  the recognition problem for $\cG_L$ is \NP-hard. For recent complexity results on $k$-thin graphs, we refer to~\cite{Shi2025}.}

\longversion{Returning to the question of resolving the (unknown) speed question for graph classes, we should mention that there are still quite a number of remaining \emph{unknown} cases in graphclasses.org, and there are even cases when this nice database does not show any entry at all.
To mention some (related) examples, consider (co-)threshold tolerance (no entry), tolerance and leaf power graphs, or also strongly chordal or strongly orderable graphs. For strongly chordal graphs, we know that, similarly to chordal bipartite graphs, they contain graphs of arbitrarily large boxicity, see \cite{ChaFraMat2011,ChaFraMat2011a}, but while chordal bipartite graphs have superfactorial speed, this is open for strongly chordal graphs. However, this need not mean that these graph classes are superfactorial, as also circular-arc graphs contain graphs of arbitrarily large boxicity (see \cite{AdiBabCha2014}), but we have shown that their speed is only factorial.
}

We formulated our results on the speed of labeled graphs. In the literature, the ``unlabeled speed'' of graph classes is also considered, referring to the number of non-isomorphic graphs of order~$n$; see, e.g., \cite{BalBSS2009}. As the number of labeled graphs trivially upper-bounds the number of unlabeled ones and as the unlabeled speed of interval, of circle and of permutation graphs is known to be factorial, we\longversion{ immediately} obtain factorial unlabeled speed for most graph classes\longversion{ that we have} considered here.

\bibliographystyle{splncs04}
\bibliography{ab,hen,wrg-addendum}

\inappendixtrue
\appendix

\end{document}

%% file: shortcuts.tex
\usepackage[utf8]{inputenc}

\usepackage[]{todonotes}
\usepackage{hyperref} 
\usepackage{mathtools} 
\usepackage{amsmath,amsfonts,amssymb} 

\usepackage[capitalize]{cleveref}
\usepackage{complexity}
\usepackage{xspace}

\usetikzlibrary{positioning, shapes,patterns,shadows.blur, arrows,decorations,arrows,automata,shadows,patterns,chains,graphs,calc,intersections,matrix,fit,shapes,chains,decorations.pathreplacing,chains,fit,shapes}

\DeclareMathOperator{\classical}{cl}
\DeclareSymbolFont{Shuffle}{U}{shuffle}{m}{n}
\DeclareFontFamily{U}{shuffle}{}
\DeclareFontShape{U}{shuffle}{m}{n}{%
  <-8>shuffle7%
  <8->shuffle10%
}{}
\DeclareMathSymbol\shuffle{\mathbin}{Shuffle}{"001}
\DeclareMathSymbol\cshuffle{\mathbin}{Shuffle}{"002}

\newcommand{\emptyword}{\lambda}

\newcommand{\nth}[1]{\ensuremath#1^{\text{\tiny th}}}
\newcommand{\nst}[1]{\ensuremath#1^{\text{\tiny st}}}

\def\ta{\mathtt{a}}
\def\tb{\mathtt{b}}
\def\tc{\mathtt{c}}
\def\td{\mathtt{d}}
\def\te{\mathtt{e}}

\DeclareMathOperator{\alphabet}{alph}

\newcommand{\N}{\mathbb{N}}

\newcommand{\cC}{\mathcal{C}}
\newcommand{\cD}{\mathcal{D}}

\newcommand{\cG}{\mathcal{G}}

\newcommand{\cO}{\mathcal{O}}

\newcommand{\cT}{\mathcal{T}}
\newcommand{\cU}{\mathcal{U}}

\newcommand{\iffl}{\shortversion{iff}\longversion{if and only if}\xspace}

\longversion{\usepackage[appendix=inline]{apxproof}}
\shortversion{\usepackage{apxproof}}%
\shortversion{}

\newcommand{\todohf}[1]{\todo[color=red!80,inline]{#1}}